\documentclass[11pt]{article}

\usepackage[T1]{fontenc}
\usepackage{libertinus}
\usepackage{libertinust1math}
\usepackage[narrow,varqu,varl,scaled=0.95]{zi4}

\usepackage{microtype}
\usepackage[a4paper,margin=1.08in]{geometry}
\usepackage{amssymb,amsthm,mathtools}
\DeclareSymbolFont{wideaccents}{OMX}{lmex}{m}{n}
\DeclareMathAccent{\widehat}{\mathord}{wideaccents}{"62}
\DeclareMathAccent{\widetilde}{\mathord}{wideaccents}{"65}
\usepackage{enumitem}
\usepackage{xcolor}
\usepackage{authblk}
\usepackage[giveninits=true,maxbibnames=99,style=alphabetic,maxalphanames=4,minalphanames=3,isbn=false,maxcitenames=99]{biblatex}
\usepackage[colorlinks=true,linkcolor=blue!55!black,citecolor=blue!55!black,urlcolor=blue!55!black,hypertexnames=false]{hyperref}
\usepackage[nameinlink,capitalise]{cleveref}

\AtEveryBibitem{%
  \clearlist{language}%
  \clearlist{location}
}
\AtBeginBibliography{%
  \setlength{\emergencystretch}{2em}%
}
\appto\biburlsetup{\Urlmuskip=0mu plus 1mu\relax}

\newtheorem{theorem}{Theorem}[section]
\newtheorem{proposition}[theorem]{Proposition}
\newtheorem{lemma}[theorem]{Lemma}

\newtheorem{corollary}[theorem]{Corollary}
\theoremstyle{definition}
\newtheorem{definition}[theorem]{Definition}
\theoremstyle{remark}

\allowdisplaybreaks
\setlist[itemize]{leftmargin=2em,itemsep=0.25em,topsep=0.4em}
\setlist[enumerate]{leftmargin=2.2em,itemsep=0.35em,topsep=0.4em}

\newcommand{\NP}{\textup{NP}}
\newcommand{\PH}{\textup{PH}}
\newcommand{\QMA}{\textup{QMA}}
\newcommand{\QMAt}{\QMA(2)}
\newcommand{\PQSigma}{\textup{PureQ}\Sigma}
\newcommand{\QSigma}{\textup{Q}\Sigma}
\newcommand{\NEXP}{\textup{NEXP}}

\newcommand{\R}{\mathbb R}
\newcommand{\C}{\mathbb C}
\newcommand{\E}{\mathbb E}
\renewcommand{\L}{\mathcal L}
\newcommand{\complex}{\mathbb C}

\newcommand{\be}{\begin{equation}}
\newcommand{\ee}{\end{equation}}
\renewcommand{\epsilon}{\varepsilon}
\DeclareMathOperator{\Tr}{Tr}
\DeclareMathOperator{\poly}{poly}
\DeclareMathOperator{\polylog}{polylog}
\DeclarePairedDelimiter{\ket}{\lvert}{\rangle}
\DeclarePairedDelimiter{\abs}{\lvert}{\rvert}
\DeclarePairedDelimiter{\norm}{\lVert}{\rVert}
\newcommand{\proj}[1]{\mbox{$|#1\rangle \!\langle #1 |$}}

\newcommand{\Dens}{\mathsf D}
\newcommand{\Sep}{\operatorname{Sep}}
\newcommand{\disttr}{\operatorname{dist}_{\mathrm{tr}}}
\newcommand{\rankpsd}{\operatorname{rank}_{\mathrm{psd}}}
\newcommand{\Herm}{\operatorname{Herm}}
\newcommand{\one}{\mathbf 1}
\newcommand{\inner}[2]{\left\langle#1,#2\right\rangle}
\newcommand{\Bin}{\operatorname{Bin}}
\newcommand{\leanref}[3]{\href{https://github.com/DorianRudolph/disentangler-lower-bounds-lean/blob/main/#1\#L#2}{\texttt{#3}}}

\title{Semidefinite extension complexity of the separable set, with applications to approximate disentanglers}
\author[1]{Sevag Gharibian}
\author[2]{Carsten Hecht}
\author[1]{Dorian Rudolph}
\affil[1]{\small Paderborn University and PhoQS, Warburger Stra{\ss}e 100, 33098 Paderborn, Germany}
\affil[2]{Paderborn University, Warburger Stra{\ss}e 100, 33098 Paderborn, Germany}
\date{}

\begin{document}
\maketitle

\begin{abstract}
Let $\Sep(d:d)$ be the set of separable states on $\C^d\otimes\C^d$, and for a two-outcome measurement operator $Q$ let $h_{\Sep}(Q)=\max_{\sigma\in\Sep(d:d)}\Tr(Q\sigma)$ be its maximum acceptance probability over separable states. This is the optimization problem underlying $\QMAt$, the class of quantum Merlin-Arthur protocols with two unentangled proofs. We prove lower bounds on the size of semidefinite programs (SDPs) that approximate $h_{\Sep}$ in the SDP extended-formulation model of Harrow, Natarajan, and Wu (HNW). For each dimension $d$, all measurement operators $Q$ share one SDP feasible region, and each product state has a feasible representative that reproduces its acceptance probability for every $Q$; only the objective varies with $Q$. For every $0<\theta<2/7$, there are constants $c_\theta,a_\theta>0$ such that every such SDP approximating $h_{\Sep}$ to additive error $a\le a_\theta$ has size at least $d^{\,c_\theta\min\{a^{-1/3},d^\theta\}}$ for sufficiently large $d$. The same bound holds for the size of any SDP-representable convex set of states that contains $\Sep(d:d)$ and lies within trace distance $a$ of it, giving a quantitative counterpart to Fawzi's theorem that the separable set has no exact semidefinite representation. HNW proved a lower bound of $d^{\log d/\polylog\log d}$ for additive error $O(1/d^2)$; our bound applies also at sufficiently small constant error and becomes $d^{\,\Omega(d^\theta)}$ when $a\le d^{-3\theta}$. Like HNW, we use the Lee--Raghavendra--Steurer (LRS) pseudo-density lower bound. Our stronger bounds come from the quantitative part of the LRS theorem, using an explicit pseudo-density and realizing the associated matrix with high PSD rank as expectations of bounded block-positive test operators on product states.

As an application, we obtain lower bounds on the input dimension $D$ of approximate disentanglers, the channels in Watrous' disentangler conjecture. An $(\varepsilon,\delta)$-approximate disentangler is a quantum channel whose every output is within trace distance $\varepsilon$ of the separable states and whose outputs cover every separable state to within trace distance $\delta$. Such a channel yields an HNW formulation of size $O(D+d^2)$, so with $a=\varepsilon+\delta$ we obtain $D\ge d^{\,c_\theta\min\{a^{-1/3},d^\theta\}}$. At error $a=O(1/d^2)$, this improves exponentially over HNW. It also rules out the disentangler approach to proving $\QMA=\QMAt$ for superpolynomially small promise gaps $n^{-\omega(1)}$, where $n=\log_2d$. Independent and concurrent work of Bostanci et al.\ proves, via a quantum oracle separation, that for fixed $\varepsilon,\delta$ with $a<1$, one has $\log D=\Omega_{\varepsilon,\delta}(d^{\gamma(a)})$ for some $\gamma(a)>0$, with $\gamma(a)=1/2$ for $a<2/3$. This supersedes our disentangler bound; their reduction does not apply to SDP formulations. Our main results are supported by Lean proofs.
\end{abstract}

\section{Introduction}
\label{sec:introduction}

A central open question in quantum complexity theory is the power of unentangled proof systems, formally the class $\QMAt$~\cite{kobayashiQuantumMerlinArthurProof2003,gharibianGuestColumn72024,jeronimoQMA2UniverseComplexityEntanglement2026}. Here, $\QMAt$ is defined as Quantum Merlin-Arthur (QMA), except that the proof $\ket{\psi}_{AB}$ is guaranteed to be a tensor product across a prespecified cut $A$ versus $B$ of the qubits, i.e. $\ket{\psi}_{AB}=\ket{\phi}_A\otimes\ket{\varphi}_B$ for arbitrary states $\ket{\phi}_A\otimes\ket{\varphi}_B\in\complex^d\otimes \complex^d$, and where $d=2^n$ for $n$ the number of qubits. The best known upper~\cite{gharibianQuantumGeneralizationsPolynomial2022,grewalPureQuantumPolynomial2026} and lower bounds for $\QMAt$ are
\be
  \QMA\subseteq \QMAt\subseteq\PQSigma_2\subseteq\QSigma_3\subseteq\NEXP.
\ee
Here, $\PQSigma_2$ is a quantum generalization of the second level of the Polynomial Hierarchy (\PH) with \emph{pure} quantum proofs\footnote{Thus, \cite{grewalPureQuantumPolynomial2026} implies swapping the existential quantifier on $\ket{\varphi}_B$ in $\QMAt$ to a universal quantifier cannot decrease the power of the class, assuming the proof is guaranteed to be pure.}, and $\QSigma_3$ is analogous except with \emph{mixed} proofs. Thus, whether $\QMA\overset{?}{=}\QMAt$ remains a difficult open question.

\vspace{-2mm}
\paragraph{Disentanglers.} A natural approach to proving $\QMA=\QMAt$ is for a QMA verifier to wish for a quantum channel $\Lambda:\L(\complex^D)\rightarrow\L(\complex^d\otimes\complex^d)$, a \emph{disentangler}, with two properties\footnote{Formally, one requires a third property as well, that $\Lambda$ be efficiently implementable. However, as with previous works, our focus is on the stronger aim of \emph{information-theoretic} lower bounds on the input space dimension to $\Lambda$, irrespective of $\Lambda$'s actual implementation.}: (1) Given any input $\rho$, $\Lambda(\rho)$ is $\epsilon$-close to a separable state (in trace distance) across the $A$ versus $B$ cut, and (2) for any desired separable state $\sigma_{AB}$, there exists input $\rho$ such that $\Lambda(\rho)$ is $\delta$-close to $\sigma$. Given $\Lambda$, a QMA verifier can simulate $\QMAt$ by first applying $\Lambda$ to any proof $\rho$ sent by a potentially cheating prover. Then, property (1) guarantees soundness by preventing the prover from entangling $\rho$, and property (2) guarantees completeness, i.e. an honest prover can effectively send any desired separable witness $\sigma_{AB}$. More precisely, a verifier with completeness $c$ and soundness $s$ is converted into one with completeness at least $c-\delta$ and soundness at most $s+\epsilon$, so its gap is at least $(c-s)-(\epsilon+\delta)=(c-s)-a$. For brevity, we henceforth refer to this as the \emph{disentangler strategy}. Watrous conjectured (Conjecture 5.2 of~\cite{aaronsonPowerUnentanglement2009}) that for all constants $\epsilon,\delta<1$, any disentangler requires
\be\label{eqn:watrous}
  D=2^{\Omega(d)}
\ee
(i.e. the input space requires $\Omega(d)$ qubits, for $d=2^n$), implying the disentangler strategy fails. Very recently, Jeronimo, Wu, and Xu~\cite{jeronimoOptimalQuantumFinetti2026} showed that for any constant $0<\epsilon<1$ and $\delta=0$, there exists a disentangler with $\log D \in O( \sqrt{d}\log d)$, thus refuting Watrous' conjecture as originally stated. Note, however, that this does not say anything about the disentangler strategy, which we now expound upon.

Namely, while \Cref{eqn:watrous} as stated was well-motivated from an information-theoretic perspective~\cite{aaronsonPowerUnentanglement2009}, to rule out the disentangler strategy, it suffices to achieve something weaker. Writing $d=2^n$, define an \emph{obstruction} for total error $a:=\epsilon+\delta$ as any lower bound forcing
\begin{equation}\label{eqn:obstruction}
  \log_2D=n^{\omega(1)}.
\end{equation}
In words, an obstruction forces a superpolynomial number of input qubits,
which rules out using a disentangler with polynomially many input qubits to
show that $\QMA$ can simulate $\QMAt$. In this terminology, fully ruling out
the disentangler strategy for ordinary $\QMA$ requires an obstruction whenever
$1-a\ge1/\poly(n)$.\footnote{Harrow and Montanaro
showed~\cite{harrowTestingProductStates2013} that $\QMAt$ can be amplified to
completeness at least $1-2^{-p(n)}$ and soundness at most $2^{-p(n)}$ for a
polynomial $p$. Applying a disentangler with total error $a$ leaves a
single-proof gap of at least $1-a-2^{1-p(n)}$. Thus it gives an ordinary-$\QMA$
verifier whenever $1-a\ge1/\poly(n)$; if $1-a$ is only inverse exponential,
the resulting verifier may have only an inverse-exponential gap.}

\vspace{-2mm}
\paragraph{Previous work.} In the zero-error case, $a=0$, Aaronson et
al.~\cite{aaronsonPowerUnentanglement2009} confirmed Watrous' conjecture\footnote{More accurately, \cite{aaronsonPowerUnentanglement2009} shows for $a=0$ that $D=\infty$.}, obtaining an obstruction for $a=0$. Thus, the open frontier became \emph{quantitative} lower bounds on $D$ as a function of $a$. Next, Harrow, Natarajan, and Wu~\cite{harrowLimitationsSemidefinitePrograms2019} showed that for $a<1/\poly(d)$ (i.e. exponentially small error in the number of output qubits, $n$), one has $D\geq d^{\log(d)/\polylog(\log d)}$. Note this does not constitute an obstruction for $a<1/\poly(d)$. Finally, Akibue, Kato, and Tani~\cite{akibueHardnessConversionEntangled2026} showed Watrous' conjecture for the subset of \emph{strong} disentanglers\footnote{A \emph{strong} disentangler supplements our current definition with an environment $E$, so that for any $\rho$ on $ABE$, $\Lambda_{AB}\otimes I_{E}(\rho)$ is close to separable across the $A$ versus $BE$ cut. This is closely related to an approximate entanglement-breaking requirement after discarding one output subsystem. For the disentangler strategy, however, it suffices to have an entanglement-annihilating channel~\cite{moravcikovaEntanglementannihilatingEntanglementbreakingChannels2010}, i.e. one which breaks the entanglement on $AB$ after tracing out $E$.} when $\epsilon+\sqrt{\delta}<1$. This yields a partial obstruction against a \emph{subclass} of possible disentanglers. Jeronimo and Wu constructed dimension-independent disentangler-like channels from unentangled inputs~\cite{jeronimoDimensionIndependentDisentanglers2024}; their soundness guarantee assumes that the input is unentangled and therefore does not meet the present definition, which quantifies over arbitrary input states. In sum, prior to the present work and the concurrent work of Bostanci et al.~\cite{bostanciQuantumOracleSeparation2026} discussed below, no unconditional information-theoretic obstruction was known for unrestricted approximate disentanglers at $a>0$.

\vspace{-2mm}
\paragraph{Our results.}
Our main technical result concerns the maximum measurement value over separable states,
\[
 h_{\Sep(d:d)}(Q)=\max_{\sigma\in\Sep(\C^d:\C^d)}\Tr(Q\sigma),
\]
which is the optimization problem underlying $\QMAt$: the acceptance
probability of a $\QMAt$ verifier with acceptance operator $Q$, maximized
over unentangled proofs, equals $h_{\Sep}(Q)$.  It is also the target of the
symmetric-extension hierarchy of Doherty, Parrilo, and
Spedalieri~\cite{dohertyCompleteFamilySeparability2004} and of sum-of-squares
algorithms for separability
problems~\cite{jeronimoArgmaxPrincipleSum2026,andersonSimpleAlgorithmBest2026}.
We ask how large a semidefinite program must be to approximate $h_{\Sep}$,
using the SDP extended-formulation framework of Harrow, Natarajan, and Wu
(HNW)~\cite[Defs.~3.11, 5.1, and~5.2]{harrowLimitationsSemidefinitePrograms2019}.
HNW define such a formulation through an \emph{embedded reduction} from
$h_{\Sep}$ to an SDP.  Concretely, for each dimension $d$, the reduction
assigns to each measurement operator $Q$ an affine objective $\Phi_Q$ on a
common feasible region $\mathcal P$ of $r\times r$ PSD matrices.  The
embedding assigns to each pure product state $\tau$ a feasible point
$\iota(\tau)$ satisfying
\[
 \Phi_Q(\iota(\tau))=\Tr(Q\tau)\qquad\text{for every }Q.
\]
Thus the same point represents $\tau$ for all measurements.  For thresholds
$0<s<c<1$, the approximation requirement is
\[
 h_{\Sep}(Q)\le s\quad\Longrightarrow\quad
 \sup_{Y\in\mathcal P}\Phi_Q(Y)\le c.
\]
The embedding ensures that the SDP optimum is at least $h_{\Sep}(Q)$.
We give the full definition, including the convention on bounded
objectives, in \cref{def:hnw-hsep-ef}.  Let
\[
 \varrho=\frac{c-s}{\min\{s,1-s\}}.
\]
\Cref{thm:hsep-endpoint} shows that every such formulation of size
$r$ satisfies
\[
 \log r=\Omega\bigl(M_\varrho\log(2+R_\varrho)\bigr),
\]
where
\[
 M_\varrho=\min\left\{\varrho^{-1/3},
 \left(\frac{d-1}{(\log(d-1))^3}\right)^{2/7}\right\},
 \qquad
 R_\varrho=\frac{d-1}{M_\varrho^{7/2}(\log(d-1))^3}.
\]
In particular, for every fixed $0<\theta<2/7$ there are constants
$c_\theta,\eta_\theta>0$ such that, for $\varrho\le\eta_\theta$ and
sufficiently large $d$,
\be\label{eqn:intro-sdp-power}
 r\ge d^{\,c_\theta\min\{\varrho^{-1/3},d^\theta\}};
\ee
see \eqref{eq:hsep-extension-complexity}.  The same bound holds with
$\varrho=2a$ for SDPs that approximate $h_{\Sep}$ to uniform additive error
$a$ (\cref{cor:hsep-uniform-additive}).  This has a direct geometric
interpretation.  A \emph{spectrahedral shadow} of size $r$ is the image of
an SDP feasible region of $r\times r$ PSD matrices under an affine map.  Any
spectrahedral shadow that contains $\Sep(d:d)$ and lies within trace
distance $a$ of it gives a uniform additive approximation to $h_{\Sep}$, so
its size is at least $d^{\,c_\theta\min\{(2a)^{-1/3},d^\theta\}}$
(\cref{cor:sep-spectrahedral-shadow}).  This is a quantitative counterpart
to Fawzi's theorem that the separable set has no exact semidefinite
representation~\cite{fawziSetSeparableStates2021}.  The product states used
in the proof are symmetric powers $\proj\psi^{\otimes2}$, and the operator
used before adding a multiple of $I-F$ is nonnegative on every such power,
so the same lower bounds hold for the two-particle bosonic separable body
(\cref{cor:bosonic-hsep}).

\paragraph{Comparison with HNW.}
HNW proved the first lower bound in this
model~\cite[Thm.~5.6]{harrowLimitationsSemidefinitePrograms2019}: at
thresholds $(1-\epsilon(d),1-\delta(d))$ with
$0<\epsilon(d)<\delta(d)=O(1/d^2)$, they obtain size at least
$d^{\log d/\polylog\log d}$.  Both thresholds are within $O(1/d^2)$ of
one.  Our theorem is parameterized by the relative gap $\varrho$ and also
allows constant thresholds such as $(1/2+a,1/2)$ for sufficiently small
constant $a>0$.  The endpoint $c=1$ is uninformative in our bounded-objective
convention: optimizing over all density matrices already gives a
$(1,s)$-approximate formulation for every $s<1$.

For uniform additive approximations, the comparison is direct.  If
$0<a<1/2$, an SDP within additive error $a$ of $h_{\Sep}$ is a
$(1-a,1-2a)$-approximate formulation.  HNW's bound therefore applies at
sufficiently small inverse-square error in $d$.
\Cref{cor:hsep-uniform-additive} applies to every $a\le a_\theta$.
For fixed positive $a$, our bound is polynomial in $d$, with an exponent
that grows as $a^{-1/3}$ as $a$ decreases.  It is superpolynomial when
$a=o(1)$ and gives size $d^{\,c_\theta d^\theta}$ once
$a\le d^{-3\theta}$.  In particular, at the inverse-square accuracy
considered by HNW, it improves their quasipolynomial bound to one that is
exponential in a power of $d$.

\paragraph{Application to approximate disentanglers.}
An $(\epsilon,\delta)$-approximate disentangler with input dimension $D$
yields a formulation of the above kind with relative gap $\varrho=2a$ and
size $O(D+d^2)$ (\cref{sec:disentangler-reduction}), so the SDP lower bound
transfers to $D$.  The two errors enter this reduction only through their
sum.  When $a=0$, we interpret $(2a)^{-1/3}=+\infty$.

\begin{theorem}[Input-dimension lower bound for approximate disentanglers]
\label{thm:dimension}
There are universal constants $c,a_0>0$ and $d_0\in\mathbb N$ such that the
following holds.  Let
\[
 \Lambda:\Dens(\C^D)\longrightarrow
 \Dens(\C^{d_A}\otimes\C^{d_B})
\]
be an $(\varepsilon,\delta)$-approximate disentangler, put
$a=\varepsilon+\delta$ and $d=\min\{d_A,d_B\}$, and assume
$a\le a_0$ and $d\ge d_0$.  Put
\begin{equation}
 N=d-1,
 \qquad x=\log N,
 \qquad
 M_a=\min\left\{(2a)^{-1/3},\left(\frac{N}{x^3}\right)^{2/7}\right\},
 \qquad
 R_a=\frac{N}{M_a^{7/2}x^3}.
 \label{eq:disentangler-endpoint-scales}
\end{equation}
Then $R_a\ge1$ and
\begin{equation}
 \log D\ge cM_a\log(2+R_a).
 \label{eq:dimension-endpoint}
\end{equation}
\end{theorem}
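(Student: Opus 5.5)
The plan is to derive \cref{thm:dimension} as a corollary of the main SDP lower bound, \cref{thm:hsep-endpoint}, via the reduction of \cref{sec:disentangler-reduction}. There are three steps.

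\textbf{Step 1: reduce to the square case $d\times d$.} Given $\Lambda$ with output on $\C^{d_A}\otimes\C^{d_B}$ and $d=\min\{d_A,d_B\}$, embed $\C^d\hookrightarrow\C^{d_A}$ and $\C^d\hookrightarrow\C^{d_B}$ isometrically. Any $Q$ on $\C^d\otimes\C^d$ extends by zero to an operator $\tilde Q$ on the larger space. Since separable states restrict (after compressing by the projector onto $\C^d\otimes\C^d$) to subnormalized separable states, we get $h_{\Sep(d_A:d_B)}(\tilde Q)=h_{\Sep(d:d)}(Q)$. So it suffices to work with $h_{\Sep(d:d)}$ and objectives $\tilde Q$.

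\textbf{Step 2: build an HNW formulation from $\Lambda$.} Take as feasible region the density matrices $\rho\in\Dens(\C^D)$, a spectrahedron of size $D$. To write the objective as an affine function of the variable, pad with the $O(d^2)$ entries needed to store $\Lambda$'s Choi data, giving total size $O(D+d^2)$. Set $\Phi_Q(\rho)=\Tr(\tilde Q\Lambda(\rho))$. For the embedding, map a pure product state $\tau$ to an input $\rho_\tau$ with $\|\Lambda(\rho_\tau)-\tau\|_{\mathrm{tr}}\le\delta$.

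The HNW model requires exact reproduction $\Phi_Q(\iota(\tau))=\Tr(Q\tau)$ for every $Q$. I would handle this as the paper's reduction does: enlarge the region to the convex hull of $\{\Lambda(\rho)\}$ and the separable states, i.e.\ a direct sum with a block carrying a product state. Alternatively, I would shift thresholds so that completeness loses $\delta$ and soundness gains $\varepsilon$. Either way, whenever $h_{\Sep}(Q)\le s$, every output $\Lambda(\rho)$ lies within $\varepsilon$ of $\Sep$, so $\sup\Phi_Q\le s+\varepsilon+\delta$ after accounting for the $\delta$ slack. This yields a $(s+2a,s)$-type formulation. Choosing $s=1/2$ gives relative gap $\varrho=2a$ (absorbing constants), matching the claim stated in the paper that the relative gap is $\varrho=2a$ and the size is $O(D+d^2)$.

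\textbf{Step 3: apply \cref{thm:hsep-endpoint}.} This gives $\log(C(D+d^2))\ge c'M_{2a}\log(2+R)$, with $M_\varrho$ exactly the $M_a$ of \eqref{eq:disentangler-endpoint-scales}. It remains to check $R_a\ge1$: since $M_a\le (N/x^3)^{2/7}$, we have $M_a^{7/2}\le N/x^3$, hence $R_a\ge1$.

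The additive $d^2$ term must also be removed. If $D\ge d^2$, then $\log(D+d^2)\le\log D+1$. If instead $D<d^2$, the lower bound gives $2\log d\gtrsim c'M_a\log(2+R_a)$. For $a\le a_0$ small and $d\ge d_0$ we have $M_a$ large, so the right side exceeds $2\log d$ by a suitable constant: when $M_a=(N/x^3)^{2/7}$ it is polynomial in $d$, and otherwise $M_a\log R_a\ge M_a\cdot\Omega(\log d)$ whenever $R_a\ge d^{\Omega(1)}$. Making this last comparison rigorous is the main obstacle.

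To resolve it, I would use the dichotomy on $R_a$. If $R_a\ge N^{1/2}$, the bound is $\gtrsim M_a\log N$, which exceeds $4\log d$ once $M_a\ge$ some constant, guaranteed by $a_0$ small. If $R_a<N^{1/2}$, then $M_a^{7/2}>N^{1/2}/x^3$, so $M_a\ge N^{1/7-o(1)}$, which beats $\log d$ outright. In both cases $D<d^2$ is contradicted, and the constants are absorbed into $c$.
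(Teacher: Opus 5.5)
\paragraph{Verdict.} Your Step~3 is sound and is essentially the paper's argument; this covers the check $R_a\ge1$ and the dichotomy that absorbs the additive $d^2$ overhead. The genuine gap is Step~2, which is where the reduction actually happens.

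\paragraph{The gap in Step~2.} \cref{thm:hsep-endpoint} applies only to formulations with the \emph{exact}, $Q$-independent embedding \eqref{eq:hnw-embedding}. Neither of your two options supplies one.
\begin{enumerate}
 \item Adding ``a block carrying a product state'' needs a small spectrahedral shadow that contains all product states and stays close to $\Sep$. That is exactly the object being lower-bounded. If the block is just an unconstrained density matrix, soundness fails: the optimum can reach $\lambda_{\max}(Q)$. For the projector onto a maximally entangled state this is $1$, although $h_{\Sep}$ of that projector is $1/d$.
 \item Shifting thresholds does not help either. \cref{def:hnw-hsep-ef} has no completeness threshold to relax. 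With only $\abs{\Phi_Q(\iota(\tau))-\Tr(Q\tau)}\le\delta$, the slack matrix in \eqref{eq:hsep-slack-identity} becomes an entrywise perturbation of $M_{N,m,b}$. \cref{prop:explicit-rank,prop:amplified-rank} do not apply to such a perturbation as stated.
\end{enumerate}

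\paragraph{The missing construction.} The idea you need is the paper's trace-norm-ball lift.
\begin{enumerate}
 \item Add PSD blocks $X,E_+,E_-$ on the output space and a slack $t\ge0$.
 \item Impose $\Tr\rho=\Tr X=1$, $X=\Lambda(\rho)+E_+-E_-$, and $\Tr(E_++E_-)+t=2\delta$.
 \item Use the objective $\Tr(QX)$.
\end{enumerate}
Embed $\tau$ as $X=\tau$, with $\rho_\tau$ from the covering condition and $E_{\tau,\pm}$ the Jordan parts of $\tau-\Lambda(\rho_\tau)$. This embedding is exact and objective-independent by \eqref{eq:trace-zero-jordan}. Conversely, every feasible $X$ is within $\delta$ of $\Lambda(\rho)$, hence within $a$ of $\Sep$, so $\operatorname{SDP}(Q)\le h_{\Sep}(Q)+a$. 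The $O(d^2)$ term in the size comes from these auxiliary blocks; after \cref{lem:realification} the size is $2(D+3d^2+1)$. It does not come from storing Choi data, since $\rho\mapsto\Tr(\Lambda^*(Q)\rho)$ is already linear.

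\paragraph{Two smaller points.} First, consider your Step~1 once an output-space block such as $X$ is present. Extending $Q$ by zero keeps $\Lambda$'s output in $\C^{d_A}\otimes\C^{d_B}$, so the lift has size $\Theta(D+d_Ad_B)$. That cannot be absorbed into $D$ when $\max\{d_A,d_B\}$ is much larger than $d$. Your identity $h_{\Sep(d_A:d_B)}(\tilde Q)=h_{\Sep(d:d)}(Q)$ is correct but does not fix this. The paper instead post-composes $\Lambda$ with local CPTP compressions (\cref{lem:balanced-retraction}), which yields a genuine $(\varepsilon,\delta)$-disentangler into $\C^d\otimes\C^d$. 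Second, \cref{thm:hsep-endpoint} requires $s<c$, so the case $a=0$ must be handled separately. One way is to regard $\Lambda$ as a disentangler with a small positive total error $\tilde a$ for which the second term of $M_{\tilde a}$ attains the minimum.
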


\noindent\emph{Lean:}
\leanref{Disentangler/ManuscriptAudit.lean}{156}{dimension}. Our main results are supported by Lean proofs \cite{gharibianLeanProofsDisentangler2026}.

\begin{corollary}[Accuracy-dependent growth of the input dimension]
\label[corollary]{cor:dimension-power}
Fix $0<\theta<2/7$.  There exist constants
$c_\theta,a_\theta>0$ and $d_\theta\in\mathbb N$ such that, under the same
notation, every $(\varepsilon,\delta)$-approximate disentangler with
$d\ge d_\theta$ and $a\le a_\theta$ satisfies
\begin{equation}
 D\ge d^{\,c_\theta\min\{a^{-1/3},d^\theta\}}.
 \label{eq:dimension-power}
\end{equation}
\end{corollary}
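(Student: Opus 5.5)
The plan is to derive the corollary from \cref{thm:dimension} by bounding the two factors of $cM_a\log(2+R_a)$ from below separately. Write $N=d-1$ and $x=\log N$. First I fix $0<\theta<2/7$ and set $\theta'=(\theta+2/7)/2$, so that $\theta<\theta'<2/7$. Since $(N/x^3)^{2/7}\ge d^{\theta'}$ for all sufficiently large $d$, we have
\[
M_a\ge\min\{(2a)^{-1/3},d^{\theta'}\}\ge 2^{-1/3}\min\{a^{-1/3},d^{\theta}\}.
\]
This already supplies the factor $\min\{a^{-1/3},d^\theta\}$, and it does so with some room to spare in the second argument.

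The main work is the lower bound on $\log(2+R_a)$, which must be of order $\log d$. By definition $R_a=N/(M_a^{7/2}x^3)$. I split into two cases according to which term attains the minimum in $M_a$.

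\emph{Case 1:} $M_a=(2a)^{-1/3}\le d^{\theta_1}$ for some fixed $\theta_1<2/7$, which I take to be $\theta'$. Then
\[
R_a\ge N\,d^{-7\theta'/2}x^{-3}\ge d^{(1-7\theta'/2)/2}
\]
for $d$ large, since $7\theta'/2<1$. Hence $\log(2+R_a)\ge\frac{1-7\theta'/2}{2}\log d$, and so
\[
\log D\ge c\,2^{-1/3}\,\tfrac{1-7\theta'/2}{2}\,\min\{a^{-1/3},d^\theta\}\log d.
\]

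\emph{Case 2:} $M_a>d^{\theta'}$. Here $R_a$ may be only of order $1$; indeed when $M_a=(N/x^3)^{2/7}$ we get exactly $R_a=1$, and then $\log(2+R_a)\ge\log 2$ only. This is the step I expect to be the main obstacle, because we still need a factor of $\log d$. I would absorb it by using the slack between $\theta'$ and $\theta$: in this case $M_a\ge d^{\theta'}\ge d^\theta\log d$ for large $d$. Therefore
\[
\log D\ge c\log 2\cdot d^{\theta}\log d\ge c\log 2\,\min\{a^{-1/3},d^\theta\}\log d.
\]

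To finish, I take $c_\theta$ to be the smaller of the two constants from the cases, $a_\theta=a_0$, and $d_\theta\ge d_0$ large enough that every ``$d$ large'' step above holds. Exponentiating then gives $D\ge d^{c_\theta\min\{a^{-1/3},d^\theta\}}$. The case $a=0$ is handled by the stated convention $(2a)^{-1/3}=+\infty$, which places it in Case 2.
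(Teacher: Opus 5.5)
Your proof is correct and is essentially the paper's argument. The paper obtains the power form by the same two-case analysis of $M\log(2+R)$: it uses $R\ge1$ when $M$ exceeds roughly $d^\theta\log d$, and $R\ge N^{1-7\theta/2}/x^{13/2}$ otherwise. The paper runs this on the SDP bound \eqref{eq:hsep-extension-complexity} and then re-absorbs the $O(d^2)$ overhead, whereas you apply it directly to \cref{thm:dimension}, where that absorption is already done, and your fixed intermediate exponent $\theta'$ plays the role of the paper's threshold $qx$.
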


\noindent\emph{Lean:}
\leanref{Disentangler/ManuscriptAudit.lean}{189}{dimension\_power}.

\noindent In words, and independently of the concurrent work of Bostanci et al.~\cite{bostanciQuantumOracleSeparation2026} discussed below, we obtain:
\begin{enumerate}
    \item To our knowledge, the first unconditional information-theoretic obstruction for unrestricted approximate disentanglers throughout the nonzero regime $a=n^{-\omega(1)}$.
    In other words, if $a$ is smaller than every inverse polynomial, then $\log_2D=n^{\omega(1)}$. Thus the disentangler strategy cannot simulate $\QMAt$ whenever the channel error is sufficiently below a promise gap in this regime.

    To put this in context, $\QMAt$ with an inverse-exponential completeness--soundness gap $\Delta$ equals $\NEXP$~\cite{pereszlenyiMultiProverQuantumMerlinArthur2012}. A disentangler-based simulation must have channel error $a$ sufficiently below $\Delta$ to preserve this gap. Since $\textup{PreciseQMA}=\textup{PSPACE}$~\cite{feffermanCompleteCharacterizationUnitary2018}, unless $\textup{PSPACE}=\NEXP$, an efficiently implementable disentangler with $a=o(\Delta)$, diamond-norm implementation error $o(\Delta)$, and $\log D=\poly(n)$ was already unlikely to exist. Our result is instead an \emph{information-theoretic} obstruction which (a) makes no assumption on the complexity of implementing $\Lambda$, and (b) extends throughout the regime $a=n^{-\omega(1)}$, in which nothing is known about the complexity of $\QMAt$.
    \item For $a\leq 1/n^c$ for constant $c$, i.e. inverse-polynomial channel error relative to the output qubit count, we obtain, to our knowledge, the first lower bounds that force a superlinear number $\log D$ of input qubits:
    \be\label{eqn:poly}
       D\ge2^{\Omega(n^{1+c/3})}.
    \ee
    This extends the admissible error regime exponentially compared to~\cite{harrowLimitationsSemidefinitePrograms2019}, which obtained similar lower bounds for $a=2^{-\poly(n)}$. As a concrete example, if $a\leq 1/n^3$, \Cref{eqn:poly} says any disentangler $\Lambda$ requires $\log D\geq \Omega(n^2)$, i.e. the input to $\Lambda$ consists of quadratically more qubits than its output.
\end{enumerate}

\noindent The two terms in $M_a$ are equal when $a$ is of order
$({\log d})^{18/7}/d^{6/7}$.  If
\[
 a=O\!\left(\frac{(\log d)^{18/7}}{d^{6/7}}\right),
\]
then
$M_a=\Theta\bigl(d^{2/7}/(\log d)^{6/7}\bigr)$, and therefore
\begin{equation}
 \log D
 =\Omega\!\left(\frac{d^{2/7}}{(\log d)^{6/7}}\right),
 \label{eq:intro-saturated}
\end{equation}
closing the fixed-power loss in \Cref{cor:dimension-power}.

\paragraph{The SDP model and related work.}
We emphasize the model: the lower bound applies to SDPs with a common
objective-independent feasible region and an objective-independent embedding
of product states; it does not by itself lower-bound arbitrary algorithms or
formulations whose constraints may be rebuilt without restriction for each
objective.
Concurrent works give SoS-based algorithms for the 
promise $h_{\Sep}(M)=1$ versus $h_{\Sep}(M)\le 1-\varepsilon$, using
constraints derived from the eigenvalue-$1$ eigenspace of $M$ and subsequent
rounding~\cite{jeronimoArgmaxPrincipleSum2026,andersonSimpleAlgorithmBest2026}.
These results are complementary to our lower bounds, which only rule out certain types of SDP formulations.

Among concrete SDP approaches to separability, Doherty, Parrilo, and
Spedalieri introduced the symmetric-extension hierarchy, whose successive
outer approximations converge to the separable set
~\cite{dohertyCompleteFamilySeparability2004}.  Two other earlier results
address related but distinct representation models.
Fawzi proved that, beyond the low-dimensional PPT cases, the separable set
has no finite exact semidefinite representation~\cite{fawziSetSeparableStates2021}.
Aubrun and Szarek gave quantitative lower bounds on the number of positive-map
tests needed to detect all robustly entangled states~\cite{aubrunDvoretzkyComplexityEntanglement2017}.
The former result is exact and nonquantitative, while the latter concerns a
restricted family of tests.  Our theorem instead gives quantitative lower
bounds for general objective-independent PSD lifts that approximate the
separable set in trace distance.

\paragraph{Parallel work.}
Independent and concurrent work by Bostanci, Grewal, Haferkamp, Huang,
Hwang, Natarajan, and Nirkhe~\cite{bostanciQuantumOracleSeparation2026}
proves a quantum-oracle separation between $\QMAt$ and $\QMA$ and stronger
lower bounds for approximate disentanglers.  In our notation, for every
fixed $\epsilon,\delta\ge0$ with $a=\epsilon+\delta<1$, their result shows
that the number $\log_2D$ of input qubits is exponential in the number
$n=\log_2d$ of qubits in each output register.  For $a<2/3$, they prove
$\log D=\Omega_{\epsilon,\delta}(\sqrt d)$, and for $2/3<a<1$ they prove
$\log D=\Omega_{\epsilon,\delta}(d^{\gamma})$ for a constant $\gamma>0$
depending on $a$.  Since an
$(\epsilon,\delta)$-approximate disentangler is also an
$(\epsilon',\delta')$-approximate disentangler for all $\epsilon'\ge\epsilon$
and $\delta'\ge\delta$, this bound supersedes \cref{thm:dimension} as a
statement about disentanglers.  Their reduction uses a
disentangler to construct a $\QMA$ verifier for a unitary-oracle problem and
then applies a query lower bound.  A general objective-independent SDP
formulation need not define either a quantum channel or a $\QMA$ verifier,
so this reduction does not establish our SDP results: the threshold lower
bound in \cref{thm:hsep-endpoint}, the uniform additive lower bound in
\cref{cor:hsep-uniform-additive}, the spectrahedral approximation lower bound
in \cref{cor:sep-spectrahedral-shadow}, or the corresponding bosonic bounds
in \cref{cor:bosonic-hsep}.  In each of these settings, our results give the
bound \eqref{eq:hsep-extension-complexity}.

\paragraph{Techniques.} We follow the approach of Harrow, Natarajan, and Wu~\cite{harrowLimitationsSemidefinitePrograms2019} (HNW). We form a nonnegative table $M$ of measurement outcomes. Its columns correspond to selected separable states $\sigma_S$, and its rows correspond to selected Hermitian operators $W_T$, called \emph{witnesses}. The entry in row $T$ and column $S$ is $\Tr[W_T\sigma_S]\ge0$. The \emph{PSD rank} of $M$ is the smallest $r$ for which every row $T$ can be assigned a positive-semidefinite (PSD) $r\times r$ matrix $A_T$ and every column $S$ a PSD $r\times r$ matrix $B_S$ such that $M(T,S)=\Tr[A_TB_S]$. A disentangler with input dimension $D$ gives a PSD factorization whose size is controlled by $D$. A theorem of Lee, Raghavendra, and Steurer (LRS)~\cite{leeLowerBoundsSize2015} shows that such a factorization requires $r\ge d^{\Omega(m)}$, where $m$ is a parameter that we choose. A larger disentangler error forces us to choose a smaller $m$. Comparing the upper and lower bounds on $r$ gives \Cref{thm:dimension}.

To explain the connection to PSD rank, recall that a Hermitian operator is PSD exactly when its expectation in every state is nonnegative. Our witnesses are instead \emph{block positive}: they have nonnegative expectation in every separable state but can have negative expectation in an entangled state. Thus the states and witnesses give the entries of $M$, but they do not themselves form a PSD factorization. Now apply $W_T$ to the output of a disentangler. On an input state $\rho$, its expectation is $\Tr[W_T\Lambda(\rho)]$. Property (1) ensures that $\Lambda(\rho)$ is within $\epsilon$ of a separable state, on which $W_T$ has nonnegative expectation. The expectation is therefore at least $-2\epsilon\norm{W_T}_\infty$ for every input $\rho$. Adding $2\epsilon\norm{W_T}_\infty$ times the identity makes the resulting input-space operator PSD. In the table, this operation adds the same small shift to every entry in a row. Property (2) gives an approximate input preimage $\rho_S$ for each column state $\sigma_S$. These input states form the other side of the factorization. Hence a disentangler gives a PSD factorization of a slightly shifted version of $M$. More generally, any SDP that approximates $h_{\Sep}$ with a common feasible region gives the same kind of factorization. We first prove a lower bound for these SDPs (\Cref{thm:hsep-endpoint}) and then construct one of size $O(D+d^2)$ from a disentangler.

The PSD-rank lower bound is an application of the LRS theorem. Let $f$ be a polynomial in $m$ Boolean variables that is nonnegative on every point of $\{0,1\}^m$. A degree-$k$ \emph{pseudo-expectation} with margin $g>0$ is a linear functional $\widetilde{\E}$ on polynomials of degree at most $k$ such that $\widetilde{\E}[1]=1$, $\widetilde{\E}[p^2]\ge0$ for every $p$ of degree at most $k/2$, and $\widetilde{\E}[f]=-g$. Thus it satisfies the basic positivity test available to low-degree polynomials even though it assigns a negative value to $f$. Form the table with entries $f(x_S)+b$, indexed by the $m$-element subsets $S\subseteq[N]$ and the points $x\in\{0,1\}^N$, where $x_S$ is the restriction of $x$ to $S$ and $b\ge0$. LRS show that if a pseudo-expectation of degree $\Omega(m)$ exists and $g>b$, then every PSD factorization of this table has size $N^{\Omega(m)}$, provided $N$ is at least a fixed power of $m$. The disentangler errors contribute to $b$, so we must keep the resulting shift below $g$.

HNW build their table from a $\QMAt$ protocol with logarithmic-size proofs, of the type used to show that $\NP$ is contained in $\QMAt$~\cite{blierQuantumCharacterizationNP2012,gallQMAProtocolsTwo2012}. In this protocol, two unentangled proofs of $O(\log n)$ qubits convince a verifier that a graph on $n$ vertices is $3$-colorable, so the local dimension is $d=\poly(n)$. The protocol has an inverse-polynomial promise gap: the honest proofs are superpositions over all $n$ vertices, and cheating is detected with probability $1/\poly(n)$. Consequently, the associated margin is $g=\Theta(1/n^2)=1/\poly(d)$, and the factorization argument applies when the error is exponentially small in the number of output qubits. The LRS method embeds instances on $m\ll n$ variables, but the proof states remain superpositions over all $n$ vertices. The margin therefore scales with $n$ rather than $m$. Together with the permitted parameter range, this gives the HNW bound $D\ge d^{\log d/\polylog\log d}$.

We instead make the margin depend on the parameter $m$, not on the dimension $d$. We take $f$ to be the central knapsack polynomial $f_m(x)=[(\sum_{i=1}^m x_i-m/2)^2-1/4]/m^2$ with $m$ odd. It is nonnegative at every point of $\{0,1\}^m$: since $\sum_i x_i$ is an integer and $m/2$ is a half-integer, the square is at least $1/4$. Grigoriev~\cite{grigorievComplexityPositivstellensatzProofs2001} constructed a pseudo-expectation for $f_m$ with degree $\Omega(m)$ and margin $g=1/(4m^2)$ (\Cref{sec:knapsack}). Thus $g$ depends only on $m$, which we are free to choose. \Cref{sec:witnesses} realizes these values as measurement statistics using pure product states $\sigma_S$, indexed by $m$-element subsets $S$ of the local basis, and block-positive witnesses $W_T$ with
\be
  \Tr[W_T\sigma_S]=f_m(\one_T|_S)+\frac1{8m^2},
  \qquad
  \norm{W_T}_\infty\in O(m),
\ee
where $\one_T|_S\in\{0,1\}^m$ records which elements of $S$ lie in $T$. The entries of this table are differences between an upper threshold and the values attained by selected states; such a table is called a \emph{slack matrix}. No proof protocol is needed. The norm bound $O(m)$ is independent of $d$. Writing $a=\epsilon+\delta$, the errors change each entry by at most $(\epsilon+\delta)\norm{W_T}_\infty=O(am)$. The change therefore depends on $m$ rather than $d$, which allows the argument to handle larger errors than HNW.

The argument requires $am=O(1/m^2)$, or equivalently $am^3=O(1)$. It also requires $m$ to be at most a small power of $d$. Taking $m=\Theta(\min(a^{-1/3},d^\theta))$ therefore yields $D\ge d^{\Omega(m)}$. To improve the allowed power of $d$, we apply a polynomial separately to every entry of $M$. Specifically, let $T_\ell$ be the degree-$\ell$ Chebyshev polynomial and set $P_\ell(u)=(1+T_\ell(2u-1))/2$. This transformation increases the negative pseudo-expectation by a factor $\ell^2$ while keeping all actual table entries in $[0,1]$. The identity $P_\ell(u)=u\,Q_\ell(u)^2$ bounds the increase in PSD rank. With $\ell$ a fixed fraction of $m$, the allowed range improves from $\theta<2/11$ to $\theta<2/7$. Taking $\ell=\Theta(m/\log d)$ gives the term $d^{2/7}/(\log d)^{6/7}$ in \Cref{thm:dimension}.

\Cref{sec:knapsack} proves the lower bounds for the matrix constructed using $f$, \Cref{sec:witnesses} constructs the states and witnesses, and \Cref{sec:hsep-extension-lower-bounds} combines them to prove the $h_{\Sep}$ theorem. \Cref{sec:disentangler-reduction} then converts a disentangler into a uniform $h_{\Sep}$ formulation of size $O(D+d^2)$ and proves \Cref{thm:dimension}. When $d_A\neq d_B$, local channels reduce both registers to subspaces of dimension $\min\{d_A,d_B\}$ without increasing the error.

\paragraph{Open questions.} For fixed $a<2/3$, the parallel lower bound of
Bostanci et al.~\cite{bostanciQuantumOracleSeparation2026} leaves only a
logarithmic gap from the upper bound $\log D=O_a(\sqrt d\log d)$ of
Jeronimo, Wu, and Xu~\cite{jeronimoOptimalQuantumFinetti2026}.  Determining
the optimal dependence as $a$ approaches one remains open. On the SDP side,
the main quantitative questions are whether the $a^{-1/3}$ dependence can
be improved and whether the exponent $2/7$ can be raised. More broadly,
can the methods used to prove stronger lower bounds for approximate
disentanglers, particularly the quantum-oracle approach of Bostanci et
al.~\cite{bostanciQuantumOracleSeparation2026}, be adapted to general SDP
extended formulations of $h_{\Sep}$? In particular, can these methods yield
size lower bounds exponential in a power of $d$ at sufficiently small
constant additive error, even for formulations that do not arise from
quantum channels?

\paragraph{Generative AI disclosure.}
Generative AI was used extensively and iteratively in the development of the
proofs in this paper, including to propose and refine key proof ideas over many
rounds of interaction.  The authors critically evaluated, corrected, and
synthesized these suggestions, developed the resulting arguments, and wrote and
edited the final exposition.  The authors take full responsibility for the
correctness and content of the paper.

\paragraph{Organization.}
\Cref{sec:preliminaries} fixes notation and records the norm and
PSD-rank facts used later.  \Cref{sec:knapsack} proves the knapsack
pattern-matrix lower bounds.  \Cref{sec:witnesses} constructs bounded
witnesses and product states.  \Cref{sec:hsep-extension-lower-bounds}
proves the semidefinite extension-complexity theorem and its consequences.  \Cref{sec:disentangler-reduction} reduces
approximate disentanglers to those formulations and proves
\cref{thm:dimension}.

\section{Preliminaries}
\label{sec:preliminaries}

All logarithms are natural unless a base is displayed.  This section fixes
our conventions and collects the standard facts used in the proof.

\subsection{States, norms, and trace distance}

For a finite-dimensional Hilbert space $H$, let $\Herm(H)$ be the real vector
space of Hermitian operators and let $\Dens(H)$ be the set of positive
semidefinite operators of trace one.  We write
$\norm{X}_1=\Tr\sqrt{X^*X}$ for the trace norm and $\norm{X}_\infty$ for the
operator norm.  The normalized trace distance is
\[
 \disttr(\rho,\sigma)=\frac12\norm{\rho-\sigma}_1,
 \qquad
 \disttr(\rho,\mathcal C)=\inf_{\tau\in\mathcal C}\disttr(\rho,\tau).
\]
Trace-norm duality gives
\begin{equation}
 \abs{\Tr(XY)}\le \norm X_\infty\norm Y_1
 \label{eq:trace-holder}
\end{equation}
whenever $X$ and $Y$ have compatible dimensions.

Every Hermitian operator $Y$ can be written as the difference of its positive
and negative spectral parts (its Jordan decomposition):
\[
 Y=Y_+-Y_-,
 \qquad
 Y_+,Y_-\succeq0,
 \qquad
 Y_+Y_-=0,
\]
with $\norm Y_1=\Tr Y_++\Tr Y_-$.  If $\Tr Y=0$, then
\begin{equation}
 \Tr Y_+=\Tr Y_-=\frac12\norm Y_1.
 \label{eq:trace-zero-jordan}
\end{equation}
A Hermitian operator $Q$ with $0\preceq Q\preceq I$ represents one outcome
of a two-outcome quantum measurement; the other outcome is represented by
$I-Q$.  For density operators, trace-distance duality can therefore be written
as
\begin{equation}
 \disttr(\rho,\sigma)
 =\max_{0\preceq Q\preceq I}\Tr\bigl(Q(\rho-\sigma)\bigr)
 =\max_{0\preceq Q\preceq I}\abs*{\Tr\bigl(Q(\rho-\sigma)\bigr)}.
 \label{eq:effect-trace-duality}
\end{equation}
Indeed, the positive spectral projector of $\rho-\sigma$ attains the first
maximum, and replacing $Q$ by $I-Q$ reverses the sign because the difference
has trace zero.

For a nonempty compact set $\mathcal C\subseteq\Dens(H)$, define the maximum
measurement value over $\mathcal C$ by
\begin{equation}
 h_{\mathcal C}(Q)=\max_{\rho\in\mathcal C}\Tr(Q\rho),
 \qquad 0\preceq Q\preceq I.
 \label{eq:maximum-measurement-value-general}
\end{equation}
If $\mathcal K\subseteq\Dens(H)$ satisfies
$\sup_{\rho\in\mathcal K}\disttr(\rho,\mathcal C)\le a$, then
\begin{equation}
 h_{\mathcal K}(Q)\le h_{\mathcal C}(Q)+a
 \qquad\text{for every $Q$ with $0\preceq Q\preceq I$.}
 \label{eq:measurement-value-distance}
\end{equation}
This follows by choosing, for each $\rho\in\mathcal K$, a point
$\sigma\in\mathcal C$ within trace distance $a$ and applying
\eqref{eq:effect-trace-duality}.

Trace distance cannot increase when the same quantum channel is applied to
both states.  Consequently, locally compressing the output registers of a
disentangler cannot increase its approximation errors.  We use this fact in
\cref{lem:balanced-retraction} to reduce unequal local dimensions
$d_A,d_B$ to dimension $d=\min\{d_A,d_B\}$.

\subsection{Approximate disentanglers and spectrahedral shadows}

Following Harrow--Natarajan--Wu~\cite[Def.~5.8]{harrowLimitationsSemidefinitePrograms2019},
a completely positive trace-preserving map
\[
 \Lambda:\Dens(\C^D)\longrightarrow
 \Dens(\C^{d_A}\otimes\C^{d_B})
\]
is an $(\varepsilon,\delta)$-approximate disentangler if
\begin{align}
 \sup_{\rho\in\Dens(\C^D)}
 \disttr\bigl(\Lambda(\rho),\Sep(d_A:d_B)\bigr)&\le\varepsilon,
 \label{eq:disentangler-outer}\\
 \sup_{\sigma\in\Sep(d_A:d_B)}\inf_{\rho\in\Dens(\C^D)}
 \disttr\bigl(\Lambda(\rho),\sigma\bigr)&\le\delta.
 \label{eq:disentangler-inner}
\end{align}
Throughout we put
\begin{equation}
 a=\varepsilon+\delta,
 \qquad
 d=\min\{d_A,d_B\}.
 \label{eq:a-and-d}
\end{equation}
The first condition is the outer, or soundness, approximation; the second is
the covering, or completeness, approximation.  All state spaces and channel
images here are compact, so the displayed infima are attained.  In
particular, the two inequalities force $\varepsilon,\delta\ge0$ and justify
the closest-state and closest-input choices used below.

A \emph{spectrahedron} is the feasible region of an SDP: the set of PSD
matrices that satisfy a collection of affine equations.  We write
$\mathbb S_+^r$ for the set of real symmetric PSD $r\times r$ matrices.  A convex set
$\mathcal K$ has a real \emph{PSD lift} of size $r$ if there are a
spectrahedron $\mathcal P=\mathcal A\cap\mathbb S_+^r$, an affine map $\pi$,
and
\[
 \mathcal K=\pi(\mathcal P).
\]
In other words, $\mathcal K$ is obtained by applying an affine map, which may
discard auxiliary variables, to an SDP feasible region.  It is then called a
\emph{spectrahedral shadow} of size $r$~\cite{gouveiaLiftsConvexSets2013}.  We require every approximating
shadow of quantum states to consist of trace-one PSD matrices.  Complex Hermitian lifts may be
realified at a factor-two cost by the map used in
\cref{lem:realification}.

\subsection{Separable states, block positivity, and the swap}

A bipartite state on $A\otimes B$ is separable if it can be written as a
convex combination of product states.  Equivalently,
\[
 \Sep(A:B)
 =\operatorname{conv}\left\{
  \proj u\otimes\proj v:
  \norm u=\norm v=1
 \right\}.
\]
A Hermitian operator $W$ on $A\otimes B$ is \emph{block-positive} if
\begin{equation}
 \inner{u\otimes v}{W(u\otimes v)}\ge0
 \label{eq:block-positive-definition}
\end{equation}
for all $u\in A$ and $v\in B$.  By convexity, a block-positive operator has
nonnegative expectation on every separable state:
\begin{equation}
 \Tr(W\sigma)\ge0
 \qquad
 \text{for every }\sigma\in\Sep(A:B).
 \label{eq:block-positive-separable}
\end{equation}
Block positivity is weaker than positive semidefiniteness.  For brevity, we
call the block-positive operators constructed below witnesses, although we do
not require them to be non-PSD; thus, some $W_T$ may not be entanglement
witnesses in the standard sense.

When the two tensor factors are copies of the same space $H$, the swap
operator $F$ is defined by
\[
 F(u\otimes v)=v\otimes u.
\]
It is Hermitian and unitary, and its $+1$ and $-1$ eigenspaces are the
symmetric and antisymmetric subspaces.  Thus $I-F\succeq0$ and
\begin{equation}
 \norm{I-F}_\infty=2.
 \label{eq:swap-norm}
\end{equation}

\subsection{Positive-semidefinite rank}

Let $M:I\times J\to\R_+$ be a nonnegative matrix.  Its real PSD rank is the
least $r$ for which there are real symmetric positive semidefinite matrices
$A_i,B_j\in\R^{r\times r}$ satisfying
\[
 M(i,j)=\Tr(A_iB_j).
\]
The complex PSD rank is defined in the same way using Hermitian positive
semidefinite factors.  We write these quantities as
$\rankpsd^{\R}(M)$ and $\rankpsd^{\C}(M)$.

We repeatedly use three elementary operations on PSD factorizations.  If
$M,N$ are nonnegative matrices of the same shape, then
\begin{align}
 \rankpsd(M+N)
 &\le \rankpsd(M)+\rankpsd(N),
 \label{eq:psdrank-sum}\\
 \rankpsd(M\odot N)
 &\le \rankpsd(M)\rankpsd(N),
 \label{eq:psdrank-hadamard}
\end{align}
where $\odot$ denotes entrywise, or Hadamard, product.  The first inequality
follows by taking direct sums of the factors, and the second by taking tensor
products.  Multiplication of a matrix by a positive scalar does not change
its PSD rank.  The all-ones matrix $\mathbf J$ has PSD rank one.

A related observation will be useful in the amplification argument.  If a
real matrix $C$ has ordinary rank at most $q$, choose a rank factorization
$C(i,j)=\inner{u_i}{v_j}$ with $u_i,v_j\in\R^q$.  Then
\[
 C(i,j)^2
 =\Tr\bigl((u_iu_i^{\mathsf T})(v_jv_j^{\mathsf T})\bigr),
\]
so the entrywise square satisfies
\begin{equation}
 \rankpsd^{\R}(C^{\odot2})\le q.
 \label{eq:squared-rank-factorization}
\end{equation}
For these standard constructions, see also
\cite[Thm.~2.9(iii) and proof of Thm.~2.9(v)]{fawziPositiveSemidefiniteRank2015};
the latter argument applies analogously to arbitrary real $C$.

\begin{lemma}[Realification]
\label[lemma]{lem:realification}
For every nonnegative matrix $M$,
\[
 \rankpsd^{\R}(M)\le2\rankpsd^{\C}(M).
\]
\end{lemma}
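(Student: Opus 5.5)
The plan is to take a minimal complex Hermitian PSD factorization $M(i,j)=\Tr(A_iB_j)$ with $A_i,B_j\succeq0$ of size $r=\rankpsd^{\C}(M)$, and convert each factor by the standard realification map
\[
 \phi(X+\mathrm iY)=\begin{pmatrix} X & -Y\\ Y & X\end{pmatrix},
\]
where $X=\operatorname{Re}$ and $Y=\operatorname{Im}$ of the Hermitian matrix. Since the factor is Hermitian, $X$ is real symmetric and $Y$ is real antisymmetric, so $\phi(X+\mathrm iY)$ is a real symmetric $2r\times2r$ matrix.

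Three facts need to be verified in order.

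First, $\phi$ preserves positivity. For a real vector $(u,v)$, the quadratic form of $\phi(H)$ equals $\langle w, H w\rangle$ with $w=u+\mathrm iv$. Expanding $\langle w,Hw\rangle$ with $H=X+\mathrm iY$ yields $u^{\mathsf T}Xu+v^{\mathsf T}Xv+2u^{\mathsf T}Yv$ up to sign conventions, which matches the block form. Hence $H\succeq0$ implies $\phi(H)\succeq0$.

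Second, $\phi$ is an algebra homomorphism from $r\times r$ complex matrices to $2r\times2r$ real matrices. It is $\R$-linear and multiplicative, with $\phi(\mathrm i I)$ acting as the rotation $J$. This gives $\Tr\phi(Z)=2\operatorname{Re}\Tr Z$.

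Third, combining these, $\Tr(\phi(A_i)\phi(B_j))=\Tr\phi(A_iB_j)=2\operatorname{Re}\Tr(A_iB_j)=2M(i,j)$. Here $\Tr(A_iB_j)$ is real because both factors are Hermitian PSD.

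Finally, rescaling one side by $1/2$, for instance taking $\tfrac12\phi(A_i)$ and $\phi(B_j)$, gives a real PSD factorization of $M$ of size $2r$. This proves the claim.

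The only point requiring care is the sign convention in $\phi$, so that the positivity identity and multiplicativity hold simultaneously. I would fix the convention by checking that $\phi(Z)$ is the matrix of $Z$ acting on $\C^r\cong\R^{2r}$ via $w=u+\mathrm iv\mapsto(u,v)$. This interpretation gives multiplicativity for free and makes positivity the statement $\operatorname{Re}\langle w,Hw\rangle\ge0$. Everything else is routine.
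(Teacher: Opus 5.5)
Your proposal is correct and follows the paper's proof: you use the same block realification map, prove positivity through the same identification $(u,v)\leftrightarrow u+\mathrm iv$ of quadratic forms, use the identity $\Tr(\mathcal R(A)\mathcal R(B))=2\Tr(AB)$, and rescale one side of the factorization by $1/2$. Deriving that trace identity from multiplicativity and $\Tr\phi(Z)=2\operatorname{Re}\Tr Z$ simply fills in a step that the paper states without proof.
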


\begin{proof}
This standard complex to real conversion is also described in
\cite[Sec.~2.2]{fawziPositiveSemidefiniteRank2015}.
For a Hermitian matrix $X$, define its realification by
\[
 \mathcal R(X)=
 \begin{pmatrix}
  \operatorname{Re}X&-\operatorname{Im}X\\
  \operatorname{Im}X&\operatorname{Re}X
 \end{pmatrix}.
\]
If $X\succeq0$, then $\mathcal R(X)\succeq0$.  One way to see this is to
write a real vector as $(u,v)$, identify it with the complex vector
$u+iv$, and check that the corresponding quadratic forms agree.
Furthermore, for Hermitian $X,Y$,
\begin{equation}
 \Tr\bigl(\mathcal R(X)\mathcal R(Y)\bigr)
 =2\Tr(XY).
 \label{eq:realification-trace}
\end{equation}
Thus a complex factorization $M(i,j)=\Tr(A_iB_j)$ of size $r$ gives a real
factorization of size $2r$ with factors
$\mathcal R(A_i)$ and $\mathcal R(B_j)/2$.
\end{proof}

\subsection{Boolean multilinear polynomials and binomial moments}
\label{subsec:boolean-background}

Every function on the Boolean cube $\{0,1\}^m$ has a unique multilinear
polynomial representation.  We therefore identify polynomials that agree on
the Boolean cube.  Algebraically, this means that we work in the quotient
\begin{equation}
 \R[X_1,\ldots,X_m]
 \big/
 \langle X_i^2-X_i:i\in[m]\rangle.
 \label{eq:boolean-quotient}
\end{equation}
Thus, whenever two polynomials are multiplied, we replace every power
$X_i^k$ with $X_i$ before applying a linear functional.  For
$S\subseteq[m]$, write
\[
 X^S=\prod_{i\in S}X_i,
 \qquad
 H=\sum_{i=1}^mX_i.
\]
On a Boolean input $x$, the value $H(x)=\abs x$ is its Hamming weight.  A
\emph{Hamming-weight layer} is the set of strings with a fixed Hamming
weight.

If $x$ is uniform on $\{0,1\}^m$, then $H(x)$ has the binomial distribution
$\Bin(m,1/2)$:
\begin{equation}
 \Pr[H=w]=2^{-m}\binom mw,
 \qquad 0\le w\le m.
 \label{eq:binomial-layer-probability}
\end{equation}
In particular,
\begin{equation}
 \E H=\frac m2,
 \qquad
 \E\left(H-\frac m2\right)^2
 =\operatorname{Var}(H)=\frac m4.
 \label{eq:binomial-first-two-moments}
\end{equation}
For any function $h$ that depends only on Hamming weight,
\begin{equation}
 \E_x h(\abs x)
 =2^{-m}\sum_{w=0}^m\binom mw h(w).
 \label{eq:layer-average}
\end{equation}

For $j\ge0$, let
\[
 (H)_j=H(H-1)\cdots(H-j+1)
\]
be the falling factorial, with $(H)_0=1$.  In the Boolean quotient,
\begin{equation}
 (H)_j=j!\sum_{\substack{S\subseteq[m]\\\abs S=j}}X^S.
 \label{eq:falling-factorial-boolean}
\end{equation}
Indeed, at a Boolean point of Hamming weight $w$, both sides equal
$(w)_j=j!\binom wj$.  The polynomials
$1,(H)_1,\ldots,(H)_m$ form a basis for the univariate polynomials in $H$ of
degree at most $m$.

\subsection{Generalized binomial coefficients and Lagrange interpolation}
\label{subsec:interpolation-background}

For a real or complex number $t$ and an integer $j\ge0$, the generalized
binomial coefficient is
\begin{equation}
 \binom tj=\frac{(t)_j}{j!}
 =\frac{t(t-1)\cdots(t-j+1)}{j!}.
 \label{eq:generalized-binomial}
\end{equation}
This agrees with the usual binomial coefficient when $t$ is a nonnegative
integer, but it is also meaningful when $t$ is a half-integer.

For the interpolation nodes $0,1,\ldots,m$, define
\begin{equation}
 c_w(s)=
 \prod_{\substack{0\le j\le m\\j\ne w}}
 \frac{s-j}{w-j},
 \qquad 0\le w\le m.
 \label{eq:lagrange-basis-prelim}
\end{equation}
The polynomial $c_w$ equals one at $s=w$ and zero at every other node.  Hence
Lagrange interpolation says that every univariate polynomial $p$ of degree at
most $m$ satisfies
\begin{equation}
 p(t)=\sum_{w=0}^m c_w(t)p(w).
 \label{eq:lagrange-interpolation}
\end{equation}
We will apply this identity at the nonintegral point $t=m/2$, especially to
the polynomials $p(w)=\binom wj$.

\subsection{Chebyshev polynomials}
\label{subsec:chebyshev-background}

The Chebyshev polynomials of the first and second kind are characterized by
\begin{align}
 T_j(\cos\varphi)&=\cos(j\varphi),
 \label{eq:chebyshev-first}\\
 U_j(\cos\varphi)&=
 \frac{\sin((j+1)\varphi)}{\sin\varphi}.
 \label{eq:chebyshev-second}
\end{align}
The endpoint values in \eqref{eq:chebyshev-second} are understood by
continuity, and we set $U_{-1}=0$.  These identities show that $T_j$ has
degree $j$, $U_j$ has degree $j$, and
\begin{equation}
 -1\le T_j(x)\le1
 \qquad\text{for }x\in[-1,1].
 \label{eq:chebyshev-bounded}
\end{equation}
For $y\ge0$, the hyperbolic counterpart is
\begin{equation}
 T_j(\cosh y)=\cosh(jy).
 \label{eq:chebyshev-hyperbolic}
\end{equation}
When $j$ is odd, $T_j$ is odd.  These elementary facts will be used to
amplify a small negative value lying just outside the interval $[0,1]$.

\section{Knapsack pattern matrices and Grigoriev's pseudo-density}
\label{sec:knapsack}

We now introduce the nonnegative matrices whose PSD rank will be compared
with the size of the semidefinite formulations in
\Cref{sec:hsep-extension-lower-bounds}.  The construction is based on the
central knapsack constraint
\[
 \sum_{i=1}^m x_i=\frac m2.
\]
For odd $m$, this equation has no Boolean solution because its right-hand side
is a half-integer.  Grigoriev's pseudo-expectation nevertheless behaves, on
low-degree polynomials, as though it were supported on this nonexistent
Hamming-weight layer. The LRS theorem uses this discrepancy to rule out small PSD factorizations.

All polynomial calculations in this section take place in the Boolean
quotient \eqref{eq:boolean-quotient}.  In particular, a square $g^2$ is
multilinearized before a linear functional is applied.

\begin{definition}[Pseudo-density]
Let $r\in\mathbb N$.  A function
$\mathcal D:\{0,1\}^m\to\R$ is a degree-$r$ pseudo-density if
\[
 \E\mathcal D=1
\]
and
\[
 \E\mathcal D(x)g(x)^2\ge0
\]
for every real polynomial $g$ of degree at most $\lfloor r/2\rfloor$.
\end{definition}

A genuine probability density would be
nonnegative pointwise.  A pseudo-density may take negative values, but it
still assigns nonnegative expectation to all low-degree squares.  This weaker
positivity is exactly what is needed by the Lee--Raghavendra--Steurer theorem.
For such a function, we write
$\norm{\mathcal D}_\infty=\max_x\abs{\mathcal D(x)}$.

For odd $m\ge3$, define the central knapsack polynomial
\begin{equation}
 f_m(z)=\frac1{m^2}
 \left[
  \left(\sum_{i=1}^mz_i-\frac m2\right)^2-\frac14
 \right].
 \label{eq:f-def}
\end{equation}
If $z\in\{0,1\}^m$, then $\sum_i z_i$ is an integer whereas $m/2$ is a
half-integer.  Hence
\[
 \abs*{\sum_i z_i-m/2}\ge\frac12,
\]
which proves $f_m(z)\ge0$.  On the formal layer
$\sum_i z_i=m/2$, however, the value would be
\begin{equation}
 f_m=-\frac1{4m^2}.
 \label{eq:formal-knapsack-value}
\end{equation}
This discrepancy between true Boolean points and the formal central layer is
the gap exploited in the proof.

For $N>m$ and $b\ge0$, define the shifted pattern matrix
\begin{equation}
 M_{N,m,b}(S,x)=f_m(x_S)+b,
 \qquad
 S\in\binom{[N]}m,
 \quad x\in\{0,1\}^N.
 \label{eq:pattern-matrix}
\end{equation}
Here $x_S$ is the restriction of $x$ to the coordinates in $S$.  Since
$f_m$ is symmetric, the ordering chosen for those coordinates is irrelevant.
The rows select $m$ coordinates, and the columns range over all Boolean
assignments to the ambient $N$ coordinates.

We use the following quantitative form of the Lee--Raghavendra--Steurer
theorem~\cite[Thm.~3.1]{leeLowerBoundsSize2015}.

\begin{theorem}[Lee--Raghavendra--Steurer]
\label{thm:lrs}
There exists a universal constant $c_{\mathrm{LRS}}>0$ such that the
following holds.  Let $m,r,N\in\mathbb N$ with $m,r>1$ and
$N>2m$, let $\eta\in(0,1]$, and let
$\mathcal D:\{0,1\}^m\to\R$ be a degree-$r$ pseudo-density.  For a matrix
$Z:\binom{[N]}m\times\{0,1\}^N\to\R$, set
\begin{align*}
 \norm Z_\infty&=\max_{S,x}\abs{Z(S,x)},\\
 \norm Z_1&=\E_{S,x}\abs{Z(S,x)},\\
 L_{\mathcal D}(Z)&=\E_{S,x}\mathcal D(x_S)Z(S,x).
\end{align*}
Suppose $Z\ge0$, $\norm Z_\infty\le1$, and
\begin{equation}
 \frac{\rankpsd^{\R}(Z)^2}{\norm Z_1}
 \le
 \left(
  \frac{c_{\mathrm{LRS}}\eta}
       {r m^2\norm{\mathcal D}_\infty}
 \right)^{r/2}
 \left(
  \frac{\eta}{\norm{\mathcal D}_\infty}
 \right)^3
 \left(\frac N{\log N}\right)^{r/2}.
 \label{eq:lrs-hypothesis}
\end{equation}
Then $L_{\mathcal D}(Z)\ge-\eta$.
\end{theorem}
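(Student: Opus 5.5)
This is the quantitative form of \cite[Thm.~3.1]{leeLowerBoundsSize2015}. I would prove it by retracing the Lee--Raghavendra--Steurer argument while tracking every constant, rather than re-deriving anything new. The contrapositive shape is: a small PSD factorization of $Z$ lets us write, for a typical row $S$, the function $x\mapsto Z(S,x)$ as a sum of squares of polynomials of degree at most $r/2$ in $x_S$, up to a small $L_1$-type error. Pairing that representation with $\mathcal D(x_S)$ then gives nonnegativity up to an error of order $\norm{\mathcal D}_\infty$ times the approximation error. Choosing that error to be $\eta/\norm{\mathcal D}_\infty$ yields $L_{\mathcal D}(Z)\ge-\eta$.

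\emph{Step 1: normalize the factorization.} Fix a real factorization $Z(S,x)=\Tr(A_SB_x)$ of size $q=\rankpsd^{\R}(Z)$. I would rescale it by a common congruence $A_S\mapsto G^{\mathsf T}A_SG$, $B_x\mapsto G^{-1}B_xG^{-\mathsf T}$ so that the column factors $B_x$ have operator norm $O(1)$ and the row factors have $\Tr A_S\lesssim q$. This uses $\norm Z_\infty\le1$ and a John-ellipsoid type position argument, as in LRS. The ratio $q^2/\norm Z_1$ in \eqref{eq:lrs-hypothesis} measures how much mass this normalization can lose relative to the average size of $Z$.

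\emph{Step 2: learning step.} Next I would approximate the matrix-valued map $x\mapsto B_x$ by a ``low-complexity'' one. This is done with the matrix multiplicative weights (quantum entropy-maximization) argument of LRS. The approximant is $\tilde B_x=\Phi(x)^2$ with $\Phi(x)$ a matrix polynomial whose entries are juntas of degree at most $r/2$. The guarantee is $\E_{S,x}\abs{\Tr(A_S(B_x-\tilde B_x))}\le\epsilon\,\norm Z_1$. The number of MMW rounds is governed by the quantum relative entropy, which is at most $\log q$. Each round introduces a dependence on one more set of coordinates.

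\emph{Step 3: random restriction.} Since $N>2m$ and $N/\log N$ is large compared with $(rm^2\norm{\mathcal D}_\infty)/\eta$, a uniformly random $m$-set $S$ meets the $O(r\log q)$ influential coordinates of the juntas only in the designated positions with good probability. Averaging out the remaining coordinates of $x$ outside $S$ then shows that $\Tr(A_S\tilde B_x)$, viewed as a function of $x_S$, is a sum of squares of degree at most $r/2$ polynomials. Concretely, if $A_S=\sum_k a_ka_k^{\mathsf T}$, it equals $\sum_k\norm{\Phi(x)^{\mathsf T} a_k}^2$ averaged over the outside coordinates. The factor $(N/\log N)^{r/2}$ pays for this restriction union bound, and $(c\eta/(rm^2\norm{\mathcal D}_\infty))^{r/2}$ pays for the collision probability over the $r/2$ relevant coordinates.

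\emph{Step 4: conclude.} Applying the pseudo-density gives $\E\mathcal D\,g^2\ge0$ on the SOS part. The residual contributes at least $-\norm{\mathcal D}_\infty$ times its $L_1$ norm. The extra factor $(\eta/\norm{\mathcal D}_\infty)^3$ in \eqref{eq:lrs-hypothesis} is exactly what makes this total error at most $\eta$.

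The main obstacle is Step 2. There one needs a quantitative MMW/entropy bound in which the degree of the approximant stays at most $r/2$, while the error is measured against $\norm Z_1$ rather than $\norm Z_\infty$. I would first try to import LRS's Theorem~3.1 bookkeeping verbatim and only check that their constants combine into the stated form.
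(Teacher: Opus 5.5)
Your plan matches the paper's treatment. The paper does not reprove \cref{thm:lrs}; it imports it as the quantitative form of \cite[Thm.~3.1]{leeLowerBoundsSize2015}. Deferring to the LRS argument and checking that its dependence on $\eta$, $\norm{\mathcal D}_\infty$, $r$, $m$, and $\norm Z_1$ assembles into \eqref{eq:lrs-hypothesis} is exactly the route taken.

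Treat your Steps 1--4 only as a heuristic gloss, since some details are off. For example, the multiplicative-weights approximant is a junta on the union $J$ of the row sets used in its rounds and has no a priori degree bound; degree at most $r/2$ in $x_S$ arises only for rows with $\abs{S\cap J}\le r/2$. So it is the citation, not the sketch, that carries the proof.
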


We always use this theorem in contrapositive form: if
$L_{\mathcal D}(Z)<-\eta$, then the hypothesis
\eqref{eq:lrs-hypothesis} must fail, and therefore
\begin{equation}
 \rankpsd^{\R}(Z)
 >\sqrt{\norm Z_1}
 \left(
  \frac{c_{\mathrm{LRS}}\eta N}
       {r m^2\norm{\mathcal D}_\infty\log N}
 \right)^{r/4}
 \left(
  \frac{\eta}{\norm{\mathcal D}_\infty}
 \right)^{3/2}.
 \label{eq:lrs-contrapositive}
\end{equation}
The rest of this section constructs a pseudo-density for which the functional
on the left is explicitly negative.

\subsection{The Grigoriev functional and LRS pseudo-density}

Fix odd $m\ge3$ and put
\[
 t=\frac m2.
\]
For a multilinear monomial $X^S=\prod_{i\in S}X_i$, define the linear
functional
\begin{equation}
 \mathcal G_m(X^S)=
 \frac{\binom t{\abs S}}{\binom m{\abs S}},
 \label{eq:grigoriev-moments}
\end{equation}
where the numerator is the generalized binomial coefficient from
\eqref{eq:generalized-binomial}.  For every univariate polynomial $p$ of
degree at most $m$, this functional satisfies
\begin{equation}
 \mathcal G_m(p(H))=p(m/2).
 \label{eq:grigoriev-formal-evaluation-preview}
\end{equation}
Thus $\mathcal G_m$ acts on symmetric polynomials as evaluation on the
nonexistent Boolean layer of Hamming weight $m/2$.  We prove this identity
below in \eqref{eq:formal-layer-evaluation}.

\begin{theorem}[Grigoriev positivity]
\label{thm:grigoriev}
For every real multilinear polynomial $p$ of degree at most
$\lfloor m/2\rfloor$,
\[
 \mathcal G_m(p^2)\ge0.
\]
\end{theorem}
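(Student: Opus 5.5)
The plan is to exploit the $S_m$-symmetry of $\mathcal G_m$ and reduce positivity of the quadratic form $p\mapsto\mathcal G_m(p^2)$ to univariate evaluations at formal Hamming weights. The reduction has three steps, taken in order.

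\emph{Step 1: reduce to highest-weight vectors.} The moments \eqref{eq:grigoriev-moments} depend only on $\abs S$. Hence $\mathcal G_m$ is invariant under coordinate permutations, and so is the symmetric bilinear form $B(p,q)=\mathcal G_m(pq)$ on the space $V_k$ of multilinear polynomials of degree at most $k=\lfloor m/2\rfloor$. Decompose $V_k$ into $S_m$-isotypic components; by Schur's lemma, distinct components are $B$-orthogonal. For the irreducible $S^{(m-j,j)}$, $j\le k$, I will use the standard explicit model (as in Grigoriev's and later simplified treatments): its copies inside $V_k$ are spanned, after applying symmetrizers, by polynomials of the form
\[
 p=(X_1-X_2)(X_3-X_4)\cdots(X_{2j-1}-X_{2j})\,q(H'),
 \qquad H'=\sum_{i>2j}X_i,\quad \deg q\le k-j .
\]
These are the vectors invariant under $S_{\{2j+1,\dots,m\}}$ that transform as the sign under each pair swap. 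Within one isotypic component, $B$ is a tensor product of a fixed invariant form on the irreducible and a form on the multiplicity space. So it suffices to prove $\mathcal G_m(p^2)\ge0$ for all $p$ of this displayed shape.

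\emph{Step 2: compute $\mathcal G_m(p^2)$ for such $p$.} In the Boolean quotient, $(X_1-X_2)^2=X_1+X_2-2X_1X_2$, which is the indicator that $x_1\ne x_2$. Expanding, $p^2$ becomes a combination of terms $X^A\,r(H')$, where $A$ lies among the first $2j$ coordinates and $r=q^2$ has degree at most $m-2j$. Using $\mathcal G_m(X^S)=(t)_{\abs S}/(m)_{\abs S}$, I will check a formal conditioning identity. For an assignment fixing $u$ ones among $2j$ specified coordinates, the functional restricted to the remaining $m-2j$ variables is the Grigoriev-type functional with parameters $(m-2j,\,t-u)$, scaled by the formal probability $(t)_u(m-t)_{2j-u}/(m)_{2j}$. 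Each pair contributes exactly one $1$ on the event $x_1\ne x_2$, so $u=j$ throughout. Combining this with \eqref{eq:grigoriev-formal-evaluation-preview} applied on the remaining variables gives
\[
 \mathcal G_m(p^2)=2^j\,\frac{(t)_j\,(m-t)_j}{(m)_{2j}}\;q(t-j)^2 .
\]

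\emph{Step 3: conclude.} With $t=m/2$ we have $(t)_j=(m-t)_j=\prod_{i<j}(m/2-i)>0$, because $j\le\lfloor m/2\rfloor<m/2$. Also $(m)_{2j}>0$, since $2j\le m$. Hence the prefactor is positive and $\mathcal G_m(p^2)\ge0$.

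The main obstacle is the conditioning identity in Step 2, together with the degree bookkeeping needed to apply \eqref{eq:grigoriev-formal-evaluation-preview} on the reduced variable set. That evaluation is valid for polynomials of degree at most $m-2j$, and $\deg q^2\le 2(k-j)\le m-2j$, so the bookkeeping closes. For the conditioning identity, I would verify it directly on monomials as a falling-factorial identity, $(t)_{u+s}=(t)_u(t-u)_s$ and its analogue for $m$, then extend by linearity. A secondary point is justifying the spanning claim in Step 1; I would cite the standard Specht-module description of harmonic multilinear polynomials rather than reprove it.
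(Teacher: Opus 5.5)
Your proof is correct. It differs from the paper mainly in that the paper gives no argument at all for this statement. It imports it as Grigoriev's Lemma~1.4, points to a Lean formalization, and uses it only as a black box to conclude that $\mathcal D_m$ is a pseudo-density. You instead give a self-contained symmetry reduction:
\begin{itemize}
 \item split $V_k$ into its $(m-j,j)$-isotypic blocks;
 \item test each block on the slice $U_j=\{u\,q(H'):\deg q\le k-j\}$ with $u=\prod_{i\le j}(X_{2i-1}-X_{2i})$;
 \item evaluate exactly.
\end{itemize}
This buys an explicit block diagonalization, $\mathcal G_m(p^2)=2^j(t)_j^2\,q(t-j)^2/(m)_{2j}$ on $U_j$. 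It also shows exactly where $\deg p\le\lfloor m/2\rfloor$ is used: $(t)_j>0$ needs $j<t$, and applying the formal evaluation to $q^2$ needs $2(k-j)\le m-2j$. The price is Young's rule and absolute irreducibility of Specht modules, which the paper avoids by citing the result.

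Two steps should be written out rather than cited.

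\emph{Step~1.} You do not need a harmonic-polynomial description. In the Boolean quotient $u(X_{2i-1}+X_{2i})=u$, so $u\,q(H')=u\,q(H-j)$. Hence $v\mapsto v\,q(H-j)$ is an $S_m$-equivariant map from the Specht module $S^{(m-j,j)}\ni u$ into $V_k$. These $k-j+1$ maps are linearly independent: evaluate at points with $u=1$ and $H'=0,\dots,m-2j$. By Young's rule, $k-j+1$ is exactly the multiplicity of $S^{(m-j,j)}$ in $V_k=\bigoplus_{d\le k}M^{(m-d,d)}$. So $U_j$ is precisely a slice $u\otimes\mathbb R^{k-j+1}$ of the isotypic block, which is what your tensor-product argument requires.

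\emph{Step~2.} The identity $(t)_{u+s}=(t)_u(t-u)_s$ handles only the coordinates fixed to one. The coordinates fixed to zero need
\[
 \frac{(t)_a(m-t)_{v}}{(m)_{a+v}}-\frac{(t)_{a+1}(m-t)_{v}}{(m)_{a+v+1}}=\frac{(t)_a(m-t)_{v+1}}{(m)_{a+v+1}},
\]
iterated by induction on the size of the zero set. Equivalently, both sides of your conditioning identity are polynomials of degree at most $m$ in $t$ that agree with the hypergeometric layer probabilities at $t=0,\dots,m$.

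Finally, $t-j=(m-2j)/2$, so the reduced functional is literally $\mathcal G_{m-2j}$. Thus \eqref{eq:grigoriev-formal-evaluation-preview} applies verbatim, including in the edge case $m-2j=1$.
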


\noindent\emph{Lean:}
\leanref{Disentangler/ManuscriptAudit.lean}{97}{grigoriev\_positivity}.

This is the positivity statement in Grigoriev's knapsack lower
bound~\cite[Lem.~1.4]{grigorievComplexityPositivstellensatzProofs2001}.
Lee, Raghavendra, and Steurer represented $\mathcal G_m$ by the following
explicit density on the Boolean cube using Lagrange
interpolation~\cite[Sec.~5.1 and Thm.~5.3]{leeLowerBoundsSize2015}.  They bounded the
infinity norm of this density by $O(m^{3/2})$.  Lee, Prakash, de Wolf, and
Yuen later pointed out that correcting this calculation improves the bound from order $m^{3/2}$ to
$m^{1/2}$~\cite[Sec.~1.3.1, n.~3]{leeSumSquaresDegree2016}. We make this calculation explicit 
down below.

For $w\in\{0,\ldots,m\}$, let $c_w$ be the Lagrange polynomial from
\eqref{eq:lagrange-basis-prelim}, and define
\begin{equation}
 \mathcal D_m(x)=
 2^m\frac{c_{\abs x}(m/2)}{\binom m{\abs x}}.
 \label{eq:explicit-density}
\end{equation}
The function is constant on every Hamming-weight layer.  It is generally not
pointwise nonnegative.

\begin{lemma}[LRS pseudo-density with an asymptotically tight norm bound]
\label[lemma]{lem:explicit-pseudodensity}
The function $\mathcal D_m$ is a degree-$m$ pseudo-density and satisfies
\begin{align}
 \norm{\mathcal D_m}_\infty&\le\sqrt m,
 \label{eq:explicit-norm}\\
 \E_x\mathcal D_m(x)
 \left(\abs x-\frac m2\right)^2&=0.
 \label{eq:explicit-annihilate}
\end{align}
Consequently,
\[
 \E_x\mathcal D_m(x)f_m(x)=-\frac1{4m^2}.
\]
\end{lemma}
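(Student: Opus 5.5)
The plan is to identify the expectation functional $g\mapsto\E_x\mathcal D_m(x)g(x)$ with Grigoriev's functional $\mathcal G_m$ on \emph{all} multilinear polynomials. Once that is done, every claim except the norm bound follows from \cref{thm:grigoriev} and the formal-evaluation identity \eqref{eq:grigoriev-formal-evaluation-preview}.

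\emph{Step 1: moments of $\mathcal D_m$.} Fix $S\subseteq[m]$ with $\abs S=j$. Group the cube by Hamming weight $w$ using \eqref{eq:layer-average}. On layer $w$ the density is the constant $2^m c_w(m/2)/\binom mw$. The layer has probability $2^{-m}\binom mw$, and the fraction of its points with $x^S=1$ is $\binom{m-j}{w-j}/\binom mw=\binom wj/\binom mj$. Therefore
\[
\E_x\mathcal D_m(x)x^S=\frac1{\binom mj}\sum_{w=0}^m c_w(m/2)\binom wj=\frac{\binom{m/2}{j}}{\binom mj}=\mathcal G_m(X^S).
\]
The middle equality is Lagrange interpolation \eqref{eq:lagrange-interpolation} applied to $p(w)=\binom wj$, which has degree $j\le m$. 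By linearity, $\E\mathcal D_m g=\mathcal G_m(g)$ for every multilinear $g$.

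\emph{Step 2: the algebraic claims.} Taking $S=\emptyset$ gives $\E\mathcal D_m=1$. Now let $g$ have degree at most $\lfloor m/2\rfloor$. Its square multilinearizes to a polynomial of degree at most $m$, so Step 1 and \cref{thm:grigoriev} give $\E\mathcal D_m g^2=\mathcal G_m(g^2)\ge0$. Hence $\mathcal D_m$ is a degree-$m$ pseudo-density. For the formal-evaluation identity, expand a univariate $p$ of degree at most $m$ in the falling-factorial basis. Then \eqref{eq:falling-factorial-boolean} and \eqref{eq:grigoriev-moments} give $\mathcal G_m((H)_j)=j!\binom mj\binom tj/\binom mj=(t)_j$, so $\mathcal G_m(p(H))=p(t)$. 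Applying this to $p(H)=(H-m/2)^2$ yields \eqref{eq:explicit-annihilate}. Since $f_m(x)=\bigl[(\abs x-m/2)^2-1/4\bigr]/m^2$, the same identity gives $\E\mathcal D_mf_m=-1/(4m^2)$.

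\emph{Step 3: the norm bound.} This is the step I expect to be the main obstacle; the plan is a closed-form evaluation. Write $m=2k+1$ and $t=m/2$. For the numerator of $c_w(t)$,
\[
\prod_{j\ne w}\abs{t-j}=\frac{\prod_{j=0}^m\abs{t-j}}{\abs{t-w}},
\qquad
\prod_{j=0}^m\abs{t-j}=\Bigl(\prod_{i=0}^k\tfrac{2i+1}2\Bigr)^2=\frac{(m!!)^2}{4^{k+1}},
\]
because the distances $\abs{t-j}$ are the half-integers $1/2,3/2,\ldots,m/2$, each occurring twice. The denominator is $\prod_{j\ne w}\abs{w-j}=w!\,(m-w)!$. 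Dividing $c_w(t)$ by $\binom mw$ cancels $w!(m-w)!$, and since $2^m/4^{k+1}=1/2$ we get
\[
\abs{\mathcal D_m(x)}=\frac{(m!!)^2}{2\abs{t-w}\,m!}.
\]
This is maximized at $\abs{t-w}=1/2$, where it equals $(m!!)^2/m!=(2k+1)!!/(2k)!!=:r_k$.

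It remains to show $r_k^2\le 2k+1$, which I will prove by induction on $k$. The base case is $r_0=1$. For the inductive step,
\[
r_{k+1}^2=r_k^2\,\frac{(2k+3)^2}{(2k+2)^2}\le\frac{(2k+1)(2k+3)}{(2k+2)^2}\,(2k+3)\le 2k+3,
\]
using $(2k+1)(2k+3)\le(2k+2)^2$. This gives $\norm{\mathcal D_m}_\infty\le\sqrt m$. The only real care needed is the exact bookkeeping of the half-integer product and the factor $2^m/4^{k+1}$.
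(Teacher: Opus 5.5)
Your proposal is correct and follows essentially the same route as the paper. It uses the moment identity $\E\mathcal D_m x^S=\mathcal G_m(X^S)$ via layer grouping and Lagrange interpolation, then positivity from \cref{thm:grigoriev}, then the closed form $\abs{\mathcal D_m(x)}=(m!!)^2/(2\abs{t-w}\,m!)$ with the same induction: your $r_k$ is the paper's $m\,a_r$, and the key inequality $(2k+1)(2k+3)\le(2k+2)^2$ is identical. The only minor difference is that you derive \eqref{eq:explicit-annihilate} from the general falling-factorial identity $\mathcal G_m(p(H))=p(m/2)$, which the paper proves later in \cref{prop:amplified-rank}, whereas the paper computes $\mathcal G_m(H)$ and $\mathcal G_m(H^2)$ directly at this point.
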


\noindent\emph{Lean:}
\leanref{Disentangler/ManuscriptAudit.lean}{106}{explicit\_pseudodensity}.

\begin{proof}
For completeness, we first verify the LRS representation identity.  We then
give the central-layer calculation and the sharper norm estimate.

\paragraph{Representation of $\mathcal G_m$.}
Let $S\subseteq[m]$ and write $\ell=\abs S$.  The function
$\mathcal D_m(x)$ depends only on $\abs x$, so by permutation symmetry all
monomials of degree $\ell$ have the same weighted expectation.  Since
\[
 \sum_{\substack{R\subseteq[m]\\\abs R=\ell}}x^R
 =\binom{\abs x}{\ell},
\]
we obtain
\begin{equation}
 \E_x\mathcal D_m(x)x^S
 = \E_x \mathcal D_m(x) \frac1{\binom m\ell}\sum_{\substack{R\subseteq[m]\\\abs R=\ell}}x^R
 =\frac1{\binom m\ell}
   \E_x\mathcal D_m(x)\binom{\abs x}{\ell}.
 \label{eq:monomial-to-layer-average}
\end{equation}
Now group the expectation by Hamming-weight layers.  Using
\eqref{eq:explicit-density}, the factor $2^m$ cancels the uniform probability
$2^{-m}$ and the factor $\binom mw$ cancels the number of strings in the
layer $\abs x=w$.  Therefore
\begin{align}
 \E_x\mathcal D_m(x)\binom{\abs x}{\ell}
 &=2^{-m}\sum_{w=0}^m
   \binom mw
   \left(2^m\frac{c_w(t)}{\binom mw}\right)
   \binom w\ell
 \notag\\
 &=\sum_{w=0}^m c_w(t)\binom w\ell
 =\binom t\ell.
 \label{eq:lagrange-layer-computation}
\end{align}
The last equality is Lagrange interpolation
\eqref{eq:lagrange-interpolation}, applied to the degree-$\ell$ polynomial
$w\mapsto\binom w\ell$.  Combining
\eqref{eq:monomial-to-layer-average} and
\eqref{eq:lagrange-layer-computation} gives
\[
 \E_x\mathcal D_m(x)x^S
 =\frac{\binom t\ell}{\binom m\ell}
 =\mathcal G_m(X^S).
\]
By linearity,
\begin{equation}
 \E_x\mathcal D_m(x)p(x)=\mathcal G_m(p)
 \label{eq:density-represents-G}
\end{equation}
for every multilinear polynomial $p$.

Taking $p=1$ in \eqref{eq:density-represents-G} gives
$\E\mathcal D_m=1$.  If $g$ has degree at most $\lfloor m/2\rfloor$, then
\Cref{thm:grigoriev} and
\eqref{eq:density-represents-G} give
\[
 \E_x\mathcal D_m(x)g(x)^2
 =\mathcal G_m(g^2)\ge0.
\]
Thus $\mathcal D_m$ is a degree-$m$ pseudo-density.

\paragraph{Evaluation on the formal central layer.}
Let $H=\sum_iX_i$.  From \eqref{eq:grigoriev-moments},
\[
 \mathcal G_m(H)
 =\sum_{i=1}^m\mathcal G_m(X_i)
 =m\frac tm=t.
\]
In the Boolean quotient,
\[
 H^2=H+2\sum_{1\le i<j\le m}X_iX_j.
\]
Consequently,
\begin{align*}
 \mathcal G_m(H^2)
 &=t+2\binom m2
   \frac{\binom t2}{\binom m2}\\
 &=t+2\binom t2
 =t+t(t-1)
 =t^2.
\end{align*}
It follows that
\[
 \mathcal G_m((H-t)^2)
 =\mathcal G_m(H^2)-2t\mathcal G_m(H)+t^2
 =0.
\]
Using \eqref{eq:density-represents-G} proves
\eqref{eq:explicit-annihilate}.  Substituting the definition
\eqref{eq:f-def} then gives
\[
 \E_x\mathcal D_m(x)f_m(x)
 =\frac1{m^2}\left(0-\frac14\right)
 =-\frac1{4m^2}.
\]

\paragraph{Infinity-norm estimate.}
Write $m=2r+1$ and let $w=\abs x$.  Starting from the definition of the
Lagrange basis,
\[
 c_w(t)
 =\frac{\prod_{j=0}^m(t-j)}
        {(t-w)\prod_{j\ne w}(w-j)}.
\]
Since
\[
 \abs*{\prod_{j\ne w}(w-j)}=w!(m-w)!,
 \qquad
 \binom mw=\frac{m!}{w!(m-w)!},
\]
we obtain
\begin{equation}
 \abs{\mathcal D_m(x)}
 =\frac{2^m}{m!}
   \frac{\abs*{\prod_{j=0}^m(t-j)}}{\abs{t-w}}.
 \label{eq:density-norm-product}
\end{equation}
The factors $t-j$ are the half-integers
$r+\tfrac12,r-\tfrac12,\ldots,-r-\tfrac12$, so
\[
 \abs*{\prod_{j=0}^m(t-j)}
 =\left(\prod_{k=0}^r\left(k+\frac12\right)\right)^2
 =\left(\prod_{k=0}^r\left(\frac{2k+1}{2}\right)\right)^2
 =\frac{\left(\prod_{k=0}^r\left({2k+1}\right)\right)^2}{2^{m+1}}.
\]
Also $\abs{t-w}=\abs{m-2w}/2$.  Substituting these two identities into
\eqref{eq:density-norm-product} yields
\begin{align}
 \abs{\mathcal D_m(x)}
 &=\frac{\left(\prod_{k=0}^r\left({2k+1}\right)\right)^2}{m!\abs{m-2w}}
 \notag\\
 &=m\left(\prod_{j=1}^r\frac{2j-1}{2j}\right)
   \frac1{\abs{m-2w}}.
 \label{eq:density-explicit-absolute}
\end{align}
Because $m$ is odd, $\abs{m-2w}\ge1$.  It remains to bound the middle factor.  Set
\[
 a_r=\prod_{j=1}^r\frac{2j-1}{2j}.
\]
We claim $a_r\le(2r+1)^{-1/2}$.  The claim is true at $r=0$.  If it holds at
$r$, then
\[
 a_{r+1}
 =a_r\frac{2r+1}{2r+2}
 \le\frac1{\sqrt{2r+1}}\frac{2r+1}{2r+2}
 \le\frac1{\sqrt{2r+3}},
\]
where the last inequality is equivalent, after squaring, to
$(2r+1)(2r+3)\le(2r+2)^2$.  Hence
\[
 \abs{\mathcal D_m(x)}
 \le m\cdot\frac1{\sqrt m}\cdot1
 =\sqrt m.
\]
This proves \eqref{eq:explicit-norm}.  The asymptotic order is tight for this
pseudo-density.  When $w=r$ or $w=r+1$,
$\abs{m-2w}=1$, and
\[
 a_r=\prod_{j=1}^r\frac{2j-1}{2j}=\Theta(m^{-1/2}).
\]
\Cref{eq:density-explicit-absolute} therefore gives
$\norm{\mathcal D_m}_\infty=\Theta(\sqrt m)$.
\end{proof}

\begin{proposition}[Explicit shifted pattern-matrix bound]
\label[proposition]{prop:explicit-rank}
There are universal constants $c_1,c_2>0$ such that, for odd $m\ge3$,
$N>2m$, and $0\le b\le3/(16m^2)$,
\begin{equation}
 \rankpsd^{\R}(M_{N,m,b})
 \ge
 c_1m^{-17/4}
 \left(
  \frac{c_2N}{m^{11/2}\log N}
 \right)^{m/4}.
 \label{eq:explicit-rank}
\end{equation}
\end{proposition}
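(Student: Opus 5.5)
The plan is to apply \Cref{thm:lrs} in its contrapositive form \eqref{eq:lrs-contrapositive}, using the pseudo-density $\mathcal D=\mathcal D_m$ from \cref{lem:explicit-pseudodensity} with degree $r=m$. This is admissible: $m\ge3>1$ and $N>2m$. The theorem requires $\norm Z_\infty\le1$, so I will work with a rescaled matrix $Z=2M_{N,m,b}$. Positive scaling does not change PSD rank, so $\rankpsd^{\R}(Z)=\rankpsd^{\R}(M_{N,m,b})$. The remaining work is to check the normalization, compute $L_{\mathcal D}(Z)$, choose $\eta$, and substitute.

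\emph{Normalization.} On Boolean inputs $0\le f_m(z)\le\bigl((m/2)^2-1/4\bigr)/m^2\le1/4$. Since $m\ge3$, we have $b\le3/(16m^2)\le1/48$. Hence $0\le Z\le 2(1/4+1/48)<1$.

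\emph{Lower bound on $\norm Z_1$.} For uniform $x\in\{0,1\}^N$ and any fixed $S$, the restriction $x_S$ is uniform on $\{0,1\}^m$. Using the binomial variance \eqref{eq:binomial-first-two-moments},
\[
\norm Z_1=\E Z\ge2\E f_m=\frac{2}{m^2}\Bigl(\frac m4-\frac14\Bigr)\ge\frac1{3m}.
\]
This gives $\sqrt{\norm Z_1}\gtrsim m^{-1/2}$.

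\emph{Negative value of the functional.} By the same uniformity of $x_S$, together with $\E\mathcal D_m=1$ and \cref{lem:explicit-pseudodensity},
\[
L_{\mathcal D}(Z)=2\Bigl(-\frac1{4m^2}+b\Bigr)\le2\Bigl(-\frac1{4m^2}+\frac3{16m^2}\Bigr)=-\frac1{8m^2}.
\]
Take $\eta=1/(16m^2)\in(0,1]$. Then $L_{\mathcal D}(Z)<-\eta$, so the hypothesis \eqref{eq:lrs-hypothesis} fails and \eqref{eq:lrs-contrapositive} holds.

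\emph{Substitution.} Insert $\norm{\mathcal D_m}_\infty\le\sqrt m$ and $r=m$ into \eqref{eq:lrs-contrapositive}. The base of the $r/4$ power becomes
\[
\frac{c_{\mathrm{LRS}}N}{16\,m^{2}\cdot m\cdot m^2\sqrt m\,\log N}=\frac{c_{\mathrm{LRS}}N}{16\,m^{11/2}\log N}.
\]
The factor $(\eta/\norm{\mathcal D_m}_\infty)^{3/2}$ is at least $(16m^{5/2})^{-3/2}=m^{-15/4}/64$. Multiplying by $\sqrt{\norm Z_1}\ge(3m)^{-1/2}$ gives the prefactor $c_1m^{-17/4}$. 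Thus \eqref{eq:explicit-rank} holds with $c_2=c_{\mathrm{LRS}}/16$ and $c_1=1/(64\sqrt3)$.

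There is no genuinely hard step here. The one point to handle carefully is the bookkeeping: replacing $\norm{\mathcal D}_\infty$ by its upper bound $\sqrt m$ must only weaken the lower bound. It does, because $\norm{\mathcal D}_\infty$ appears only in denominators of \eqref{eq:lrs-contrapositive}. Likewise, the rescaling by $2$ must be tracked consistently in $L_{\mathcal D}$, in $\norm Z_1$, and in $\eta$.
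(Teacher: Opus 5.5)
Your proof is correct and follows the paper's argument: the contrapositive of \cref{thm:lrs} with $\mathcal D_m$ at pseudo-degree $r=m$, the bound $\norm{\mathcal D_m}_\infty\le\sqrt m$, the lower bound on $\norm{\cdot}_1$ from the binomial variance, and the functional value $-1/(4m^2)+b$. The only difference is that the paper applies the theorem directly to $M_{N,m,b}$, which is already entrywise below $1$, with $\eta=1/(32m^2)$; your factor-$2$ rescaling is therefore not needed for normalization, though it is harmless and affects only the constants.
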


\begin{proof}
Let $\mathcal D=\mathcal D_m$.  For each fixed row $S$, the restriction
$x_S$ of a uniform $x\in\{0,1\}^N$ is uniform on $\{0,1\}^m$.  Therefore
\cref{lem:explicit-pseudodensity} gives
\begin{align}
 L_{\mathcal D}(M_{N,m,b})
 &=\E_y\mathcal D(y)\bigl(f_m(y)+b\bigr)
 \notag\\
 &=-\frac1{4m^2}+b
 \le-\frac1{16m^2}.
 \label{eq:explicit-functional-gap}
\end{align}

We next verify the normalization required by \cref{thm:lrs}.  The
maximum of $f_m$ on the Boolean cube occurs at Hamming weight $0$ or $m$, so
\[
 0\le f_m\le\frac{m^2-1}{4m^2}.
\]
Together with $b\le3/(16m^2)$, this implies
$0\le M_{N,m,b}<1$.  Since the matrix is nonnegative,
$\norm{M_{N,m,b}}_1$ is simply its average entry.  If
$H\sim\Bin(m,1/2)$, then by
\eqref{eq:binomial-first-two-moments},
\begin{align}
 \E f_m
 &=\frac1{m^2}
   \left[
    \E\left(H-\frac m2\right)^2-\frac14
   \right]
 \notag\\
 &=\frac1{m^2}\left(\frac m4-\frac14\right)
 =\frac{m-1}{4m^2}.
 \label{eq:average-fm}
\end{align}
Hence, for $m\ge3$,
\begin{equation}
 \norm{M_{N,m,b}}_1
 =\frac{m-1}{4m^2}+b
 \ge\frac1{8m}.
 \label{eq:pattern-l1-lower}
\end{equation}

Apply the contrapositive form \eqref{eq:lrs-contrapositive} with pseudo-degree
$r=m$ and
\[
 \eta=\frac1{32m^2}.
\]
The functional value in \eqref{eq:explicit-functional-gap} is strictly less
than $-\eta$, so
\begin{equation}
 \rankpsd^{\R}(M_{N,m,b})
 >
 \sqrt{\norm{M_{N,m,b}}_1}
 \left(
  \frac{c_{\mathrm{LRS}}\eta N}
       {m^3\norm{\mathcal D}_\infty\log N}
 \right)^{m/4}
 \left(
  \frac{\eta}{\norm{\mathcal D}_\infty}
 \right)^{3/2}.
 \label{eq:explicit-lrs-substitution}
\end{equation}
Using $\norm{\mathcal D}_\infty\le\sqrt m$, the expression inside the
$m/4$ power is at least a universal constant times
\[
 \frac{N}{m^{2+3+1/2}\log N}
 =\frac{N}{m^{11/2}\log N}.
\]
The factors outside that power satisfy, up to universal constants,
\[
 \sqrt{\norm{M_{N,m,b}}_1}
 \left(\frac{\eta}{\norm{\mathcal D}_\infty}\right)^{3/2}
 \ge
 m^{-1/2}\left(m^{-2}m^{-1/2}\right)^{3/2}
 =m^{-17/4}.
\]
Absorbing numerical constants into $c_1$ and $c_2$ proves
\eqref{eq:explicit-rank}.
\end{proof}

\subsection{Central-knapsack Chebyshev--Hadamard amplification}

The preceding proposition uses the formal value
$-1/(4m^2)$ only once.  We now apply a polynomial that is bounded on the true
entry range $[0,1]$ but grows quadratically in its degree at a nearby negative
point.  A special square factorization of that polynomial keeps the resulting
PSD-rank cost under control.

For odd $\ell=2s+1$, define
\begin{equation}
 P_\ell(u)=\frac{1+T_\ell(2u-1)}2,
 \qquad
 Q_\ell(u)=U_s(2u-1)-U_{s-1}(2u-1),
 \label{eq:chebyshev-def}
\end{equation}
where $U_{-1}=0$.  The key identity is
\begin{equation}
 P_\ell(u)=uQ_\ell(u)^2.
 \label{eq:chebyshev-square}
\end{equation}
To verify it, first let $2u-1=\cos\varphi$.  Then
$u=\cos^2(\varphi/2)$, and the sine subtraction formula gives
\begin{align*}
 Q_\ell(u)
 &=\frac{\sin((s+1)\varphi)-\sin(s\varphi)}{\sin\varphi}\\
 &=\frac{\cos(\ell\varphi/2)}{\cos(\varphi/2)}.
\end{align*}
Consequently,
\[
 uQ_\ell(u)^2
 =\cos^2(\ell\varphi/2)
 =\frac{1+\cos(\ell\varphi)}2
 =P_\ell(u).
\]
The calculation holds away from the removable endpoint singularities and
therefore, by polynomial identity, holds for every $u$.  It also shows that
$P_\ell$ maps $[0,1]$ into $[0,1]$ and that
$\deg Q_\ell=(\ell-1)/2$.

For a matrix $B$, the notation $P_\ell[B]$ means entrywise polynomial
evaluation, not ordinary matrix functional calculus.  Let $\mathbf J$ denote
the all-ones matrix of the same shape as $M_{N,m,b}$, and put
\begin{equation}
 g_{m,\ell}=\frac{\ell^2}{16m^2},
 \qquad
 Z_{N,m,b,\ell}
 =\frac{P_\ell[M_{N,m,b}]+(g_{m,\ell}/2)\mathbf J}
        {1+g_{m,\ell}/2}.
 \label{eq:amplified-matrix}
\end{equation}
The small positive constant added in the numerator ensures that the average
entry is not too small, while retaining a negative pseudo-expectation.

\begin{proposition}[Amplified shifted pattern-matrix bound]
\label[proposition]{prop:amplified-rank}
Let $m\ge3$ and $\ell\ge1$ be odd, let $2\ell\le m$, let $N>2m$, and let
$0\le b\le3/(16m^2)$.  Then $0\le Z_{N,m,b,\ell}\le1$ and
\begin{equation}
 \rankpsd^{\R}(Z_{N,m,b,\ell})
 >
 \frac1{4096}\ell^4m^{-19/4}
 \left(
  \frac{c_{\mathrm{LRS}}\ell^2N}
       {64m^{11/2}\log N}
 \right)^{m/4}.
 \label{eq:amplified-rank-lower}
\end{equation}
Moreover,
\begin{equation}
 \rankpsd^{\R}(Z_{N,m,b,\ell})
 \le
 \rankpsd^{\R}(M_{N,m,b})
 \sum_{j=0}^{\ell-1}N^j+1.
 \label{eq:amplified-rank-upper}
\end{equation}
\end{proposition}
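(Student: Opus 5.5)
We prove the three claims separately: entry bounds, the lower bound \eqref{eq:amplified-rank-lower} via the contrapositive \eqref{eq:lrs-contrapositive} with the pseudo-density $\mathcal D=\mathcal D_m$ of \cref{lem:explicit-pseudodensity} (degree $r=m$), and the upper bound \eqref{eq:amplified-rank-upper} via the square identity \eqref{eq:chebyshev-square}.

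\emph{Entry range.} In the proof of \cref{prop:explicit-rank} we saw $0\le M_{N,m,b}<1$. Since $P_\ell$ maps $[0,1]$ into $[0,1]$, every entry of $P_\ell[M_{N,m,b}]+(g/2)\mathbf J$ lies in $[0,1+g/2]$, where $g=g_{m,\ell}$. Hence $0\le Z\le1$. Because $2\ell\le m$, we have $g\le1/64$, so $1+g/2\le2$. Also $\norm{Z}_1\ge (g/2)/(1+g/2)\ge g/4$.

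\emph{Negative functional value.} For fixed $S$, the entry $P_\ell(f_m(y)+b)$ with $y=x_S$ uniform is a polynomial in $H=\abs y$ of degree $2\ell\le m$. By \eqref{eq:density-represents-G} and \eqref{eq:grigoriev-formal-evaluation-preview}, its $\mathcal D$-expectation is its value at $H=m/2$, namely $P_\ell(u_0)$ with
\[
 u_0=b-\tfrac1{4m^2}\in\bigl[-\tfrac1{4m^2},-\tfrac1{16m^2}\bigr].
\]
Write $2u_0-1=-\cosh y$ with $\cosh y-1=2\abs{u_0}$. Since $\ell$ is odd, $T_\ell$ is odd, and \eqref{eq:chebyshev-hyperbolic} gives
\[
 P_\ell(u_0)=\frac{1-\cosh(\ell y)}{2}\le-\frac{\ell^2y^2}{4}.
\]
From $\cosh y-1\le (y^2/2)\cosh y$ and $\cosh y\le 1+1/(2m^2)$ we get $y^2\ge 4\abs{u_0}/\cosh y$, which is close to $4\abs{u_0}\ge 1/(4m^2)$. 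This yields $P_\ell(u_0)\le -c\,\ell^2/(16m^2)$ for a constant $c$ close to $1$.

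This is the delicate bookkeeping step. I expect it to give $P_\ell(u_0)\le -g$ once the crude bound $\cosh(\ell y)-1\ge \ell^2y^2/2$ is used together with the small correction from $\cosh y$. If the constant falls slightly short, I would instead take $\eta=g/8$, which only changes numerical constants. Then
\[
 L_{\mathcal D}(Z)=\frac{P_\ell(u_0)+g/2}{1+g/2}\le-\frac{g}{4},
\]
and we apply \eqref{eq:lrs-contrapositive} with $\eta=g/8=\ell^2/(128m^2)$, $\norm{\mathcal D}_\infty\le\sqrt m$ and $\norm Z_1\ge g/4$. The factor inside the $m/4$ power becomes $c_{\mathrm{LRS}}\ell^2N/(128\,m^{11/2}\log N)$, up to the constant $64$ versus $128$, which I would tune via the choice of $\eta$. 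The prefactor $\sqrt{g/4}\,(\eta/\sqrt m)^{3/2}$ is of order $\ell^4m^{-19/4}$ with an explicit numerical constant like $1/4096$.

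\emph{Upper bound.} By \eqref{eq:chebyshev-square}, $P_\ell[M]=M\odot(Q_\ell[M])^{\odot2}$. Each entry $f_m(x_S)+b$ is a polynomial in $x\in\{0,1\}^N$ of degree at most $2$, so $Q_\ell[M](S,x)$ is, after multilinearization, a polynomial in $x$ of degree at most $2\deg Q_\ell=\ell-1$. Its coefficients depend on $S$. Hence $Q_\ell[M]$ factors through the vector of monomials $x^R$ with $\abs R\le\ell-1$, so its ordinary rank is at most $\sum_{j=0}^{\ell-1}N^j$. Then \eqref{eq:squared-rank-factorization} bounds the PSD rank of the square, \eqref{eq:psdrank-hadamard} handles the product with $M$, and \eqref{eq:psdrank-sum} together with $\rankpsd(\mathbf J)=1$ and invariance under positive scaling gives \eqref{eq:amplified-rank-upper}.
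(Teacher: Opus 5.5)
Your plan follows the paper's proof step for step: the pseudo-density $\mathcal D_m$ with $r=m$, formal-layer evaluation of $P_\ell(f_m(H)+b)$, the added $(g/2)\mathbf J$ to secure $\ell_1$ mass, and the bound via $M\odot(Q_\ell[M])^{\odot2}$. The upper-bound part is carried out correctly.

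\textbf{The gap.} The one real problem is the Chebyshev growth estimate. It matters because \eqref{eq:amplified-rank-lower} has explicit constants.
\begin{itemize}
\item Your chain $\cosh(\ell y)-1\ge\ell^2y^2/2$ plus $y^2\ge4\abs{u_0}/\cosh y$ only gives $P_\ell(u_0)\le-g/\cosh y$, not $-g$.
\item Your fallback $\eta=g/8$ puts $128$ instead of $64$ inside the $m/4$ power. It also shrinks the prefactor by a factor $2^{3/2}$, to about $\ell^4/(11585\,m^{19/4})$.
\item That is strictly weaker than the claimed inequality. No later ``tuning'' recovers it unless you actually show $L_{\mathcal D}(Z)<-g/4$, so that $\eta=g/4$ is admissible.
\end{itemize}

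\textbf{The fix.} Use $\cosh(\ell y)-1\ge\ell^2(\cosh y-1)$, which holds termwise in the Taylor series because $\ell^{2k}\ge\ell^2$ for $k\ge1$. The paper gets the same thing by writing $2u_*-1=-\cosh(2y)$, so that $P_\ell(u_*)=-\sinh^2(\ell y)$, and then using $\sinh(\ell y)\ge\ell\sinh y$ by convexity. Either way:
\begin{itemize}
\item $P_\ell(u_0)\le-\ell^2\abs{u_0}\le-g$, hence $L_{\mathcal D}(Z)\le-g/(2+g)<-g/4$.
\item With $\eta=g/4$ and $\norm{Z}_1\ge g/4=\eta$, you get exactly the prefactor $\ell^4/(4096\,m^{19/4})$ and the base $c_{\mathrm{LRS}}\ell^2N/(64\,m^{11/2}\log N)$.
\end{itemize}
In fact your lossy bound would already have sufficed, had you checked it. Since $\cosh y\le1+1/(2m^2)\le19/18$ and $g\le1/64$, it gives $L_{\mathcal D}(Z)\le-0.44\,g<-g/4$.

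\textbf{A smaller omission.} The identity \eqref{eq:grigoriev-formal-evaluation-preview} is stated earlier in the paper but only proved inside this proposition, so you need to justify it. The quickest route is \eqref{eq:explicit-density} plus Lagrange interpolation: $\E_x\mathcal D_m(x)p(\abs x)=\sum_{w}c_w(m/2)p(w)=p(m/2)$ for $\deg p\le m$.
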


\begin{proof}
We first compute the pseudo-expectation of the amplified matrix and then bound
its PSD rank from above.

\paragraph{Formal evaluation of every univariate polynomial in $H$.}
Let $H=\sum_iX_i$ and $t=m/2$.  From
\eqref{eq:falling-factorial-boolean} and
\eqref{eq:grigoriev-moments},
\begin{align*}
 \mathcal G_m((H)_j)
 &=j!\sum_{\abs S=j}\mathcal G_m(X^S)\\
 &=j!\binom mj
   \frac{\binom tj}{\binom mj}\\
 &=j!\binom tj
 =(t)_j.
\end{align*}
Because the falling factorials form a basis for univariate polynomials of
degree at most $m$, every such polynomial $p$ satisfies
\begin{equation}
 \mathcal G_m(p(H))=p(t).
 \label{eq:formal-layer-evaluation}
\end{equation}
Thus $\mathcal G_m$ evaluates low-degree symmetric polynomials exactly as
though $H$ were equal to the nonexistent value $t=m/2$.

\paragraph{Chebyshev growth at the formal negative point.}
At the formal layer, the shifted knapsack polynomial has value
\[
 u_*=b-\frac1{4m^2}\le-\frac1{16m^2}.
\]
Write $v=-u_*>0$ and choose $y\ge0$ so that
$y=\operatorname{arsinh}\sqrt v$.  Then
$\sinh^2y=v$ and
\[
 2u_*-1=-1-2v=-\cosh(2y).
\]
Since $\ell$ is odd, $T_\ell$ is odd; using
\eqref{eq:chebyshev-hyperbolic},
\begin{align}
 P_\ell(u_*)
 &=\frac{1+T_\ell(-\cosh(2y))}{2}
 =\frac{1-\cosh(2\ell y)}2
 \notag\\
 &=-\sinh^2(\ell y).
 \label{eq:chebyshev-negative-exact}
\end{align}
Convexity of $\sinh$ and $\sinh(0)=0$ imply
$\sinh(\ell y)\ge\ell\sinh y$ for every integer $\ell\ge1$.  Since
$v\ge1/(16m^2)$, equation
\eqref{eq:chebyshev-negative-exact} gives
\begin{equation}
 P_\ell(u_*)
 \le-\ell^2v
 \le-\frac{\ell^2}{16m^2}
 =-g_{m,\ell}.
 \label{eq:chebyshev-formal-gap}
\end{equation}

The polynomial $f_m(H)+b$ has degree two in $H$, so
$P_\ell(f_m(H)+b)$ has degree at most $2\ell\le m$.  For each fixed row
$S$, the restriction $x_S$ is uniform on $\{0,1\}^m$.  Therefore
\eqref{eq:density-represents-G} and
\eqref{eq:formal-layer-evaluation} imply
\begin{align}
 L_{\mathcal D_m}(P_\ell[M_{N,m,b}])
 &=\mathcal G_m\bigl(P_\ell(f_m(H)+b)\bigr)
 \notag\\
 &=P_\ell\left(b-\frac1{4m^2}\right)
 =P_\ell(u_*).
 \label{eq:amplified-functional-value}
\end{align}

\paragraph{Normalization and the LRS lower bound.}
On the Boolean cube,
\[
 0\le M_{N,m,b}
 \le\frac{m^2-1}{4m^2}+\frac3{16m^2}<1.
\]
Because $P_\ell$ maps $[0,1]$ into $[0,1]$, equation
\eqref{eq:amplified-matrix} gives
$0\le Z_{N,m,b,\ell}\le1$.  Moreover,
$2\ell\le m$ implies $g_{m,\ell}\le1/64$.  Every entry of
$P_\ell[M]$ is nonnegative, so the added constant matrix gives
\begin{equation}
 \norm{Z_{N,m,b,\ell}}_1
 \ge\frac{g_{m,\ell}/2}{1+g_{m,\ell}/2}
 =\frac{g_{m,\ell}}{2+g_{m,\ell}}
 \ge\frac{g_{m,\ell}}3.
 \label{eq:amplified-l1}
\end{equation}
Since $L_{\mathcal D_m}(\mathbf J)=\E\mathcal D_m=1$, equations
\eqref{eq:chebyshev-formal-gap} and
\eqref{eq:amplified-functional-value} yield
\begin{align}
 L_{\mathcal D_m}(Z_{N,m,b,\ell})
 &\le
 \frac{-g_{m,\ell}+g_{m,\ell}/2}
      {1+g_{m,\ell}/2}
 \notag\\
 &=-\frac{g_{m,\ell}}{2+g_{m,\ell}}
 <-\frac{g_{m,\ell}}4.
 \label{eq:amplified-negative-functional}
\end{align}

Apply \eqref{eq:lrs-contrapositive} with pseudo-degree $r=m$ and
\[
 \eta=\frac{g_{m,\ell}}4
 =\frac{\ell^2}{64m^2}.
\]
By \eqref{eq:amplified-l1}, $\norm Z_1\ge\eta$, and by
\cref{lem:explicit-pseudodensity},
$\norm{\mathcal D_m}_\infty\le\sqrt m$.  Hence
\begin{equation}
 \rankpsd^{\R}(Z_{N,m,b,\ell})
 >
 \sqrt\eta
 \left(
  \frac{c_{\mathrm{LRS}}\eta N}
       {m^{7/2}\log N}
 \right)^{m/4}
 \left(\frac\eta{\sqrt m}\right)^{3/2}.
 \label{eq:amplified-lrs-before-substitution}
\end{equation}
Substituting $\eta=\ell^2/(64m^2)$ gives
\[
 \sqrt\eta
 \left(\frac\eta{\sqrt m}\right)^{3/2}
 =\frac{\ell}{8m}
  \frac{\ell^3}{512m^{15/4}}
 =\frac{\ell^4}{4096m^{19/4}},
\]
and the term inside the $m/4$ power becomes
\[
 \frac{c_{\mathrm{LRS}}\ell^2N}
      {64m^{11/2}\log N}.
\]
This proves \eqref{eq:amplified-rank-lower}.

\paragraph{PSD-rank cost of amplification.}
Put
\[
 C=Q_\ell[M_{N,m,b}].
\]
For a fixed row $S$, the function $x\mapsto M_{N,m,b}(S,x)$ is a Boolean
polynomial of degree two.  Since
$\deg Q_\ell=(\ell-1)/2$, the corresponding row of $C$ has degree at most
$\ell-1$.  It can therefore be expanded as
\[
 C(S,x)=\sum_{\substack{U\subseteq[N]\\\abs U\le\ell-1}}
 a_{S,U}x^U.
\]
All rows lie in the span of the coordinate functions $x\mapsto x^U$ with
$\abs U\le\ell-1$.  Consequently,
\begin{equation}
 \operatorname{rank}(C)
 \le\sum_{j=0}^{\ell-1}\binom Nj
 \le\sum_{j=0}^{\ell-1}N^j
 =:q.
 \label{eq:feature-rank}
\end{equation}
By \eqref{eq:squared-rank-factorization},
$C^{\odot2}$ has real PSD rank at most $q$.  The polynomial identity
\eqref{eq:chebyshev-square} holds entrywise, so
\[
 P_\ell[M_{N,m,b}]
 =M_{N,m,b}\odot C^{\odot2}.
\]
The Hadamard-product inequality \eqref{eq:psdrank-hadamard} now gives
\[
 \rankpsd^{\R}(P_\ell[M_{N,m,b}])
 \le q\rankpsd^{\R}(M_{N,m,b}).
\]
Adding the positive constant matrix costs at most one dimension by
\eqref{eq:psdrank-sum}, and the final positive rescaling does not change PSD
rank.  This proves \eqref{eq:amplified-rank-upper}.
\end{proof}

\section{Product states and bounded block-positive witnesses}
\label{sec:witnesses}

The next proposition constructs the product states and block-positive
operators used in the PSD-rank lower bound.  The state $\sigma_S$ encodes a
subset $S$ by placing half of its mass on a
distinguished basis vector and spreading the remaining half uniformly over
$S$.  The witness $W_T$ reads the intersection size $\abs{S\cap T}$ through
diagonal projectors.  A penalty term forces the relevant complex coordinates
toward the Boolean values $0$ and $1$.  Adding a multiple of $I-F$ then extends
nonnegativity from vectors $z\otimes z$ to arbitrary product vectors.
The important quantitative point is that the final operator norm is $O(m)$,
independent of the ambient dimension $N+1$ and of the size of $T$.

\begin{proposition}[Product states and bounded witnesses]
\label[proposition]{prop:witnesses}
There is a universal constant $C_W>0$ with the following property.  Let
$m\ge3$ be odd, let $N\ge m$, and put $H=\C^{N+1}$.  There are pure product
states
\[
 \sigma_S\in\Sep(H:H),
 \qquad S\in\binom{[N]}m,
\]
and block-positive Hermitian operators
\[
 W_T\in\Herm(H\otimes H),
 \qquad T\subseteq[N],
\]
such that
\begin{align}
 \Tr(W_T\sigma_S)
 &=f_m(\one_T|_S)+\frac1{8m^2},
 \label{eq:witness-entry}\\
 \norm{W_T}_\infty&\le C_Wm.
 \label{eq:witness-norm}
\end{align}
\end{proposition}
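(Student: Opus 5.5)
The plan is to encode each subset $S$ in a symmetric product state, to read the Hamming weight $\abs{S\cap T}$ with a degree-two operator, and to enforce block positivity with a "Booleanity" penalty plus a multiple of $I-F$. Index the basis of $H=\C^{N+1}$ by $\{0\}\cup[N]$. Let $P_0=\proj 0$ and $P_T=\sum_{i\in T}\proj i$, and set
\[
 \ket{\psi_S}=\tfrac1{\sqrt2}\ket0+\tfrac1{\sqrt{2m}}\sum_{i\in S}\ket i,
 \qquad
 \sigma_S=\proj{\psi_S}^{\otimes2}.
\]
With $k=\abs{S\cap T}$ we have $\langle\psi_S|P_0|\psi_S\rangle=1/2$ and $\langle\psi_S|P_T|\psi_S\rangle=k/(2m)$. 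Hence $B_T=(2P_T-P_0)^{\otimes2}$ has expectation $(k/m-1/2)^2$ on $\sigma_S$, and $P_0\otimes P_0$ has expectation $1/4$. The base operator is therefore
\[
 W_T^{(0)}=B_T-\Bigl(\tfrac1{m^2}-\tfrac1{2m^2}\Bigr)P_0\otimes P_0 ,
\]
which gives exactly $f_m(\one_T|_S)+\frac1{8m^2}$ on $\sigma_S$ and has operator norm $O(1)$. It is not block positive: for a general $z\otimes z$ its value is $(2a-c)^2-c^2/(2m^2)$ with $a=\langle z|P_T|z\rangle$ and $c=\langle z|P_0|z\rangle$, and this is negative when $2a\approx c$.

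To repair this on vectors $z\otimes z$, I will add a penalty that vanishes on every $\psi_S\otimes\psi_S$ and forces each coordinate $z_i$ ($i\ge1$) toward $\{0,\,z_0/\sqrt m\}$. Take $\lambda\sum_{i\ge1}\proj{v_i}$ with
\[
 v_i=\ket{ii}-\tfrac1{2\sqrt m}\bigl(\ket{i0}+\ket{0i}\bigr),
\]
so that $\langle z\otimes z|v_i\rangle\propto \bar z_i(\bar z_i-\bar z_0/\sqrt m)$, which vanishes on $\psi_S$ whenever $i\in S$ or $i\notin S$. The $v_i$ are pairwise orthogonal, so the penalty has norm $O(\lambda)$. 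We aim for $\lambda=\Theta(m)$, which yields the bound \eqref{eq:witness-norm}.

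The core estimate is then:
\[
 (2a-c)^2-\frac{c^2}{2m^2}+\lambda\sum_i\bigl|z_i^2-z_iz_0/\sqrt m\bigr|^2\ge0 .
\]
In the exactly Boolean case $\abs{z_i}^2\in\{0,\abs{z_0}^2/m\}$ we get $a=kc/m$ for an integer $k$. Because $m$ is odd, $\abs{2k-m}\ge1$, so $(2a-c)^2\ge c^2/m^2$, which is the same integrality gap as for $f_m$. For non-Boolean $z$ I plan to split coordinates into "near $0$", "near $z_0/\sqrt m$", and "far". Far coordinates are charged to the penalty. The accumulated drift of $a$ away from the lattice $\{kc/m\}$, coming from near coordinates, should be controlled by Cauchy--Schwarz against $\sum_i\abs{z_i}^2\abs{z_i-z_0/\sqrt m}^2$.

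I expect this robust step to be the main obstacle. The difficulty is that the number of nonzero coordinates is unbounded (up to $N$), yet the constants must not depend on $N$ or $\abs T$. The first thing I will try is to compare with the best Boolean rounding, using that each coordinate's contribution to $a$ is already weighted by $\abs{z_i}^2$. I will also use the term $(2a-c)^2$ itself, since it dominates whenever $a$ is far from $c/2$.

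Finally, to pass from $z\otimes z$ to arbitrary $u\otimes v$, I will symmetrize $W$ so that it commutes with $F$ and add $\mu(I-F)$ with $\mu=O(m)$; this term vanishes on $\sigma_S$ because those states are symmetric. For $u\otimes v$, write the symmetric part via polarization as a combination of $z\otimes z$ terms with $z=(u\pm v)/\sqrt2$ (and phases), and bound the cross terms by the antisymmetric weight $\frac12\bigl(\norm u^2\norm v^2-\abs{\langle u,v\rangle}^2\bigr)$. That weight is exactly what $\mu(I-F)$ pays for, and since $\norm{I-F}_\infty=2$ by \eqref{eq:swap-norm}, the total norm stays at $C_Wm$.
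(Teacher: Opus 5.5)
Your construction is the paper's. The states $\psi_S$ are the same, and $(2P_T-P_0)^{\otimes2}-\frac1{2m^2}P_0\otimes P_0$ expands to exactly the non-penalty part of the paper's $B_T$. Your symmetric $v_i$ give the same penalty $\lambda\sum_i\abs{z_i(z_i-z_0/\sqrt m)}^2$ on $z\otimes z$ as the paper's $\widetilde R_T$. Your polarization with $(u\pm v)/\sqrt2$, after making $\langle u,v\rangle$ real, is the paper's orthonormal $y,w$ decomposition leading to $B+L(I-F)$.

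The robust step you leave as a plan is carried out in the paper along your lines:
\begin{enumerate}
\item Rescale $w_i=\sqrt m\,z_i/z_0$.
\item Round every $w_i$ to the nearest $b_i\in\{0,1\}$. The single inequality $\abs{w_i-b_i}^2\le4\abs{w_i(w_i-1)}^2$ covers all coordinates, so no separate ``far'' class is needed.
\item Work in the window $\abs{s-m/2}<1/(2\sqrt2)$, where $s=\sum_{i\in T}\abs{w_i}^2$ and $(2a-c)^2$ cannot pay. There, Cauchy--Schwarz with $\sum_{i\in T}(\abs{w_i}+b_i)^2\le10s<10m$ forces $\sum_{i\in T}\abs{w_i(w_i-1)}^2\ge\vartheta^2/(40m)$, where $\vartheta=\frac12-\frac1{2\sqrt2}$.
\item Hence $\lambda=\Theta(m)$ suffices.
\end{enumerate}
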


\begin{proof}
Fix an orthonormal basis $e_0,e_1,\ldots,e_N$ of $H$.  We proceed in four
steps: define the candidate operator, prove nonnegativity on diagonal product
vectors, extend to arbitrary product vectors, and finally evaluate the
witness on the chosen states.

\paragraph{Step 1: states, penalty vectors, and the norm bound.}
For $S\in\binom{[N]}m$, set
\begin{equation}
 \psi_S=\frac{\sqrt m\,e_0+\sum_{i\in S}e_i}{\sqrt{2m}},
 \qquad
 \sigma_S=\proj{\psi_S}\otimes\proj{\psi_S}.
 \label{eq:balanced-product-state}
\end{equation}
The numerator defining $\psi_S$ has squared norm $m+m=2m$, so
$\psi_S$ is a unit vector and $\sigma_S$ is a pure product state.

For $T\subseteq[N]$, define
\[
 E=\proj{e_0},
 \qquad
 P_T=\sum_{i\in T}\proj{e_i},
\]
and let $F$ be the swap operator on $H\otimes H$.  Introduce the vectors
\[
 r_i=e_i\otimes e_i-\frac1{\sqrt m}e_i\otimes e_0
\]
and the positive semidefinite operators
\[
 R_T=\sum_{i\in T}\proj{r_i},
 \qquad
 \widetilde{R}_T=\frac12(R_T+FR_TF).
\]
For distinct $i,j$, the vectors $r_i$ and $r_j$ are orthogonal, and
\[
 \norm{r_i}^2=1+\frac1m.
\]
Thus $R_T$ has operator norm $1+1/m$.  The same is true of $FR_TF$, and
convexity of the operator norm gives
\begin{equation}
 \norm{\widetilde{R}_T}_\infty
 \le1+\frac1m
 \le\frac43.
 \label{eq:Rtilde-norm}
\end{equation}

Put
\[
 \vartheta=\frac12-\frac1{2\sqrt2}>0,
 \qquad
 \lambda_m=\frac{20m}{\vartheta^2},
\]
and define
\begin{align}
 B_T={}&4\left[
 P_T\otimes P_T
 -\frac12(P_T\otimes E+E\otimes P_T)
 +\frac{m^2-1}{4m^2}E\otimes E
 \right]
 \notag\\
 &\quad+\frac1{2m^2}E\otimes E
 +\lambda_m\widetilde{R}_T.
 \label{eq:B-definition}
\end{align}
The supports of
$P_T\otimes P_T$, $P_T\otimes E$, $E\otimes P_T$, and $E\otimes E$ are
pairwise orthogonal.  After accounting for their coefficients, the operator
in the square brackets, including the outer factor $4$, has norm at most
$4$.  Therefore, by \eqref{eq:Rtilde-norm},
\[
 \norm{B_T}_\infty
 \le4+\frac1{2m^2}+\frac43\lambda_m
 \le\frac{30m}{\vartheta^2}.
\]
The last inequality is deliberately loose but uniform for every $m\ge3$.
Set
\begin{equation}
 L_m=\frac{30m}{\vartheta^2}.
 \label{eq:Lm-definition}
\end{equation}
Every term in \eqref{eq:B-definition} is invariant under interchange of the
two tensor factors, so
\begin{equation}
 [B_T,F]=0,
 \qquad
 \norm{B_T}_\infty\le L_m.
 \label{eq:B-norm}
\end{equation}

\paragraph{Step 2: nonnegativity on vectors $z\otimes z$.}
Write
\[
 z=\sum_{i=0}^Nz_ie_i,
 \qquad
 \alpha=\abs{z_0}^2,
 \qquad
 p=\sum_{i\in T}\abs{z_i}^2.
\]
The diagonal tensor terms have expectations
\begin{align*}
 \inner{z\otimes z}{(P_T\otimes P_T)(z\otimes z)}&=p^2,\\
 \inner{z\otimes z}{(P_T\otimes E)(z\otimes z)}&=p\alpha,\\
 \inner{z\otimes z}{(E\otimes P_T)(z\otimes z)}&=\alpha p,\\
 \inner{z\otimes z}{(E\otimes E)(z\otimes z)}&=\alpha^2.
\end{align*}
Moreover, $F(z\otimes z)=z\otimes z$, so $R_T$ and $FR_TF$ have the same
expectation on this vector.  Since
\[
 \inner{r_i}{z\otimes z}
 =z_i\left(z_i-\frac{z_0}{\sqrt m}\right)
\]
up to complex conjugation, which disappears after taking absolute values, we
obtain
\begin{align}
 \inner{z\otimes z}{B_T(z\otimes z)}
 ={}&4\left[
       p^2-p\alpha+\frac{m^2-1}{4m^2}\alpha^2
      \right]
   +\frac{\alpha^2}{2m^2}
 \notag\\
 &\quad+\lambda_m\sum_{i\in T}
 \abs{z_i\left(z_i-\frac{z_0}{\sqrt m}\right)}^2
 \notag\\
 ={}&4\left[
       \left(p-\frac\alpha2\right)^2
       -\frac{\alpha^2}{4m^2}
      \right]
   +\frac{\alpha^2}{2m^2}
 \notag\\
 &\quad+\lambda_m\sum_{i\in T}
 \abs{z_i\left(z_i-\frac{z_0}{\sqrt m}\right)}^2.
 \label{eq:diagonal-B}
\end{align}
If $\alpha=0$, then
\[
 \inner{z\otimes z}{B_T(z\otimes z)}
 =4p^2+\lambda_m\sum_{i\in T}\abs{z_i}^4\ge0.
\]
Assume henceforth that $\alpha>0$, and rescale the relevant coordinates by
setting
\[
 w_i=\frac{\sqrt m\,z_i}{z_0},
 \qquad
 s=\sum_{i\in T}\abs{w_i}^2,
 \qquad
 Q=\sum_{i\in T}\abs{w_i(w_i-1)}^2.
\]
Then $p=\alpha s/m$, and the penalty term equals
$\alpha^2Q/m^2$.  Substitution into \eqref{eq:diagonal-B} gives
\begin{equation}
 \inner{z\otimes z}{B_T(z\otimes z)}
 =\frac{\alpha^2}{m^2}
 \left\{
  4\left[
    \left(s-\frac m2\right)^2-\frac18
   \right]
  +\lambda_mQ
 \right\}.
 \label{eq:reduced-diagonal-B}
\end{equation}
If the first term in braces is nonnegative, there is nothing to prove.  We
may therefore assume
\begin{equation}
 \abs{s-\frac m2}<\frac1{2\sqrt2}.
 \label{eq:s-window}
\end{equation}
We show that in this narrow window the penalty $Q$ is necessarily large.

For every $i\in T$, choose $b_i\in\{0,1\}$ nearest to $w_i$ in the complex
plane; ties may be broken arbitrarily.  The two distances
$\abs{w_i}$ and $\abs{w_i-1}$ cannot both be smaller than $1/2$, because
their sum is at least the distance between $0$ and $1$.  Therefore
\[
 \max\{\abs{w_i},\abs{w_i-1}\}\ge\frac12.
\]
Since the product of the two distances is $\abs{w_i(w_i-1)}$, the smaller
distance satisfies
\begin{equation}
 \abs{w_i-b_i}^2
 \le4\abs{w_i(w_i-1)}^2.
\end{equation}
Summing over $i$ gives
\begin{equation}
 \sum_{i\in T}\abs{w_i-b_i}^2\le4Q.
 \label{eq:rounding-Q}
\end{equation}
Let $k=\sum_{i\in T}b_i$.  If $b_i=1$, then
$\abs{w_i-1}\le\abs{w_i}$, which implies
$\operatorname{Re}w_i\ge1/2$ and hence $\abs{w_i}\ge1/2$.  It follows that
\begin{equation}
 k\le4\sum_{i\in T}\abs{w_i}^2=4s.
 \label{eq:k-versus-s}
\end{equation}

We now compare the real number $s$ with the integer $k$.  Since
$b_i^2=b_i$,
\begin{align*}
 \abs{s-k}
 &=\abs*{\sum_{i\in T}(\abs{w_i}^2-b_i^2)}\\
 &\le\sum_{i\in T}
   \abs{w_i-b_i}(\abs{w_i}+b_i)\\
 &\le
 \left(\sum_{i\in T}\abs{w_i-b_i}^2\right)^{1/2}
 \left(\sum_{i\in T}(\abs{w_i}+b_i)^2\right)^{1/2}.
\end{align*}
The first factor is at most $\sqrt{4Q}$ by
\eqref{eq:rounding-Q}.  For the second,
\[
 \sum_{i\in T}(\abs{w_i}+b_i)^2
 \le2s+2k
 \le10s
 <10m,
\]
where we used \eqref{eq:k-versus-s} and the fact that
\eqref{eq:s-window} implies $s<m$ for $m\ge3$.  Hence
\begin{equation}
 \abs{s-k}\le\sqrt{40mQ}.
 \label{eq:s-k-Q}
\end{equation}

Because $m/2$ is a half-integer and $k$ is an integer, the window
\eqref{eq:s-window} gives
\[
 \abs{s-k}
 \ge\abs{\frac m2-k}-\abs{s-\frac m2}
 >\frac12-\frac1{2\sqrt2}
 =\vartheta.
\]
Combining this with \eqref{eq:s-k-Q} gives
\begin{equation}
 Q\ge\frac{\vartheta^2}{40m},
 \qquad
 \lambda_mQ\ge\frac12.
 \label{eq:Q-lower}
\end{equation}
The first term in braces in \eqref{eq:reduced-diagonal-B} is always at least
$-1/2$, because the square is nonnegative.  \Cref{eq:Q-lower} shows that the
penalty compensates for this possible
negative contribution.  We have proved
\begin{equation}
 \inner{z\otimes z}{B_T(z\otimes z)}\ge0
 \qquad\text{for every }z\in H.
 \label{eq:diagonal-nonnegative}
\end{equation}

\paragraph{Step 3: extension from $z\otimes z$ to $u\otimes v$.}
Suppose a Hermitian
operator $B$ satisfies
\[
 [B,F]=0,
 \qquad
 \norm B_\infty\le L,
 \qquad
 \inner{z\otimes z}{B(z\otimes z)}\ge0
 \quad\text{for all }z.
\]
Then
\begin{equation}
 B+L(I-F)
 \label{eq:swap-extension}
\end{equation}
is block-positive and has norm at most $3L$.

It suffices to test unit vectors $u,v$.  Change the phase of $v$ so that
$q=\inner uv\in[0,1]$ is real, and put
\[
 \tau=1-q^2.
\]
If $q=1$, then $u=v$ after the phase choice and diagonal nonnegativity
applies.  Suppose $q<1$, and define
\[
 y=\frac{u+v}{\sqrt{2(1+q)}},
 \qquad
 w=\frac{u-v}{\sqrt{2(1-q)}},
\]
which are orthonormal.  With
\[
 \alpha=\frac{1+q}{2},
 \qquad
 \beta=\frac{1-q}{2},
\]
we have
\[
 u=\sqrt\alpha\,y+\sqrt\beta\,w,
 \qquad
 v=\sqrt\alpha\,y-\sqrt\beta\,w.
\]
For $x=u\otimes v$, its symmetric and antisymmetric parts are
\begin{align*}
 x_+=\frac{x+Fx}{2}
 &=\alpha y\otimes y-\beta w\otimes w,\\
 x_-=\frac{x-Fx}{2}
 &=\sqrt{\alpha\beta}
   (w\otimes y-y\otimes w).
\end{align*}
Since $2\alpha\beta=\tau/2$,
\begin{equation}
 \norm{x_-}^2=2\alpha\beta=\frac\tau2.
 \label{eq:antisymmetric-norm}
\end{equation}
The commutation relation $[B,F]=0$ makes the symmetric and antisymmetric
subspaces invariant, so the cross term between $x_+$ and $x_-$ vanishes.
For the symmetric part, diagonal nonnegativity removes the two diagonal
terms, while the mixed matrix element is bounded by $L$:
\begin{align*}
 \inner{x_+}{Bx_+}
 &\ge-2\alpha\beta
 \abs{\inner{y\otimes y}{B(w\otimes w)}}\\
 &\ge-2\alpha\beta L
 =-\frac L2\tau.
\end{align*}
For the antisymmetric part, \eqref{eq:antisymmetric-norm} and the operator
norm bound give
\[
 \inner{x_-}{Bx_-}
 \ge-L\norm{x_-}^2
 =-\frac L2\tau.
\]
Thus
\begin{equation}
 \inner{u\otimes v}{B(u\otimes v)}\ge-L\tau.
 \label{eq:B-product-lower}
\end{equation}
On the other hand,
\begin{align}
 \inner{u\otimes v}{(I-F)(u\otimes v)}
 &=1-\abs{\inner uv}^2
 =\tau.
 \label{eq:swap-penalty-product}
\end{align}
\Cref{eq:B-product-lower,eq:swap-penalty-product} prove block positivity of
\eqref{eq:swap-extension}.  Its norm is at most
$L+L\norm{I-F}_\infty=3L$ by \eqref{eq:swap-norm}.

Apply this argument to $B_T$ using \eqref{eq:B-norm} and
\eqref{eq:diagonal-nonnegative}, and define
\begin{equation}
 W_T=B_T+L_m(I-F).
 \label{eq:W-definition}
\end{equation}
Then $W_T$ is block-positive and
\[
 \norm{W_T}_\infty
 \le3L_m
 =\frac{90m}{\vartheta^2}.
\]
Thus \eqref{eq:witness-norm} holds with the universal constant
$C_W=90/\vartheta^2$.

\paragraph{Step 4: evaluation on the encoded product states.}
Let
\[
 r=\abs{S\cap T}.
\]
From \eqref{eq:balanced-product-state},
\begin{equation}
 \inner{\psi_S}{E\psi_S}=\frac12,
 \qquad
 \inner{\psi_S}{P_T\psi_S}=\frac r{2m}.
 \label{eq:EP-expectations}
\end{equation}
The penalty term vanishes on $\psi_S\otimes\psi_S$.  Indeed, for each
$i\in T$,
\[
 \inner{r_i}{\psi_S\otimes\psi_S}
 =\one_{\{i\in S\}}
 \left(\frac1{2m}-\frac1{\sqrt m}\frac1{2\sqrt m}\right)
 =0.
\]
Hence $R_T(\psi_S\otimes\psi_S)=0$, and the same is true for
$FR_TF$.  Also
$(I-F)(\psi_S\otimes\psi_S)=0$ because this vector is symmetric.
Substituting \eqref{eq:EP-expectations} into
\eqref{eq:B-definition} therefore gives
\begin{align*}
 \Tr(W_T\sigma_S)
 &=4\left[
   \left(\frac r{2m}\right)^2
   -\frac12\left(
      \frac r{2m}\frac12+\frac12\frac r{2m}
    \right)
   +\frac{m^2-1}{4m^2}\frac14
  \right]
  +\frac1{2m^2}\frac14\\
 &=\frac{r^2}{m^2}-\frac rm
   +\frac{m^2-1}{4m^2}
   +\frac1{8m^2}\\
 &=\frac1{m^2}
   \left[
    \left(r-\frac m2\right)^2-\frac14
   \right]
   +\frac1{8m^2}\\
 &=f_m(\one_T|_S)+\frac1{8m^2}.
\end{align*}
This proves \eqref{eq:witness-entry} and completes the proof.
\end{proof}

\section{Semidefinite extension complexity of separability}
\label{sec:hsep-extension-lower-bounds}

The slack-matrix/LRS approach of this section follows HNW's application
to $h_{\Sep}$~\cite[Sec.~5.2 and proof of Thm.~5.6]{harrowLimitationsSemidefinitePrograms2019}.
HNW obtain their hard submatrix by applying LRS to the acceptance-probability
objective of a logarithmic-proof $\QMA(2)$ protocol and then embedding the
resulting $\QMA(2)$-Honest optimization problem into $h_{\Sep}$.  Our new
replacement for this protocol-based realization is the direct bounded
knapsack-witness construction of \cref{prop:witnesses}: the operators $W_T$
and product states $\sigma_S$ directly realize the bounded-knapsack values,
and the resulting slack entries form the shifted knapsack pattern matrix.

We now prove a lower bound for semidefinite formulations that compute the
maximum measurement value over separable states:
\[
 h_{\Sep(d:d)}(Q)
 =\max_{\sigma\in\Sep(\C^d:\C^d)}\Tr(Q\sigma).
\]
The same SDP feasible region and product-state representation must be used
for every observable $Q$.  From any such formulation we extract a
nonnegative matrix with entries
$c-\Phi_{Q_T}(\iota(\sigma_S))$, indexed by the witnesses and product states
of \cref{prop:witnesses}.  Each entry is the difference, or
\emph{slack}, between an upper threshold $c$ and the value given by a
selected product state.  This matrix equals the shifted
knapsack pattern matrix $M_{N,m,b}$ up to a positive scalar, and its PSD
rank is at most one more than the size of the formulation.  The lower bounds of
\Cref{sec:knapsack} then apply.

We specialize HNW's embedded-reduction framework to $h_{\Sep}$ below,
with the range of each objective constrained on the feasible set.\footnote{HNW's
Definition~5.1 writes an affine objective with codomain $[0,1]$ on the
entire ambient matrix space.  Taken literally, this would force the
objective to be constant.  Here the objective is affine and real-valued
on the ambient space, and its values are required to lie in $[0,1]$ only
on the feasible set.}

\begin{definition}[SDP extended formulation in the HNW framework]
\label[definition]{def:hnw-hsep-ef}
Fix $0\le s<c\le1$.  A size-$r$, $(c,s)$-approximate SDP extended
formulation for $h_{\Sep(d:d)}$ consists of the following data:
\begin{enumerate}[label=\textup{(\roman*)},leftmargin=*]
 \item \emph{One fixed set of SDP constraints:} a nonempty set
 \[
  \mathcal P=\mathcal A\cap\mathbb S_+^r,
  \qquad
  \mathbb S_+^r
  =\{Y\in\R^{r\times r}:Y=Y^{\mathsf T},\ Y\succeq0\},
 \]
 where $\mathcal A$ is an affine subspace.  The set $\mathcal P$ is chosen
 once and does not depend on the measurement operator $Q$.
 \item \emph{Affine objectives:} for every two-outcome measurement operator
 $Q$, meaning $0\preceq Q\preceq I$, an affine functional
 $\Phi_Q:\mathbb S^r\to\R$ satisfying
 $0\le\Phi_Q(Y)\le1$ for every $Y\in\mathcal P$.
 \item \emph{An objective-independent embedding:} a map $\iota$ from the
 pure product states on $\C^d\otimes\C^d$ into $\mathcal P$ that does not
 depend on $Q$.
\end{enumerate}
For every such $Q$, these data satisfy
\begin{align}
 \Phi_Q(\iota(\tau))&=\Tr(Q\tau)
 &&\text{for every pure product state $\tau$},
 \label{eq:hnw-embedding}\\
 h_{\Sep(d:d)}(Q)\le s
 &\Longrightarrow
 \sup_{Y\in\mathcal P}\Phi_Q(Y)\le c.
 \label{eq:hnw-soundness}
\end{align}
\end{definition}

To relate this definition to HNW's notation, take their optimization
problem $A$ to be $h_{\Sep}$ and their SDP $B$ to have feasible set
$P^B=\mathcal P$.  Their reduction sends the instance
$\tau\mapsto\Tr(Q\tau)$ to $\Phi_Q$, and their embedding map $E$ is
$\iota$.  Equation~\eqref{eq:hnw-embedding} is the embeddedness condition
of their Definition~3.11, and \eqref{eq:hnw-soundness} is its
$(s^B,s^A)=(c,s)$ approximation condition, as used in their
Definition~5.2~\cite{harrowLimitationsSemidefinitePrograms2019}.
The data $\mathcal P$ and $\iota$ may depend on $d,c,s$, but not on $Q$.
We require each $\Phi_Q$ to be affine in the SDP variable $Y$; we do not
require the assignment $Q\mapsto\Phi_Q$ to be affine.
The embedding makes completeness automatic: maximizing over the embedded
product states already yields $h_{\Sep}(Q)$.

\begin{lemma}[Affine slack factorization]
\label[lemma]{lem:affine-slack-factorization}
Let
\[
 \mathcal P=\mathcal A\cap\mathbb S_+^r
\]
be a nonempty spectrahedron.  Let $Y_j\in\mathcal P$, for $j\in J$, and let
$\ell_i:\mathcal P\to\mathbb R_+$, for $i\in I$, be affine functions that are
nonnegative on all of $\mathcal P$.  Then the matrix
\[
 S(i,j)=\ell_i(Y_j)
\]
satisfies
\[
 \rankpsd^{\mathbb R}(S)\le r+1.
\]
\end{lemma}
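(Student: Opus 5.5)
The plan is to show that every affine function that is nonnegative on $\mathcal P$ can be written, on $\mathcal P$, as $Y\mapsto\Tr(U Y)+\gamma$ with $U\succeq0$ and $\gamma\ge0$. Granting that, the factorization is immediate. Assign to row $i$ the PSD matrix $A_i=U_i\oplus\gamma_i$ of size $r+1$, and to column $j$ the matrix $B_j=Y_j\oplus1$. Then
\[
 \Tr(A_iB_j)=\Tr(U_iY_j)+\gamma_i=\ell_i(Y_j)=S(i,j),
\]
which gives $\rankpsd^{\R}(S)\le r+1$. The whole content of the lemma is this conic Farkas-type representation. The danger is that $\mathcal P$ need not contain a positive definite matrix, and without one strong duality can fail (the dual may not be attained).

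\textbf{Step 1: facial reduction.} I would first reduce to the case where $\mathcal P$ contains a positive definite point. Let $\bar Y$ be a point in the relative interior of $\mathcal P$; one can take the barycenter of finitely many points spanning the affine hull. Let $V\in\R^{r\times r'}$ have orthonormal columns spanning $\operatorname{range}(\bar Y)$.

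Every $Y\in\mathcal P$ has range contained in $\operatorname{range}(V)$. To see this, note that $\bar Y\pm\epsilon(Y-\bar Y)\in\mathcal P$ for small $\epsilon>0$. Hence for any $x$ with $\bar Yx=0$,
\[
 x^{\mathsf T}\bigl(\bar Y-\epsilon(Y-\bar Y)\bigr)x\ge0,
\]
which forces $x^{\mathsf T}Yx\le0$, so $Yx=0$ because $Y\succeq0$.

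Therefore $\mathcal P=\{VZV^{\mathsf T}:Z\in\mathcal P'\}$, where $\mathcal P'=\mathcal A'\cap\mathbb S_+^{r'}$ and $\mathcal A'=\{Z:VZV^{\mathsf T}\in\mathcal A\}$ is affine. The point $Z_0=V^{\mathsf T}\bar YV$ is positive definite and lies in $\mathcal P'$. The pulled-back functions $\ell_i'(Z)=\ell_i(VZV^{\mathsf T})$ are affine and nonnegative on $\mathcal P'$. Since $r'\le r$, it suffices to prove the representation on $\mathcal P'$ and then push it forward: $\Tr(U'Z)=\Tr(VU'V^{\mathsf T}\,Y)$, and $VU'V^{\mathsf T}\succeq0$.

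\textbf{Step 2: Slater duality.} Write $\ell'(Z)=\Tr(CZ)+c_0$ and $\mathcal A'=\{Z:\Tr(G_kZ)=g_k,\ k=1,\dots,K\}$. Consider the primal SDP $v=\inf\{\ell'(Z):Z\in\mathcal P'\}$. Because $\ell'\ge0$ on $\mathcal P'$, we have $v\ge0$, and in particular $v$ is finite.

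Its dual is
\[
 \sup\Bigl\{c_0+\sum_k y_kg_k:\ C-\sum_ky_kG_k\succeq0\Bigr\}.
\]
The primal has the strictly feasible point $Z_0\succ0$, so standard conic strong duality (Slater's condition) implies that the dual optimum is attained with value $v$. Let $y^*$ be an optimal dual solution and put $U'=C-\sum_ky^*_kG_k\succeq0$.

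For $Z\in\mathcal A'$,
\[
 \ell'(Z)=\Tr(U'Z)+\sum_k y^*_k\Tr(G_kZ)+c_0=\Tr(U'Z)+\Bigl(c_0+\sum_ky^*_kg_k\Bigr)=\Tr(U'Z)+v.
\]
So $\gamma=v\ge0$ is the required constant.

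\textbf{Where the difficulty lies.} The only genuine obstacle is ensuring dual attainment, which is exactly why Step 1 is needed. Without facial reduction, a nonnegative affine function on a spectrahedron with empty interior, such as one contained in a proper face of $\mathbb S_+^r$, need not admit a representation with $U\succeq0$ on the full space.

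Two degenerate cases need no separate treatment. If $r'=0$, then $\mathcal P=\{0\}$ and each $\ell_i$ is a nonnegative constant. If $\mathcal P'$ is a single point, Slater's condition still holds because that point is positive definite.
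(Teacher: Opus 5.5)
Your proposal is correct and follows the paper's proof. Like the paper, you compress to a face containing a positive definite feasible point, apply Slater's condition to obtain an attained dual certificate $U\succeq0$ with constant $\gamma=v\ge0$, and factor $S$ using $U_i\oplus[\gamma_i]$ against $Y_j\oplus[1]$. The only difference is cosmetic: you find the face through a relative-interior point, whereas the paper averages feasible matrices whose ranges span $K$.
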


\begin{proof}
This is the standard lift-to-factorization implication for the PSD cone;
see~\cite[Thm.~2.4 and Cor.~2.6]{gouveiaLiftsConvexSets2013} and, in the
SDP-relaxation setting,~\cite[Prop.~6.1]{leeLowerBoundsSize2015}.  We include
the short duality argument to make the extra affine dimension explicit.

Let
\[
 K=\operatorname{span}\{\operatorname{range}(Y):Y\in\mathcal P\}
\]
and put $k=\dim K\le r$.  After compressing all matrices to $K$, we may regard
$\mathcal P$ as a spectrahedron in $\mathbb S_+^k$ containing a positive
definite point.  Indeed, choose finitely many feasible matrices whose ranges
span $K$ and average them; for PSD matrices, the kernel of a sum is the
intersection of the kernels.  If $k=0$, then $\mathcal P=\{0\}$, and the
$1\times1$ factors $[\ell_i(0)]$ and $[1]$ prove the claim.  Hence assume
$k\ge1$.

Write the affine constraint set of the compressed spectrahedron as
\[
 \mathcal A'
 =
 \left\{
  Y\in\mathbb S^k:
  \Tr(C_aY)=b_a,\quad a=1,\ldots,m
 \right\},
\]
after removing redundant equations.  The positive definite feasible point
has a relative neighborhood in $\mathcal A'$ that remains positive definite,
so $\mathcal A'=\operatorname{aff}(\mathcal P)$.  Extend each $\ell_i$ to this
affine hull and write
\[
 \ell_i(Y)=\Tr(C_iY)+\beta_i
 \qquad (Y\in\mathcal A').
\]
Consider the SDP
\[
 p_i
 =
 \inf\left\{
  \Tr(C_iY)+\beta_i:
  Y\succeq0,
  \Tr(C_aY)=b_a\ \text{for all }a
 \right\}.
\]
Since $\mathcal P$ is nonempty and $\ell_i$ is nonnegative on it, we have
$0\le p_i<\infty$.  The compressed spectrahedron contains a positive
definite feasible point, so the primal SDP satisfies Slater's condition.
Slater's theorem gives strong duality and attainment of the dual optimum;
see~\cite[Sec.~5.9.1, pp.~265--266]{boydConvexOptimization2004}.  Hence there
are numbers $z_{i,a}$ such that
\[
 A_i:=C_i-\sum_{a=1}^m z_{i,a}C_a\succeq0
\]
and
\[
 \alpha_i
 :=
 \beta_i+\sum_{a=1}^m z_{i,a}b_a
 =
 p_i
 \ge0.
\]
Consequently, for every $Y\in\mathcal P$,
\[
 \ell_i(Y)
 =
 \Tr(A_iY)+\alpha_i.
\]

Define PSD matrices of size $k+1$ by
\[
 \widehat A_i=A_i\oplus[\alpha_i],
 \qquad
 \widehat B_j=Y_j\oplus[1].
\]
Then
\[
 S(i,j)
 =\ell_i(Y_j)
 =\Tr(\widehat A_i\widehat B_j).
\]
Thus $S$ has a real PSD factorization of size $k+1\le r+1$.
\end{proof}

\begin{proposition}[The knapsack matrix is an $h_{\Sep}$ slack submatrix]
\label[proposition]{prop:hsep-slack-submatrix}
Let $m\ge3$ be odd, let $N=d-1\ge m$, and let the product states
$\sigma_S$ and witnesses $W_T$ be those of
\cref{prop:witnesses}.  Let $0<s<c\le1$, and put
\begin{equation}
 u=\min\{s,1-s\},
 \qquad
 \varrho=\frac{c-s}{u},
 \qquad
 L=C_Wm.
 \label{eq:hsep-relative-gap}
\end{equation}
Define, for every $T\subseteq[N]$, the two-outcome measurement operator
\begin{equation}
 Q_T=sI_{d^2}-\frac{u}{L}W_T.
 \label{eq:hsep-hard-effect}
\end{equation}
Then
\begin{equation}
 0\preceq Q_T\preceq I_{d^2},
 \qquad
 h_{\Sep(d:d)}(Q_T)\le s.
 \label{eq:hsep-hard-effect-soundness}
\end{equation}
Set
\begin{equation}
 b=\frac1{8m^2}+\varrho L.
 \label{eq:hsep-pattern-shift}
\end{equation}
If a $(c,s)$-approximate SDP extended formulation for
$h_{\Sep(d:d)}$ has size $r$, then
\begin{equation}
 \rankpsd^{\R}(M_{N,m,b})\le r+1.
 \label{eq:hsep-pattern-rank-upper}
\end{equation}
\end{proposition}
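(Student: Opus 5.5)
We first establish \eqref{eq:hsep-hard-effect-soundness}, and then turn the formulation into a slack matrix and apply \cref{lem:affine-slack-factorization}.

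\textbf{Step 1: the effect $Q_T$ is valid and sound.} By \cref{prop:witnesses}, $\norm{W_T}_\infty\le C_Wm=L$. Hence $-uI\preceq\frac uLW_T\preceq uI$, and so
\[
 (s-u)I\preceq Q_T\preceq(s+u)I .
\]
Since $u=\min\{s,1-s\}$, we have $s-u\ge0$ and $s+u\le1$, which gives $0\preceq Q_T\preceq I$. For soundness, every $\sigma\in\Sep(d:d)$ satisfies $\Tr(W_T\sigma)\ge0$ by block positivity \eqref{eq:block-positive-separable}. Therefore $\Tr(Q_T\sigma)=s-\frac uL\Tr(W_T\sigma)\le s$, and so $h_{\Sep}(Q_T)\le s$.

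\textbf{Step 2: the slack functionals.} Fix a $(c,s)$-approximate formulation $(\mathcal P,\Phi,\iota)$ of size $r$. By \eqref{eq:hnw-soundness}, $\Phi_{Q_T}(Y)\le c$ on $\mathcal P$. So the affine function $\ell_T(Y)=c-\Phi_{Q_T}(Y)$ is nonnegative on $\mathcal P$. Each $\sigma_S$ is a pure product state, so we may take $Y_S=\iota(\sigma_S)\in\mathcal P$. By \eqref{eq:hnw-embedding} and \eqref{eq:witness-entry},
\[
 \ell_T(Y_S)=c-s+\frac uL\Tr(W_T\sigma_S)
 =\frac uL\Bigl(f_m(\one_T|_S)+\frac1{8m^2}+\frac{(c-s)L}{u}\Bigr)
 =\frac uL\bigl(f_m(\one_T|_S)+b\bigr),
\]
using $(c-s)/u=\varrho$.

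\textbf{Step 3: matching the pattern matrix.} We identify $T\subseteq[N]$ with $x=\one_T\in\{0,1\}^N$. Then $\one_T|_S=x_S$, and the matrix $(S,x)\mapsto\ell_{T(x)}(Y_S)$ equals $\frac uL\,M_{N,m,b}$ up to transposition. Transposition preserves PSD rank because the factorization is symmetric in rows and columns. Scaling by $u/L>0$ also preserves PSD rank, since $u>0$ follows from $0<s<1$. The case $s=1$ is excluded, since $s<c\le1$. Applying \cref{lem:affine-slack-factorization} with the functionals $\ell_T$ and the points $Y_S$ gives $\rankpsd^{\R}(M_{N,m,b})\le r+1$.

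\textbf{Main obstacle.} The only delicate point is Step 2's reliance on the bounded-objective convention. \Cref{lem:affine-slack-factorization} needs $\ell_T\ge0$ on all of $\mathcal P$, which is exactly what \eqref{eq:hnw-soundness} provides. It also needs $\Phi_Q$ to be affine on the ambient space, which \cref{def:hnw-hsep-ef} guarantees. Everything else is bookkeeping of the constants, chosen so that the slack is exactly proportional to $f_m+b$.
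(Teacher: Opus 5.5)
Your proposal is correct and follows the paper's proof step for step. It uses the same sandwich $(s-u)I\preceq Q_T\preceq(s+u)I$ together with block positivity for soundness, the same slack functionals $\ell_T=c-\Phi_{Q_T}$ evaluated at $\iota(\sigma_S)$ to obtain $\frac uL M_{N,m,b}$, and the same application of \cref{lem:affine-slack-factorization} with transposition and positive scaling; your explicit check that $u>0$ is a detail the paper leaves implicit.
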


\begin{proof}
The norm bound $\norm{W_T}_\infty\le L$ gives
\[
 (s-u)I\preceq Q_T\preceq(s+u)I.
\]
The definition $u=\min\{s,1-s\}$ implies
$0\le s-u\le s+u\le1$, so $Q_T$ is a valid measurement operator.
If $\tau$ is separable,
block positivity gives $\Tr(W_T\tau)\ge0$, and therefore
\[
 \Tr(Q_T\tau)
 =s-\frac uL\Tr(W_T\tau)
 \le s.
\]
This proves \eqref{eq:hsep-hard-effect-soundness}.

Let $\mathcal P$, $\Phi_Q$, and $\iota$ describe the extended formulation.
For $S\in\binom{[N]}m$, put $Y_S=\iota(\sigma_S)$.  For each
$T\subseteq[N]$, define
\[
 \ell_T(Y)=c-\Phi_{Q_T}(Y).
\]
By \eqref{eq:hsep-hard-effect-soundness} and the soundness condition
\eqref{eq:hnw-soundness}, $\ell_T$ is nonnegative on all of $\mathcal P$.
The embedding identity \eqref{eq:hnw-embedding}, the relation
$c-s=u\varrho$, and \cref{prop:witnesses} give
\begin{align}
 \ell_T(Y_S)
 &=c-\Tr(Q_T\sigma_S)
 \notag\\
 &=c-s+\frac uL\Tr(W_T\sigma_S)
 \notag\\
 &=\frac uL
 \left(
  f_m(\one_T|_S)+\frac1{8m^2}+\varrho L
 \right)
 \notag\\
 &=\frac uL M_{N,m,b}(S,\one_T).
 \label{eq:hsep-slack-identity}
\end{align}

Apply \cref{lem:affine-slack-factorization} to the feasible points
$Y_S=\iota(\sigma_S)$ and the nonnegative affine functions
\[
 \ell_T(Y)=c-\Phi_{Q_T}(Y).
\]
Together with \eqref{eq:hsep-slack-identity}, this gives
\[
 \rankpsd^{\mathbb R}(M_{N,m,b})\le r+1,
\]
since transposition and multiplication by a positive scalar do not change
PSD rank.  This is \eqref{eq:hsep-pattern-rank-upper}.
\end{proof}

The preceding proposition gives the improved parameters.  The Boolean
assignment $\one_T$ is encoded by the objective $Q_T$, while the subset $S$
is encoded by the product state $\sigma_S$.  Both use local dimension
$N+1=d$, so we do not need to embed an $m$-variable protocol into an
exponentially larger instance.

\paragraph{Parameters.}
For $d\ge3$, put $N=d-1$ and $x=\log N$.  For $\eta\ge0$, define
\begin{align}
 M_\eta
 &:=\min\left\{
  \eta^{-1/3},
  \left(\frac{N}{x^3}\right)^{2/7}
 \right\},
 \label{eq:hsep-endpoint-scale}\\
 R_\eta
 &:=\frac{N}{M_\eta^{7/2}x^3}.
 \label{eq:hsep-endpoint-ratio}
\end{align}
When $\eta=0$, we use the convention $\eta^{-1/3}=+\infty$.
By construction, $R_\eta\ge1$.

\begin{theorem}[Lower bound for threshold SDP formulations of $h_{\Sep}$]
\label{thm:hsep-endpoint}
\normalfont
There are universal constants $c_{\mathrm{end}}>0$, $0<\varrho_0\le1$, and
$d_0\in\mathbb N$ such that the following holds.  Let $d\ge d_0$, let
$0<s<c\le1$, and suppose that $h_{\Sep(d:d)}$ has a size-$r$
$(c,s)$-approximate HNW extended formulation.  Set
\[
 \varrho=\frac{c-s}{\min\{s,1-s\}}.
\]
If $\varrho\le\varrho_0$, then
\begin{equation}
 \log r\ge
 c_{\mathrm{end}}M_\varrho\log(2+R_\varrho).
 \label{eq:hsep-endpoint-bound}
\end{equation}
\end{theorem}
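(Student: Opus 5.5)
We combine \cref{prop:hsep-slack-submatrix} with the amplified bound of \cref{prop:amplified-rank}, choosing $m$ and $\ell$ as functions of $\varrho$ and $d$. Put $N=d-1$ and $x=\log N$. Let $m$ be an odd integer of order $\kappa M_\varrho$, where $\kappa>0$ is a small absolute constant to be fixed. Since $M_\varrho\le(N/x^3)^{2/7}$, we have $m\le N/2$ for large $d$, so $N>2m$ holds.

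The shift in \cref{prop:hsep-slack-submatrix} is $b=1/(8m^2)+\varrho C_Wm$. We need $b\le3/(16m^2)$, that is, $\varrho C_W m^3\le1/16$. Because $m\le\kappa\varrho^{-1/3}$, this holds once $\kappa^3C_W\le1/16$. We also need $m\ge3$. This requires $M_\varrho$ to be large, which we get from $\varrho\le\varrho_0$ with $\varrho_0$ small together with $d\ge d_0$.

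By \cref{prop:hsep-slack-submatrix}, $\rankpsd^{\R}(M_{N,m,b})\le r+1$. Combining this with the upper bound \eqref{eq:amplified-rank-upper},
\[
 \rankpsd^{\R}(Z_{N,m,b,\ell})\le (r+1)\,\ell N^{\ell-1}+1 .
\]
The lower bound \eqref{eq:amplified-rank-lower} then gives
\[
 \log(r+1)\ \ge\ \frac m4\log\!\Big(\frac{c_{\mathrm{LRS}}\ell^2N}{64m^{11/2}x}\Big)-O(\log m)-\ell\log N-O(\log \ell).
\]
For the choice of $\ell$, take $\ell$ odd with $\ell\le m/2$, of order $\min\{m,\ m\log(2+R_\varrho)/x\}$ times a small constant. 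The term $\ell x$ must stay below a constant fraction of $m\log(\ell^2N/(m^{11/2}x))$.

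Consider the main term's argument $\ell^2N/(m^{11/2}x)$.
\begin{itemize}
 \item If $\ell\asymp m/x$, it is $\asymp N/(m^{7/2}x^3)\asymp R_\varrho$ (up to the $\kappa$ powers). Note that $m^{7/2}x^3\le N$ by the definition of $M_\varrho$.
 \item If $\ell$ is larger, of order $m\log(2+R)/x$, the argument becomes $R\log^2(2+R)$, which has logarithm still $\asymp\log(2+R)$. Then $\ell x\asymp m\log(2+R)$, and with a small enough constant this is dominated by the $m/4$ main term.
\end{itemize}
So, with a small constant $\gamma$, set $\ell\approx\gamma m\max\{1,\log(2+R)\}/x$ clipped to $[1,m/2]$. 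The main term is then $\tfrac m4\log(c'R_\varrho\cdot\text{poly}(\gamma,\kappa))$, the subtracted $\ell x$ is at most $\gamma m\log(2+R)$, and the polynomial prefactors $O(\log m)$ are lower order because $m\to\infty$.

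The main obstacle is the regime where $R_\varrho$ is near $1$. There the constants $\kappa^{7/2}$ and $c_{\mathrm{LRS}}/64$ can make the logarithm's argument smaller than $1$. To handle it, I would first shrink $m$ by a constant factor relative to $M_\varrho$, which increases the effective ratio by $\kappa^{-7/2}$. This makes the argument of the logarithm at least a fixed constant $>e$ times $(2+R_\varrho)$, so the $\log$ term is $\gtrsim\log(2+R_\varrho)$ uniformly. The clipping cases ($\ell=1$ when $m<x/\gamma$, $\ell$ capped at $m/2$) need separate but routine checks. When $\ell=1$ the $\ell x$ loss is just $x$, while the gain is $m\log(\cdot)\gtrsim m$. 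This case requires $M_\varrho\gtrsim x$ or else uses $\log(2+R)\gtrsim x$ when $M$ is small. Indeed, when $M$ is small, $R\approx N/(M^{7/2}x^3)$ is polynomial in $N$, so $\log R\asymp x$ and $m\log(2+R)$ beats $x$ once $m\ge3$ and constants are tuned.

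Finally, absorbing $\log(r+1)\le2\log r$ (for $r\ge2$) and all constants yields \eqref{eq:hsep-endpoint-bound} with suitable universal $c_{\mathrm{end}}$, $\varrho_0$, $d_0$.
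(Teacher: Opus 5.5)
Your proposal is correct and follows essentially the same route as the paper: combine the slack-submatrix bound with the Chebyshev-amplified LRS bound, take $m$ to be a small constant multiple of $M_\varrho$ so that $\varrho m^3$ is small and the factor $(M_\varrho/m)^{7/2}$ turns the logarithm's argument into a large constant times $R_\varrho$, and take $\ell$ of order $m/x$. The paper uses $\ell\approx m/(Kx)$ without your extra $\log(2+R_\varrho)$ factor, which is harmless, and one bookkeeping slip: when $\ell=1$ the amplification cost $\sum_{j<\ell}N^j$ equals $1$, so there is no loss of $x$ and your separate case analysis there is unnecessary.
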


\noindent\emph{Lean:}
\leanref{Disentangler/ManuscriptAudit.lean}{33}{hsep\_endpoint}.

\begin{proof}
\Cref{prop:hsep-slack-submatrix} gives
\[
 \rankpsd^{\R}(M_{N,m,b})\le r+1,
 \qquad
 b=\frac1{8m^2}+\varrho C_Wm.
\]
Hence there is a universal $c_0>0$ such that
$\varrho m^3\le c_0$ implies $b\le3/(16m^2)$.  Combining this with
\cref{prop:amplified-rank}, we obtain universal constants
$c_1,c_2>0$ such that, for every admissible odd $m$ and every odd
$\ell$ with $2\ell\le m$,
\begin{equation}
 (r+1)S_\ell+1
 \ge
 c_1\ell^4m^{-19/4}
 \left(
  \frac{c_2\ell^2N}{m^{11/2}x}
 \right)^{m/4},
 \qquad
 S_\ell=\sum_{j=0}^{\ell-1}N^j.
 \label{eq:hsep-master-amplified}
\end{equation}

Write $M=M_\varrho$ and $R=R_\varrho$.  Fix a sufficiently large
universal constant $K$, choose a sufficiently small universal
$\gamma>0$, and let $m$ be the largest odd integer at most $\gamma M$.
After decreasing $\varrho_0$ and increasing $d_0$, we may assume
$\gamma M\ge4$ and
\begin{equation}
 \frac{\gamma M}{2}\le m\le\gamma M,
 \qquad m\ge3.
 \label{eq:hsep-endpoint-m-choice}
\end{equation}
Indeed, $M\le\varrho^{-1/3}$, so choosing $\gamma^3\le c_0$ gives
$\varrho m^3\le c_0$.  Moreover,
$M\le(N/x^3)^{2/7}=o(N)$, and hence $2m<N$ once $d_0$ is large enough.
Thus $m$ is an odd integer with $m\ge3$, $2m<N$, and
$\varrho m^3\le c_0$, which are precisely the admissibility conditions
needed for \eqref{eq:hsep-master-amplified}.  Choose
\[
 \ell=
 \begin{cases}
  1,&m\le4Kx,\\
  \text{the largest odd integer at most }m/(Kx),&m>4Kx.
 \end{cases}
\]
Then $2\ell\le m$: this follows from $m\ge3$ in the first case and from
$Kx\ge2$ in the second, after increasing $d_0$ if necessary.  In the
second case, rounding to the largest odd
integer loses at most two, and $m/(Kx)>4$, so
\[
 \frac{m}{2Kx}\le\ell\le\frac{m}{Kx}.
\]
Put $S_\ell=\sum_{j=0}^{\ell-1}N^j$.  When $\ell=1$, $S_\ell=1$; in
the second case, $S_\ell\le N^\ell$.  Consequently,
the logarithmic cost of $S_\ell$ is at most $m/K$.

We also need a lower bound on the base in
\eqref{eq:hsep-master-amplified}.  If $\ell=1$, then $m\le4Kx$ implies
\[
 \frac{\ell^2}{m^{11/2}x}
 \ge\frac{1}{16K^2m^{7/2}x^3}.
\]
If $\ell>1$, the preceding lower bound on $\ell$ gives the same estimate
with $16$ replaced by $4$.  Thus, in both cases, for a universal $c_*>0$,
\begin{equation}
 \log S_\ell\le\frac mK,
 \qquad
 \frac{c_2\ell^2N}{m^{11/2}x}
 \ge
 \frac{c_*N}{K^2m^{7/2}x^3}.
 \label{eq:hsep-optimized-cost-and-base}
\end{equation}
Since $m\le\gamma M$, the definition of $R$ now gives
\begin{equation}
 \log\left(\frac{c_2\ell^2N}{m^{11/2}x}\right)
 \ge
 \log R+\frac72\log\frac1\gamma+\log\frac{c_*}{K^2}.
 \label{eq:hsep-base-log-detail}
\end{equation}
Choose $\gamma$ so that the constant part on the right is sufficiently
positive.  The remaining prefactor costs only logarithmically in $m$:

\begin{equation}
 \log\bigl(c_1\ell^4m^{-19/4}\bigr)\ge-C\log m.
 \label{eq:hsep-prefactor-detail}
\end{equation}

Finally, since $S_\ell\ge1$,
\[
 (r+1)S_\ell+1\le(r+2)S_\ell.
\]
Taking logarithms in \eqref{eq:hsep-master-amplified} therefore gives
\[
 \log(r+2)
 \ge \frac m4
 \log\left(\frac{c_2\ell^2N}{m^{11/2}x}\right)
 -\log S_\ell-C\log m.
\]
Apply \eqref{eq:hsep-optimized-cost-and-base},
\eqref{eq:hsep-base-log-detail}, and \eqref{eq:hsep-prefactor-detail}.
If
\[
 A_\gamma:=\frac72\log\frac1\gamma+\log\frac{c_*}{K^2},
\]
then these estimates give
\[
 \log(r+2)
 \ge\frac m4(\log R+A_\gamma)-\frac mK-C\log m.
\]
Taking $K$ sufficiently large, then $\gamma$ sufficiently small so that
$A_\gamma$ is a large positive constant, and finally increasing the fixed
lower bound on $m$ by decreasing $\varrho_0$ and increasing $d_0$, yields
\[
 \log(r+2)\ge c'm(1+\log R)
\]
for a universal $c'>0$.  Indeed, for a universal $c_m>0$,
\[
 m\ge c_mM,
 \qquad
 \log(2+R)\le\log(3R)\le(\log3)(1+\log R).
\]
Hence there is a universal $c''>0$ such that
\[
 \log(r+2)\ge H,
 \qquad
 H:=c''M\log(2+R).
\]
After decreasing $\varrho_0$ and increasing $d_0$ once more, we may assume
$H\ge2\log3$.  Since $r\ge1$, we have $r+2\le3r$, and therefore
\[
 \log r\ge\log(r+2)-\log3\ge\frac H2.
\]
Reducing the universal constant proves \eqref{eq:hsep-endpoint-bound}.
\end{proof}

\begin{corollary}[Lower bound for additive SDP approximations of $h_{\Sep}$]
\label[corollary]{cor:hsep-uniform-additive}
\normalfont
Suppose a size-$r$ SDP has the common feasible region,
objective-independent product-state embedding, and bounded objectives
required in \cref{def:hnw-hsep-ef}, and suppose that its optimum
satisfies
\begin{equation}
 h_{\Sep(d:d)}(Q)
 \le \operatorname{SDP}(Q)
 \le h_{\Sep(d:d)}(Q)+a
 \qquad\text{for every $Q$ with $0\preceq Q\preceq I$.}
 \label{eq:hsep-uniform-additive}
\end{equation}
If $d\ge d_0$ and $2a\le\varrho_0$, then
\begin{equation}
 \log r\ge
 c_{\mathrm{end}}M_{2a}\log(2+R_{2a}).
 \label{eq:hsep-uniform-additive-bound}
\end{equation}
\end{corollary}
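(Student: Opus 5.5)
The plan is to reduce \cref{cor:hsep-uniform-additive} to \cref{thm:hsep-endpoint} by exhibiting the given SDP as a $(c,s)$-approximate HNW formulation with relative gap exactly $\varrho=2a$. I would fix the thresholds $s=1/2$ and $c=1/2+a$. Then $\min\{s,1-s\}=1/2$, so
\[
 \varrho=\frac{c-s}{\min\{s,1-s\}}=2a\le\varrho_0.
\]
Since $\varrho_0\le1$, this forces $a\le1/2$, so $c\le1$ and $0<s<c\le1$ as the theorem requires. The degenerate case $a=0$ is excluded from $s<c$; I would handle it by noting that the formulation is then also $(1/2+a',1/2)$-approximate for every small $a'>0$. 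Letting $a'\to0$ makes $M_{2a'}$ eventually equal $(N/x^3)^{2/7}=M_0$, with $R$ likewise matching, so the bound follows.

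Next I would check the three structural requirements of \cref{def:hnw-hsep-ef}. The common feasible region, the bounded affine objectives $\Phi_Q$ with values in $[0,1]$ on $\mathcal P$, and the objective-independent embedding $\iota$ are assumed outright in the corollary. The one point needing care is the embedding identity \eqref{eq:hnw-embedding}, namely $\Phi_Q(\iota(\tau))=\Tr(Q\tau)$. I read the hypothesis ``objective-independent product-state embedding required in \cref{def:hnw-hsep-ef}'' as including that identity. It is also what makes the lower inequality $h_{\Sep}\le\operatorname{SDP}$ in \eqref{eq:hsep-uniform-additive} automatic.

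Soundness \eqref{eq:hnw-soundness} comes directly from the upper inequality in \eqref{eq:hsep-uniform-additive}. If $h_{\Sep(d:d)}(Q)\le1/2$, then
\[
 \sup_{Y\in\mathcal P}\Phi_Q(Y)=\operatorname{SDP}(Q)\le h_{\Sep}(Q)+a\le\tfrac12+a=c.
\]
So the SDP is a size-$r$, $(1/2+a,1/2)$-approximate HNW formulation. \Cref{thm:hsep-endpoint}, applied with $\varrho=2a$ and $d\ge d_0$, then gives $\log r\ge c_{\mathrm{end}}M_{2a}\log(2+R_{2a})$, which is \eqref{eq:hsep-uniform-additive-bound}.

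The argument is essentially bookkeeping. The main point to get right is the choice $s=1/2$: it maximizes $\min\{s,1-s\}$, so the additive error $a$ costs only a factor of two in $\varrho$. Beyond that, the only subtleties are the $a=0$ edge case and making sure the embedding identity is indeed part of the hypotheses.
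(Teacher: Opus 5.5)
Your proposal is correct and follows essentially the same route as the paper. Both arguments use the thresholds $(1/2+a,1/2)$ with relative gap $2a$ and derive soundness from the upper additive bound before applying the threshold theorem. Both handle $a=0$ by passing to a small positive error for which the $d$-dependent term attains the minimum in $M$.
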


\begin{proof}
For $a>0$, take $s=1/2$ and $c=1/2+a$.  The assumption
$2a\le\varrho_0\le1$ ensures that $c\le1$.  If
$h_{\Sep(d:d)}(Q)\le1/2$, then \eqref{eq:hsep-uniform-additive} gives
$\operatorname{SDP}(Q)\le1/2+a$.  The SDP is therefore a
$(1/2+a,1/2)$-approximate HNW formulation with relative gap $2a$, so
the result follows from \cref{thm:hsep-endpoint}.  When $a=0$,
apply the positive-error statement with an arbitrarily small error
for which the second term in $M_{2a}$ is the minimum.
\end{proof}

\begin{corollary}[Lower bound for spectrahedral approximations of separable states]
\label[corollary]{cor:sep-spectrahedral-shadow}
\normalfont
Let $\mathcal K\subseteq\Dens(\C^d\otimes\C^d)$ be a spectrahedral
shadow of size $r$ such that
\begin{equation}
 \Sep(d:d)\subseteq\mathcal K,
 \qquad
 \sup_{\rho\in\mathcal K}
 \disttr\bigl(\rho,\Sep(d:d)\bigr)\le a.
 \label{eq:sep-shadow-approximation}
\end{equation}
If $d\ge d_0$ and $2a\le\varrho_0$, then
\begin{equation}
 \log r\ge
 c_{\mathrm{end}}M_{2a}\log(2+R_{2a}).
 \label{eq:sep-shadow-lower-bound}
\end{equation}
\end{corollary}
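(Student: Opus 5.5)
The plan is to reduce to \cref{cor:hsep-uniform-additive} by turning the spectrahedral shadow into an SDP with the required data. Write $\mathcal K=\pi(\mathcal P)$ with $\mathcal P=\mathcal A\cap\mathbb S_+^r$ and $\pi$ affine. Two cases need care: the lift might be complex, and $\pi$ might have an ambient domain larger than needed. For a complex Hermitian lift we first realify using the map of \cref{lem:realification}. This doubles the size, which changes $\log r$ only by an additive constant that can be absorbed by shrinking the constant. The statement as given already concerns a real lift, so we work directly with $\mathcal P$ and $\pi$.

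\textbf{The data.} For each measurement operator $0\preceq Q\preceq I$ set
\[
 \Phi_Q(Y)=\Tr\bigl(Q\,\pi(Y)\bigr).
\]
This is affine in $Y$ because $\pi$ is affine and the trace is linear. Every $\pi(Y)$ with $Y\in\mathcal P$ lies in $\mathcal K\subseteq\Dens$, so it is a density operator, and hence $0\le\Phi_Q(Y)\le1$ on $\mathcal P$. For the embedding, each pure product state $\tau$ lies in $\Sep(d:d)\subseteq\mathcal K=\pi(\mathcal P)$, so we may choose some preimage $\iota(\tau)\in\mathcal P$. This choice depends only on $\tau$, not on $Q$, and it gives $\Phi_Q(\iota(\tau))=\Tr(Q\tau)$, which is \eqref{eq:hnw-embedding}.

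\textbf{The additive bound.} The SDP value is
\[
 \operatorname{SDP}(Q)=\sup_{Y\in\mathcal P}\Phi_Q(Y)=\sup_{\rho\in\mathcal K}\Tr(Q\rho).
\]
Because $\Sep\subseteq\mathcal K$, this is at least $h_{\Sep}(Q)$. By \eqref{eq:measurement-value-distance} applied to $\mathcal K$ and $\mathcal C=\Sep(d:d)$, it is at most $h_{\Sep}(Q)+a$. If $\mathcal K$ is not closed, the supremum is still bounded by the same quantity, since \eqref{eq:measurement-value-distance} only uses pointwise distances. Hence \eqref{eq:hsep-uniform-additive} holds.

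\textbf{Conclusion.} Applying \cref{cor:hsep-uniform-additive} then yields \eqref{eq:sep-shadow-lower-bound}. The only points needing care are that the objectives must be bounded merely on the feasible set, which matches \cref{def:hnw-hsep-ef}, and that the embedding must be chosen independently of $Q$. Neither is a real obstacle, so the proof is a short verification.
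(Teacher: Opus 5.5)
Your proposal is correct and follows essentially the same route as the paper: you use the objective $Y\mapsto\Tr(Q\pi(Y))$, fix a $Q$-independent lift of each pure product state, obtain the two-sided additive bound from $\Sep(d:d)\subseteq\mathcal K$ and \eqref{eq:measurement-value-distance}, and then invoke \cref{cor:hsep-uniform-additive}. Your remarks on non-closed shadows (using a supremum rather than a maximum) and on complex lifts are harmless refinements that the paper leaves implicit.
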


\begin{proof}
Write $\mathcal K=\pi(\mathcal P)$ and choose once and for all a lift
in $\mathcal P$ of every pure product state.  For a measurement operator
$Q$ with $0\preceq Q\preceq I$, use
the affine objective $Y\mapsto\Tr(Q\pi(Y))$.  Since $\mathcal K$
consists of states, this objective takes values in $[0,1]$, and the
chosen lifts give the required objective-independent embedding.  The
set inclusion and trace-distance duality give
\[
 h_{\Sep(d:d)}(Q)\le h_{\mathcal K}(Q)
 \le h_{\Sep(d:d)}(Q)+a.
\]
Thus the lift of $\mathcal K$ satisfies
\eqref{eq:hsep-uniform-additive}, and the result follows from
\cref{cor:hsep-uniform-additive}.
\end{proof}

As a direct consequence of \cref{thm:hsep-endpoint}, for every
fixed $0<\theta<2/7$ there are constants $c_\theta,\eta_\theta>0$ and
$d_\theta\in\mathbb N$ such that
\begin{equation}
 r\ge d^{\,c_\theta\min\{\eta^{-1/3},d^\theta\}}.
 \label{eq:hsep-extension-complexity}
\end{equation}
This holds in all three settings, with $\eta=\varrho$ in
\cref{thm:hsep-endpoint} and $\eta=2a$ in the two corollaries,
whenever $d\ge d_\theta$ and $\eta\le\eta_\theta$.

One can derive this as follows:  Write
\[
 A=\eta^{-1/3},
 \qquad
 B=\left(\frac{N}{x^3}\right)^{2/7},
 \qquad
 q=\min\{A,N^\theta\},
\]
so that $M_\eta=\min\{A,B\}$.  It is enough to prove
$M_\eta\log(2+R_\eta)\ge c_\theta qx$.  Since
\[
 \frac{B}{N^\theta x}
 =\frac{N^{2/7-\theta}}{x^{13/7}}
 \longrightarrow\infty,
\]
we have $B\ge qx$ for all sufficiently large $N$.  If
$M_\eta\ge qx$, then $R_\eta\ge1$ gives the claim.  Otherwise
$M_\eta<qx\le B$, so $M_\eta=A\ge q$, and
$M_\eta<qx\le N^\theta x$.  Consequently,
\[
 R_\eta
 =\frac{N}{M_\eta^{7/2}x^3}
 \ge\frac{N^{1-7\theta/2}}{x^{13/2}}.
\]
Because $\theta<2/7$, this implies
$\log(2+R_\eta)\ge c_\theta x$ for sufficiently large $N$, proving the
claim also in the second case.  Since $N=d-1$ and
$\log N=\Theta(\log d)$, \cref{thm:hsep-endpoint} now gives
\eqref{eq:hsep-extension-complexity} after adjusting constants.

\begin{corollary}[Lower bounds for two-particle bosonic separability]
\label[corollary]{cor:bosonic-hsep}
\normalfont
Define
\[
 \Sep_{\mathrm{bos}}^{(2)}(d)
 =\operatorname{conv}\left\{\proj\psi^{\otimes2}:\psi\in\C^d,
 \ \norm\psi=1\right\}.
\]
This is precisely the set of bipartite separable states supported on the
symmetric subspace $\operatorname{Sym}^2(\C^d)$
~\cite[Thm.~1]{weisDecompositionSymmetricSeparable2020}.
For $0\preceq Q\preceq I$, define
\[
 h_{\mathrm{bos}}^{(2)}(Q)
 =\max_{\sigma\in\Sep_{\mathrm{bos}}^{(2)}(d)}\Tr(Q\sigma).
\]
Up to a change in universal constants,
\cref{thm:hsep-endpoint,cor:hsep-uniform-additive} remain valid after replacing
$\Sep(d:d)$, $h_{\Sep(d:d)}$, and the embedded pure product states by
$\Sep_{\mathrm{bos}}^{(2)}(d)$, $h_{\mathrm{bos}}^{(2)}$, and the states
$\proj\psi^{\otimes2}$, respectively.  \Cref{cor:sep-spectrahedral-shadow}
remains valid after replacing $\Sep(d:d)$ by
$\Sep_{\mathrm{bos}}^{(2)}(d)$.  The dimension bound
\eqref{eq:hsep-extension-complexity} remains valid in all three bosonic
settings.
\end{corollary}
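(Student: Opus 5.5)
The plan is to rerun the proof of \cref{thm:hsep-endpoint} and its corollaries with the bosonic set in place of $\Sep(d:d)$. Only two ingredients mention separability: the soundness bound $h(Q_T)\le s$ in \cref{prop:hsep-slack-submatrix}, and the embedded states $\sigma_S$. Both survive the replacement, and everything downstream (\cref{lem:affine-slack-factorization}, \cref{prop:amplified-rank}, the parameter optimization) is purely about PSD rank. So the conclusions transfer, with essentially unchanged constants.

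First, the states. The states $\sigma_S=\proj{\psi_S}^{\otimes2}$ of \cref{prop:witnesses} are already of the form $\proj\psi^{\otimes2}$, so a bosonic embedding $\iota$ defined on all states $\proj\psi^{\otimes2}$ supplies the feasible points $Y_S=\iota(\sigma_S)$.

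Second, the soundness bound. I would observe that block positivity of $W_T$ is more than is needed. By Step 2 of \cref{prop:witnesses}, $B_T$ alone satisfies $\inner{z\otimes z}{B_T(z\otimes z)}\ge0$ for every $z$. This is exactly the statement $\Tr(B_T\tau)\ge0$ for all $\tau\in\Sep_{\mathrm{bos}}^{(2)}(d)$, by convexity. The operator $W_T=B_T+L_m(I-F)$ also works, since $(I-F)$ vanishes on symmetric product vectors.

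I would keep $W_T$, so that the norm bound $\norm{W_T}_\infty\le C_Wm$ and the entries \eqref{eq:witness-entry} are unchanged. The measurement operators $Q_T=sI-(u/L)W_T$ then satisfy $0\preceq Q_T\preceq I$ and $h^{(2)}_{\mathrm{bos}}(Q_T)\le s$. The slack identity \eqref{eq:hsep-slack-identity} goes through verbatim, and gives $\rankpsd^{\R}(M_{N,m,b})\le r+1$ for any size-$r$ bosonic $(c,s)$-formulation. Here the formulation is \cref{def:hnw-hsep-ef} with $h_{\Sep}$ replaced by $h^{(2)}_{\mathrm{bos}}$ and $\iota$ defined on the states $\proj\psi^{\otimes2}$.

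Third, the conclusions. The proof of \cref{thm:hsep-endpoint} used only this rank inequality together with \cref{prop:amplified-rank}, so it yields the same bound. The additive corollary follows exactly as in \cref{cor:hsep-uniform-additive}, taking $s=1/2$ and $c=1/2+a$.

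For the shadow version, take $\mathcal K\supseteq\Sep^{(2)}_{\mathrm{bos}}(d)$ within trace distance $a$ of it. Choose a lift of each state $\proj\psi^{\otimes2}$, and use the objective $Y\mapsto\Tr(Q\pi(Y))$. Inequality \eqref{eq:measurement-value-distance}, applied with $\mathcal C=\Sep^{(2)}_{\mathrm{bos}}(d)$, gives the additive sandwich. The dimension bound \eqref{eq:hsep-extension-complexity} is then the same calculus derivation as before. The characterization of $\Sep^{(2)}_{\mathrm{bos}}(d)$ as the separable states on $\operatorname{Sym}^2$ is not needed for these proofs; it is just the cited fact.

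The only step needing any care is the soundness check. The point to verify there is that nonnegativity on the generating set $\{\proj\psi^{\otimes2}\}$ suffices, because $\Tr(Q_T\,\cdot)$ is linear and $h^{(2)}_{\mathrm{bos}}$ is a maximum over the convex hull. I do not expect a genuine obstacle: no constant in the chain depends on which separable body is used, so the statement's ``up to a change in universal constants'' is if anything generous.
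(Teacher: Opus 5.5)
Your proposal is correct and takes the same approach as the paper. The states $\sigma_S=\proj{\psi_S}^{\otimes2}$ are already bosonic, and the same witnesses $W_T$ (equivalently $B_T$, nonnegative on every $\proj z^{\otimes2}$) give $h^{(2)}_{\mathrm{bos}}(Q_T)\le s$, so the same slack matrix and PSD-rank argument go through unchanged. Your write-up fills in the soundness, additive and shadow checks that the paper's one-line proof leaves implicit.
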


\begin{proof}
$\sigma_S$ is a bosonic state.  Hence the same witnesses $W_T$ and product
states $\sigma_S$ can be used to construct the slack matrix and apply the
same arguments as in the nonbosonic case.  Therefore, all results carry over.
\end{proof}

\section{Approximate disentanglers as separability SDPs}
\label{sec:disentangler-reduction}

The following channel-to-SDP gadget comes from HNW
\cite[Sec.~5.3 and proof of Thm.~5.9]{harrowLimitationsSemidefinitePrograms2019}:
they enlarge the channel image and represent a Hermitian
correction as the difference of positive and negative PSD parts.  We retain
that gadget but make the corrected output an explicit PSD, trace-one state
$X$.

The $h_{\Sep}$ theorem already contains the PSD-rank lower bound.  It
remains only to show that an approximate disentangler supplies a small
objective-independent formulation of $h_{\Sep}$.  The idea is to optimize
$\Tr(QX)$ over all states $X$ within trace distance $\delta$ of some channel
output $\Lambda(\rho)$.  Every separable state is such an $X$ by the
covering condition \eqref{eq:disentangler-inner}, and every such $X$ is
within $\varepsilon+\delta$ of the separable set by the outer condition
\eqref{eq:disentangler-outer}.  The SDP below enforces
$\disttr(X,\Lambda(\rho))\le\delta$ through auxiliary blocks
$E_\pm\succeq0$ with $X-\Lambda(\rho)=E_+-E_-$ and
$\Tr(E_++E_-)\le2\delta$.

\begin{proposition}[A disentangler gives a uniform $h_{\Sep}$ formulation]
\label[proposition]{prop:channel-to-hsep}
Assume first that the two local output dimensions are both $d$.  An
$(\varepsilon,\delta)$-approximate disentangler
\[
 \Lambda:\Dens(\C^D)\longrightarrow\Dens(\C^d\otimes\C^d)
\]
induces an objective-independent SDP extended formulation of size at most
\begin{equation}
 r_\Lambda=2(D+3d^2+1)
 \label{eq:channel-hsep-size}
\end{equation}
whose optimum satisfies, for every $Q$ with $0\preceq Q\preceq I$,
\begin{equation}
 h_{\Sep(d:d)}(Q)
 \le\operatorname{SDP}_\Lambda(Q)
 \le h_{\Sep(d:d)}(Q)+\varepsilon+\delta.
 \label{eq:channel-uniform-hsep}
\end{equation}
Consequently, when $0<a\le1/2$, it is a
$(1/2+a,1/2)$-approximate HNW formulation with relative gap $2a$, where
$a=\varepsilon+\delta$.
\end{proposition}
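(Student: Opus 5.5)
We build the formulation directly from the gadget sketched before the statement. The complex feasible region $\mathcal P_\Lambda$ consists of block-diagonal Hermitian matrices $Y=\rho\oplus X\oplus E_+\oplus E_-\oplus[\kappa]$ with $\rho\in\Herm(\C^D)$, $X,E_\pm\in\Herm(\C^d\otimes\C^d)$ and a scalar slack $\kappa$. All blocks are required to be PSD, and they satisfy the affine constraints
\[
 \Tr\rho=1,\qquad \Tr X=1,\qquad X-\Lambda(\rho)=E_+-E_-,\qquad \Tr(E_++E_-)+\kappa=2\delta.
\]
Since $\Lambda$ is linear, the third constraint is affine in $Y$. The inequality $\Tr(E_++E_-)\le2\delta$ becomes an equality through $\kappa\ge0$. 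Off-diagonal entries of $Y$ are forced to vanish by linear equations. The complex size is $D+3d^2+1$. By \cref{lem:realification}, applying the realification map $\mathcal R$ blockwise gives a real spectrahedron of size $2(D+3d^2+1)=r_\Lambda$; we impose the linear constraints that a real matrix has the $\mathcal R$-form. For each $Q$, the objective is $\Phi_Q(Y)=\Tr(QX)$, read off the realified $X$-block (with a factor $1/2$ as in \eqref{eq:realification-trace}). Because $X$ is a state on $\mathcal P_\Lambda$, $\Phi_Q$ takes values in $[0,1]$ there, as required by \cref{def:hnw-hsep-ef}.

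\emph{Embedding and the lower bound.} For each pure product state $\tau$, the covering condition \eqref{eq:disentangler-inner} with attained infimum gives a $\rho_\tau$ satisfying $\disttr(\Lambda(\rho_\tau),\tau)\le\delta$. Set $X=\tau$. Take $E_\pm$ to be the Jordan parts of $\tau-\Lambda(\rho_\tau)$; by \eqref{eq:trace-zero-jordan}, $\Tr(E_++E_-)=\|\tau-\Lambda(\rho_\tau)\|_1\le2\delta$. Let $\kappa$ absorb the remaining slack. This choice depends only on $\tau$, never on $Q$, so $\iota(\tau)$ is objective-independent and $\Phi_Q(\iota(\tau))=\Tr(Q\tau)$. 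Taking the maximum over product states then yields $\operatorname{SDP}_\Lambda(Q)\ge h_{\Sep}(Q)$.

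\emph{Upper bound.} Take any feasible $Y$. From $\Tr X=\Tr\Lambda(\rho)$ we get $\Tr E_+=\Tr E_-$, and $\Tr(E_++E_-)\le 2\delta$. Hence $\|X-\Lambda(\rho)\|_1\le 2\delta$ by the triangle inequality, i.e. $\disttr(X,\Lambda(\rho))\le\delta$. The outer condition \eqref{eq:disentangler-outer} puts $\Lambda(\rho)$ within $\varepsilon$ of $\Sep$, so $X$ lies within $\varepsilon+\delta$ of $\Sep$. Then \eqref{eq:measurement-value-distance} gives $\Tr(QX)\le h_{\Sep}(Q)+\varepsilon+\delta$, which proves \eqref{eq:channel-uniform-hsep}. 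The final claim follows exactly as in the proof of \cref{cor:hsep-uniform-additive}: take $s=1/2$ and $c=1/2+a\le1$.

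The main obstacle is bookkeeping rather than mathematics. One must make sure that the realified block-diagonal structure is a genuine spectrahedron $\mathcal A\cap\mathbb S_+^{r}$, and that the size count matches \eqref{eq:channel-hsep-size}. It must also be checked that $\Phi_Q$ is bounded on the whole feasible set and not only at embedded points. The explicit trace-one PSD block $X$ handles this.
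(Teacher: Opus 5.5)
Your proposal is correct and follows the paper's proof essentially step for step. It uses the same block-diagonal lift $(\rho,X,E_+,E_-,t)$ with a scalar slack, realification to size $2(D+3d^2+1)$, and the objective $\Tr(QX)$. The embedding is the same (a covering preimage $\rho_\tau$ plus the Jordan decomposition of $\tau-\Lambda(\rho_\tau)$), and soundness again goes through $\disttr(X,\Lambda(\rho))\le\delta$, the outer condition, and the triangle inequality. Your remark that the realified blocks must be constrained by linear equations to have the $\mathcal R$-form is a detail the paper leaves implicit.
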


\begin{proof}
Consider first the complex Hermitian block-diagonal spectrahedron of size
\[
 q=D+3d^2+1
\]
whose variables are
\[
 \rho\succeq0,
 \qquad E_+\succeq0,
 \qquad E_-\succeq0,
 \qquad X\succeq0,
 \qquad t\ge0,
\]
subject to the affine constraints
\begin{align}
 \Tr\rho&=1,
 &\Tr X&=1,
 \label{eq:channel-lift-traces}\\
 X&=\Lambda(\rho)+E_+-E_-,
 &\Tr(E_++E_-)+t&=2\delta.
 \label{eq:channel-lift-corrections}
\end{align}
It is nonempty: one may take $X=\Lambda(\rho)$, $E_+=E_-=0$, and
$t=2\delta$.  Applying the realification map from
\cref{lem:realification} to each Hermitian variable turns every complex
PSD constraint into a real PSD constraint of twice the size.  All remaining
conditions become linear equations in the entries of the realified matrices.
Hence the resulting feasible set is a real spectrahedron of size $2q$.
For a measurement operator $Q$ with $0\preceq Q\preceq I$, use the objective
\[
 \Phi_Q(\rho,E_+,E_-,X,t)=\Tr(QX).
\]
After realification this is still affine by
\eqref{eq:realification-trace}, and it lies in $[0,1]$ because $X$ is a
state.

We next construct the objective-independent embedding.  Let $\tau$ be a pure
product state.  By the covering property, choose $\rho_\tau$ satisfying
\[
 \disttr\bigl(\Lambda(\rho_\tau),\tau\bigr)\le\delta.
\]
Write the trace-zero Hermitian difference as
\[
 \tau-\Lambda(\rho_\tau)=E_{\tau,+}-E_{\tau,-}
\]
using its Jordan decomposition.  \Cref{eq:trace-zero-jordan} gives
\[
 \Tr(E_{\tau,+}+E_{\tau,-})
 =\norm{\tau-\Lambda(\rho_\tau)}_1
 \le2\delta.
\]
Set $X=\tau$ and
$t=2\delta-\Tr(E_{\tau,+}+E_{\tau,-})$.  This gives a feasible point whose
choice depends only on $\tau$, not on $Q$, and
$\Phi_Q=\Tr(Q\tau)$.

Conversely, let $(\rho,E_+,E_-,X,t)$ be feasible.  Since both $X$ and
$\Lambda(\rho)$ have trace one, $E_+-E_-$ has trace zero, and
\begin{align*}
 \disttr\bigl(X,\Lambda(\rho)\bigr)
 &=\frac12\norm{E_+-E_-}_1\\
 &\le\frac12\Tr(E_++E_-)
 \le\delta.
\end{align*}
The outer disentangler condition supplies a state $\sigma\in\Sep(d:d)$ such
that
\[
 \disttr(\Lambda(\rho),\sigma)\le\varepsilon.
\]
The triangle inequality therefore gives
\[
 \disttr(X,\Sep(d:d))\le\varepsilon+\delta=a.
\]
\Cref{eq:measurement-value-distance} now gives
\[
 \Tr(QX)\le h_{\Sep(d:d)}(Q)+a.
\]
Taking the supremum over feasible points proves the upper bound in
\eqref{eq:channel-uniform-hsep}; the embedded pure product states prove the
lower bound.  When $0<a\le1/2$ and $h_{\Sep(d:d)}(Q)\le1/2$, the SDP
optimum is at most $1/2+a$, so the midpoint thresholds give relative gap
$2a$.
\end{proof}

\begin{lemma}[Reduction to equal local dimensions]
\label[lemma]{lem:balanced-retraction}
Let $d=\min\{d_A,d_B\}$.  From every $(\varepsilon,\delta)$-approximate
disentangler with output dimensions $d_A,d_B$ and input dimension $D$, one
can obtain an $(\varepsilon,\delta)$-approximate disentangler with output
dimensions $d,d$ and the same input dimension.
\end{lemma}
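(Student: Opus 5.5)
Without loss of generality $d=d_B\le d_A$; the other case is symmetric. I would post-compose $\Lambda$ with a local channel $\mathrm{id}_A\otimes\mathcal C_B$, where $\mathcal C_B:\L(\C^{d_B})\to\L(\C^{d_B})$ is the identity; this does nothing. The real work is on the $A$ side, where I compress into a fixed $d$-dimensional subspace. Fix an isometry $V:\C^d\to\C^{d_A}$ with range $K$, and let $\Pi=VV^*$. Define the retraction
\[
 \mathcal R_A(Z)=V^*ZV+\Tr\bigl((I-\Pi)Z\bigr)\,\proj{\phi}
\]
for a fixed unit vector $\phi\in\C^d$. This map is CPTP. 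Its Kraus operators are $V^*$ together with $\ket\phi\langle e_k|$ for an orthonormal basis $\{e_k\}$ of $K^\perp$. The new channel is $\Lambda'=(\mathcal R_A\otimes\mathrm{id}_B)\circ\Lambda$, with the same input dimension $D$ and output $\C^d\otimes\C^d$.

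For the outer condition, I use that local channels map product states to product states. By convexity, $\mathcal R_A\otimes\mathrm{id}_B$ therefore maps $\Sep(d_A:d_B)$ into $\Sep(d:d)$. Given $\rho$, take $\sigma\in\Sep(d_A:d_B)$ with $\disttr(\Lambda(\rho),\sigma)\le\varepsilon$. Monotonicity of trace distance under channels, recorded in the preliminaries, gives $\disttr(\Lambda'(\rho),(\mathcal R_A\otimes\mathrm{id})(\sigma))\le\varepsilon$, and the second state is separable.

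For the covering condition, let $\sigma'\in\Sep(d:d)$. Embed it as $\sigma=(V\otimes I)\sigma'(V\otimes I)^*$. This state lies in $\Sep(d_A:d_B)$, because it is the image of $\sigma'$ under a local isometry channel. Since $V^*V=I$ and $\sigma$ is supported on $K\otimes\C^{d_B}$, we get $(\mathcal R_A\otimes\mathrm{id})(\sigma)=\sigma'$. Choose $\rho$ with $\disttr(\Lambda(\rho),\sigma)\le\delta$; monotonicity then gives $\disttr(\Lambda'(\rho),\sigma')\le\delta$.

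The only point needing care is to exhibit $\mathcal R_A$ as a genuine trace-preserving channel that is a left inverse of the embedding. The extra term $\Tr((I-\Pi)Z)\proj\phi$ supplies both properties, and checking $\mathcal R_A^*(I)=\Pi+(I-\Pi)=I$ settles trace preservation. Everything else follows from monotonicity of trace distance and the fact that local channels preserve separability.
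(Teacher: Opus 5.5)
Your proposal is correct and follows essentially the paper's proof. The paper uses the same retraction channels $Z\mapsto V^*ZV+\Tr((I-P)Z)\,\omega$ on both registers; your identity map on the side that already has dimension $d$ is the special case $V_B=I$. The rest matches: the embedding $(V\otimes I)\sigma'(V\otimes I)^*$ of separable targets and monotonicity of trace distance for the outer and covering conditions.
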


\begin{proof}
Choose isometries
\[
 V_A:\C^d\longrightarrow\C^{d_A},
 \qquad
 V_B:\C^d\longrightarrow\C^{d_B},
\]
let $P_A=V_AV_A^*$ and $P_B=V_BV_B^*$, and fix states
$\omega_A,\omega_B\in\Dens(\C^d)$.  Define local channels
\begin{align*}
 \Phi_A(Z)&=V_A^*ZV_A+\Tr((I-P_A)Z)\,\omega_A,\\
 \Phi_B(Z)&=V_B^*ZV_B+\Tr((I-P_B)Z)\,\omega_B.
\end{align*}
These maps are completely positive and trace preserving.  On the chosen
$d$-dimensional subspaces, they invert the embeddings $V_A$ and $V_B$.  Set
\[
 \Lambda'=(\Phi_A\otimes\Phi_B)\circ\Lambda.
\]
Local channels preserve separability and contract trace distance, so the
outer error does not increase.  For the covering condition, embed a target
$\tau\in\Sep(\C^d:\C^d)$ as
\[
 \widetilde{\tau}
 =(V_A\otimes V_B)\tau(V_A\otimes V_B)^*
 \in\Sep(\C^{d_A}:\C^{d_B}).
\]
Choose an input $\rho$ with
$\disttr(\Lambda(\rho),\widetilde{\tau})\le\delta$.  Contractivity and the
identities $\Phi_A(V_A ZV_A^*)=Z$ and $\Phi_B(V_B ZV_B^*)=Z$ give
$\disttr(\Lambda'(\rho),\tau)\le\delta$.  Thus $\Lambda'$ has the claimed
parameters.
\end{proof}

\begin{proof}[Proof of \cref{thm:dimension}]
By \cref{lem:balanced-retraction}, it suffices to consider equal local
dimension $d$.  First assume $a>0$.  \Cref{prop:channel-to-hsep}
produces a $(1/2+a,1/2)$-approximate formulation of size
\[
 r_\Lambda=2(D+3d^2+1),
\]
whose relative gap is $\varrho=2a$.  Choose
$a_0\le\varrho_0/2$.  \Cref{thm:hsep-endpoint} then gives
\begin{equation}
 \log r_\Lambda
 \ge c_{\mathrm{end}}M_a\log(2+R_a).
 \label{eq:channel-before-absorption}
\end{equation}

It remains only to absorb the additive $O(d^2)$ lift overhead.  Put
$A_0=(2a_0)^{-1/3}$.  After increasing $d_0$, the second term in the
definition of $M_a$ is at least $A_0$, so $M_a\ge A_0$.
If $M_a\le N^{1/7}$, then
\[
 R_a\ge\frac{N^{1/2}}{(\log N)^3},
\]
and hence $\log(2+R_a)\ge(\log N)/3$ for large $N$.  If
$M_a>N^{1/7}$, then
\[
 M_a\log(2+R_a)\ge N^{1/7}\log3
 \ge\frac{A_0}{3}\log N
\]
for all sufficiently large $N$.  Thus in both cases
\begin{equation}
 M_a\log(2+R_a)\ge\frac{A_0}{3}\log N.
 \label{eq:channel-overhead-scale}
\end{equation}
Choose $a_0$ small enough that
$c_{\mathrm{end}}A_0/3\ge8$.  \Cref{eq:channel-before-absorption,eq:channel-overhead-scale}
imply $r_\Lambda\ge N^8$, which for large
$d$ is at least $12d^2+4$.  Since
$r_\Lambda=2D+6d^2+2$, we obtain
$D\ge r_\Lambda/4$.  Reducing the universal constant and increasing $d_0$
once more yields \eqref{eq:dimension-endpoint}.

If $a=0$, the channel also satisfies the disentangler conditions with any
positive total error $\widetilde{a}$.  Choose $\widetilde{a}$ so that
\[
 (2\widetilde{a})^{-1/3}
 \ge\left(\frac{N}{(\log N)^3}\right)^{2/7},
\]
apply the positive-error case, and obtain the same bound with
$(2a)^{-1/3}=+\infty$.
\end{proof}

\begin{proof}[Proof of \cref{cor:dimension-power}]
After the equal-dimension reduction, \cref{prop:channel-to-hsep}
and \eqref{eq:hsep-extension-complexity}, applied with effective error
$\eta=2a$, give
\[
 2(D+3d^2+1)
 \ge d^{\,c'_\theta\min\{(2a)^{-1/3},d^\theta\}}.
\]
Shrink $a_\theta$, enlarge $d_\theta$, and absorb the additive $O(d^2)$ term
exactly as in the proof of \cref{thm:dimension}.  Constant factors in
$(2a)^{-1/3}$ are absorbed into $c_\theta$.  This establishes
\eqref{eq:dimension-power}.
\end{proof}

\begin{corollary}[Input-dimension bounds in standard accuracy regimes]
\label[corollary]{cor:disentangler-regimes}
Let $d=2^n$.
\begin{enumerate}
 \item If $a\le(\log d)^{-c}=\Theta(n^{-c})$ for a fixed $c>0$, then
 \[
  \log_2D=\Omega(n^{1+c/3}),
  \qquad
  D\ge2^{\Omega(n^{1+c/3})}.
 \]
 \item If $a\le d^{-\beta}$ for a fixed $0<\beta<6/7$, then
 \[
  \log D=\Omega_\beta(d^{\beta/3}\log d).
 \]
 \item If
 \[
  a\le c_0\frac{(\log d)^{18/7}}{d^{6/7}}
 \]
 for a sufficiently small universal $c_0>0$---in particular, if
 $a\le d^{-\beta}$ with $\beta\ge6/7$---then
 \[
  \log D
  =\Omega\!\left(\frac{d^{2/7}}{(\log d)^{6/7}}\right).
 \]
\end{enumerate}
\end{corollary}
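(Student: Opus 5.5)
The plan is to read all three regimes off \cref{thm:dimension}, i.e.\ from $\log D\ge cM_a\log(2+R_a)$ with $N=d-1$, $x=\log N=\Theta(\log d)=\Theta(n)$ and $M_a=\min\{(2a)^{-1/3},(N/x^3)^{2/7}\}$. The bound is monotone in the right direction: decreasing $a$ can only increase $M_a$, and $M_a\log(2+R_a)$ is increasing in $M_a$ by the case analysis below. It therefore suffices to treat the extreme value of $a$ allowed in each item; for larger $D$-bounds at smaller $a$ one may instead simply reuse the fact that an $(\varepsilon,\delta)$-disentangler is also one for any larger total error. The assumption $a\le a_0$ holds for large $d$ in every item, and finitely many small $d$ are absorbed into constants.

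\textbf{Item 1.} Take $a\le x^{-c}$ up to constants, so $(2a)^{-1/3}\ge c' x^{c/3}$, which is polylogarithmic in $d$. This is far smaller than $(N/x^3)^{2/7}$, so $M_a=\Theta$ of that polylogarithmic quantity once $a$ is at its extreme. Then $R_a=N/(M_a^{7/2}x^3)\ge N^{1/2}$ for large $N$, giving $\log(2+R_a)\ge x/2$. Hence $\log D\ge c''x^{1+c/3}=\Omega(n^{1+c/3})$. Converting to base $2$ costs only a constant.

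\textbf{Item 2.} Take $a\le d^{-\beta}$ with $\beta<6/7$. Then $(2a)^{-1/3}\ge c'd^{\beta/3}$. Since $\beta/3<2/7$, the second term $(N/x^3)^{2/7}$ is asymptotically larger, so $M_a\ge c'd^{\beta/3}$ (or equals the larger term, which is even better). For the logarithmic factor I split on $M_a$ versus $N^{\beta'}$ with a fixed $\beta'\in(\beta/3,2/7)$:
\begin{itemize}
\item If $M_a\le N^{\beta'}$, then $R_a\ge N^{1-7\beta'/2}/x^3$, so $\log(2+R_a)=\Omega_\beta(\log d)$. This gives $\log D=\Omega_\beta(d^{\beta/3}\log d)$.
\item If $M_a>N^{\beta'}$, then $M_a\log(2+R_a)\ge N^{\beta'}\log 2$. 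This dominates $d^{\beta/3}\log d$ for large $d$.
\end{itemize}
This mirrors the argument after \eqref{eq:hsep-extension-complexity}.

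\textbf{Item 3.} Here $a\le c_0x^{18/7}/d^{6/7}$, with $c_0$ small so that $(2a)^{-1/3}\ge(N/x^3)^{2/7}$. The claimed equality of scales reduces to the exponent arithmetic $(x^{18/7}/d^{6/7})^{-1/3}=d^{2/7}/x^{6/7}$; the constants and $N$ versus $d$ only affect $c_0$. Then $M_a=(N/x^3)^{2/7}$, $R_a=1$, and $\log D\ge c\log 3\,(N/x^3)^{2/7}=\Omega(d^{2/7}/(\log d)^{6/7})$. If $a\le d^{-\beta}$ with $\beta\ge6/7$, then $a\le d^{-6/7}\le c_0x^{18/7}/d^{6/7}$ for large $d$, so this case is included.

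The only mildly delicate step is item 2's logarithmic factor, which needs the case split above rather than the naive bound $R_a\ge1$. Everything else is direct substitution.
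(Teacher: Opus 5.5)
Your proposal is correct and is essentially the paper's argument. The paper reads items 1 and 2 off \cref{cor:dimension-power} (with $\theta\in(\beta/3,2/7)$ for item 2) and item 3 off \cref{thm:dimension} using $R_a\ge1$; you instead inline the case split by which that power form is derived from \cref{thm:dimension}, which amounts to the same thing.

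One small correction: with $R=N/(M^{7/2}x^3)$, the quantity $M\log(2+R)$ is not literally increasing in $M$. Its $M$-derivative $\log(2+R)-\tfrac{7R}{2(2+R)}$ is negative for $1\le R\le 22$. So in item 1, justify the reduction to the extreme value of $a$ by your weakening-of-errors remark, which is valid, rather than by that monotonicity claim.
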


\begin{proof}
The first statement follows from \cref{cor:dimension-power} because
$a^{-1/3}\ge(\log d)^{c/3}$ and every fixed positive power of $d$ dominates
that quantity.  The second follows by choosing
$\theta\in(\beta/3,2/7)$.  For the third, the stated accuracy condition gives
\[
 (2a)^{-1/3}
 =\Omega\!\left(\left(\frac{d-1}{(\log(d-1))^3}\right)^{2/7}\right).
\]
Thus the second term defines $M_a$ up to universal constants.  Since
$R_a\ge1$, \cref{thm:dimension} gives the stated bound.
\end{proof}

\paragraph{Acknowledgements.}
SG and DR acknowledge support from the Deutsche Forschungsgemeinschaft
(DFG), project 563388236 (Bridge-QS, SPP 2514). SG additionally acknowledges
support from DFG project 572703436 (QPUP), EU QuantERA/DFG project 583918116
(SDPCODE), and BMFTR (PhoQuant).

\clearpage
\printbibliography

\end{document}